\numberwithin{equation}{section}
\numberwithin{figure}{section}
\numberwithin{table}{section}
\declaretheorem[
    name=Theorem,
    Refname={Theorem,Theorems},
    numberwithin=section]{thm}
\declaretheorem[
    name=Lemma,
    Refname={Lemma,Lemmas},
    sibling=thm]{lem}
\declaretheorem[
    name=Corollary,
    Refname={Corollary,Corollarys},
    sibling=thm]{cor}
\declaretheorem[
    name=Assumption,
    Refname={Assumption,Assumptions},
    numberwithin=section]{asu}
\declaretheorem[
    name=Remark,
    Refname={Remark,Remarks}]{rmk}
\declaretheorem[
    name=Definition,
    Refname={Definition,Definitions},
    numberwithin=section]{defn}
\declaretheorem[
    name=Example,
    Refname={Example,Examples},
    numberwithin=section]{ex}
\apptocmd{\leqm}{\addcontentsline{toc}{subsubsection}{Lemma \thelem}}{}{}
\apptocmd{\thm}{\addcontentsline{toc}{subsubsection}{Theorem \thethm}}{}{}
\apptocmd{\asu}{\addcontentsline{toc}{subsubsection}{Assumption \theasu}}{}{}
\begin{document}

\title{\bf Dynamic Causal Mediation Analysis for Intensive Longitudinal Data}
\author{Tianchen Qian\footnote{Department of Statistics, University of California, Irvine. t.qian@uci.edu}}
\date{}
\maketitle

\spacingset{1.4}
\begin{abstract}
    Intensive longitudinal data, characterized by frequent measurements across numerous time points, are increasingly common due to advances in wearable devices and mobile health technologies. We consider evaluating causal mediation pathways between time-varying exposures, time-varying mediators, and a final, distal outcome using such data. Addressing mediation questions in these settings is challenging due to numerous potential exposures, complex mediation pathways, and intermediate confounding. Existing methods, such as interventional and path-specific effects, become impractical in intensive longitudinal data. We propose novel mediation effects termed natural direct and indirect excursion effects, which quantify mediation through the most immediate mediator following each treatment time. These effects are identifiable under plausible assumptions and decompose the total excursion effect. We derive efficient influence functions and propose multiply-robust estimators for these mediation effects. The estimators are multiply-robust and accommodate flexible machine learning algorithms and optional cross-fitting. In settings where the treatment assignment mechanism is known, such as the micro-randomized trial, the estimators are doubly-robust. We establish the consistency and asymptotic normality of the proposed estimators. Our methodology is illustrated using real-world data from the HeartSteps micro-randomized trial and the SleepHealth observational study.
\end{abstract}

\noindent%
{\it Keywords:} longitudinal observational study, micro-randomized trial, multiple robustness, natural mediated excursion effect, semiparametric efficiency theory
\vfill

\newpage

\tableofcontents

\newpage

\spacingset{1.9} 

\section{Introduction}
\label{sec:introduction}

Wearable and sensor technologies have made intensive longitudinal data common across fields like behavioral science, sleep research, and mobile health \citep{mcneish2021measurement}. These data are characterized by frequent, repeated measurements that generate hundreds or thousands of time points per individual. A common scenario includes time-varying exposures, time-varying mediators, and a distal outcome measured at the end of the study. For example, in the SleepHealth observational study, the daily exposure might be sleep quality, the daily mediator might be next-day sleepiness, and the distal outcome might be a long-term measure of cognitive functioning or alertness \citep{deering2020real}.

Intensive longitudinal data also arises in experimental settings, such as the micro-randomized trial \citep[MRT;][]{dempsey2015,liao2016sample}. MRTs are widely used in mobile health research for assessing proximal and distal effects of adaptive interventions \citep{walton2018optimizing,leong2023participant}. In an MRT, participants are repeatedly randomized to receive interventions such as notifications or activity prompts, generating intensive longitudinal data with time-varying treatments, time-varying mediators, and a distal outcome. For example, in the HeartSteps study \citep{klasnja2015microrandomized}, individuals were randomized five times daily for 42 days to receive activity suggestions. Immediate step counts were measured following each prompt, and long-term physical activity levels were also recorded to assess sustained behavioral change.

Our central question concerns the mediation pathways linking time-varying exposures ($A_t$), time-varying mediators ($M_t$), and a distal outcome ($Y$) in intensive longitudinal data. In the HeartSteps MRT, this corresponds to asking whether the effect of an activity suggestion on long-term activity levels is mediated by the short-term step count. In the SleepHealth observational study, it corresponds to asking whether daily sleep quality affects long-term alertness via next-day sleepiness.

Addressing mediation questions in experimental or observational intensive longitudinal data presents substantial challenges. The extensive number of time points yields numerous potential exposures and mediation pathways, and this makes it infeasible to model and estimate traditional causal effects that contrast fixed treatment trajectories \citep{robins2000marginalstructural}. Another challenge is intermediate confounding; for example, time-varying mediators are influenced by prior treatments and subsequently confound future mediator-outcome relationships. Existing causal mediation methods for time-varying exposures, mediators, and a distal outcome primarily include interventional effects \citep{vanderweele2017mediation,diaz2022efficient} and path-specific effects \citep{avin2005identifiability,miles2020semiparametric}. However, these methods become impractical when dealing with the extensive temporal structure of intensive longitudinal data due to the curse of dimensionality.

Our first contribution is to propose a novel class of mediation effects, termed \textit{natural (in)direct excursion effects}. These effects quantify how the influence of a time-varying exposure $A_t$ on a distal outcome $Y$ is mediated through the immediate mediator $M_t$, rather than through future mediators or directly. Focusing on the immediate mediator reduces dimensionality substantially and results in interpretable summaries of the causal mechanism. Moreover, these natural mediation effects are identifiable because the focus on the immediate mediator makes the no intermediate confounding assumption plausible. Additionally, these effects decompose the total excursion effect \citep{qian2025distal}; in contrast, interventional effects generally do not sum to the total effect.

Our new effects definition builds on three key ideas. First, we adopt the concept of excursion effects, which contrast policies that differ from the behavior policy only at a specific decision point but match elsewhere. While excursion effects are well-established for quantifying proximal effects in MRTs with numerous treatment occasions \citep{boruvka2018,dempsey2020stratified}, we extend this framework to mediation analysis. Second, we split the total effect based on the immediate mediator alone, which reduces dimensionality and enables identification of natural mediation effects. Third, we restrict attention to treatment policies that adhere to a time-varying eligibility criterion, thereby circumventing common positivity violations in time-varying exposure studies \citep{loewinger2024nonparametric}.

Our second contribution is to propose \textit{multiply-robust estimators} for these natural (in)direct excursion effects. We derive the efficient influence function for the mediation functionals at a given time point and propose multiply-robust estimators that summarize these time-varying effects via projections onto linear spaces. In settings such as MRTs where the treatment assignment mechanism is known, our estimators are doubly-robust. The estimators accommodate data-adaptive machine learning algorithms for estimating nuisance functions, along with optional cross-fitting to ensure valid inference even when using complex, high-dimensional prediction algorithms \citep{chernozhukov2018double}. We establish consistency and asymptotic normality for both non-cross-fitted and cross-fitted estimators.

Finally, we illustrate the methodology in two real-world applications. In the HeartSteps MRT, we find that the indirect effect of activity suggestions on long-term activity level (mediated through short-term step counts) increases in relative importance over time. In the SleepHealth observational study, we find that daily sleep quality influences long-term alertness both through next-day sleepiness and through additional pathways.

\cref{sec:notation} describes data structure and presents the definition of natural (in)direct excursion effects. \cref{sec:identification} presents causal assumptions and identification result. \cref{sec:methods} presents the multiply-robust estimators and the asymptotic theory. \cref{sec:simulation} presents the simulation study. \cref{sec:heartsteps} and \cref{sec:sleephealth} present the real data application to HeartSteps and SleepHealth, respectively. \cref{sec:discussion} concludes with discussion.


\section{Mediated Causal Excursion Effects}
\label{sec:notation}

\subsection{Data Structure}
\label{subsec:mrt-data-structure}

We first describe the structure of a micro-randomized trial (MRT) with time-varying covariates, time-varying treatments, time-varying mediators, and a distal outcome. We then describe a general observational longitudinal study where our method also applies. The two designs differ by whether the time-varying treatments are sequentially randomized. For consistency, we use ``treatment'' to also refer to ``exposure'' in observational contexts.

Let $n$ be the number of participants in the MRT, with participant $i$ observed for $T_i$ pre-scheduled time points. Variables without subscripts $i$ refer to a generic participant, whose data structure is shown in \cref{fig:data-structure}. At each $t = 1,\ldots,T$, we observe a triplet $(X_t, A_t, M_t)$ in temporal order. $X_t$ denotes time-varying covariates recorded after mediator $M_{t-1}$ but before treatment $A_t$. $A_t$ denotes a binary treatment at time point $t$, with $1$ indicates treatment and $0$ indicates no treatment; $A_t$ is sequentially randomized in a MRT. $M_t$ denotes the mediator observed after $A_t$ and before $X_{t+1}$, which can be the proximal outcome in an MRT. We focus on settings where $M_t$ is measured immediately after $A_t$; this makes the no-intermediate-confounding assumption more plausible (see \cref{sec:identification}). $Y$ denotes the distal outcome measured at the end of the study, which usually captures the primary health outcome. The observed history for a participant up to $t$ (right before $A_t$) is $H_t := (X_1,A_1,M_1,\ldots,X_{t-1},A_{t-1},M_{t-1},X_t)$. The full data for participant $i$ is $O_i := (X_{i1},A_{i1},M_{i1},X_{i2},A_{i2},M_{i2},\ldots,X_{iT_i},A_{iT_i},M_{iT_i},Y_i)$.

  
\begin{figure}[htbp]
    \centering  
        \begin{tikzpicture}[
        -Latex,auto,node distance =0.5 cm and 0.5 cm,semithick,
        state/.style ={circle, draw, minimum size = 0.7 cm, inner sep = 1pt},
        point/.style = {circle, draw, inner sep=0.04cm,fill,node contents={}},
        bidirected/.style={Latex-Latex,dashed},
        el/.style = {inner sep=2pt, align=left, sloped}]

        \node[state] (x1) at (0,0) {$X_1$};
        \node[state] (a1) [right =of x1] {$A_1$};
        \node[state] (m1) [right =of a1] {$M_1$};

        \path (x1) edge (a1);
        \path (a1) edge (m1);
        \path (x1) edge[bend left=40] (m1);




        \node[draw=none] (dots1) [right =0.7cm of m1] {$\cdots$};
        \path [arrows = {-Computer Modern Rightarrow[line cap=round]}] (m1) edge[line width = 1mm] (dots1);

        \node[state] (xt) [right =0.7cm of dots1] {$X_t$};
        \node[state] (at) [right =of xt] {$A_t$};
        \node[state] (mt) [right =of at] {$M_t$};

        \path (xt) edge (at);
        \path (at) edge (mt);
        \path (xt) edge[bend left=40] (mt);

        \path [arrows = {-Computer Modern Rightarrow[line cap=round]}] (dots1) edge[line width = 1mm] (xt);

        \node[draw=none] (dots2) [right =0.7cm of mt] {$\cdots$};
        \path [arrows = {-Computer Modern Rightarrow[line cap=round]}] (mt) edge[line width = 1mm] (dots2);

        \node[state] (xT) [right =0.7cm of dots2] {$X_T$};
        \node[state] (aT) [right =of xT] {$A_T$};
        \node[state] (mT) [right =of aT] {$M_T$};

        \path (xT) edge (aT);
        \path (aT) edge (mT);
        \path (xT) edge[bend left=40] (mT);

        \path [arrows = {-Computer Modern Rightarrow[line cap=round]}] (dots2) edge[line width = 1mm] (xT);

        \node[state] (y) [right =0.7cm of mT] {$Y$};
        \path [arrows = {-Computer Modern Rightarrow[line cap=round]}] (mT) edge[line width = 1mm] (y);

        \draw [decorate,decoration={brace,amplitude=3mm,mirror,raise=3mm}, -] (x1.south west) -- (xt.south east);

        \node[draw = none] (ht) [below = 0.5cm of m1] {$H_t$};
    \end{tikzpicture}
    \caption{\footnotesize Intensive longitudinal data structure with time-varying treatments, time-varying mediators, and a distal outcome. To simplify notation, a thick arrow is used to denote arrows from all nodes on the left to all nodes on the right.}
    \label{fig:data-structure}
\end{figure}
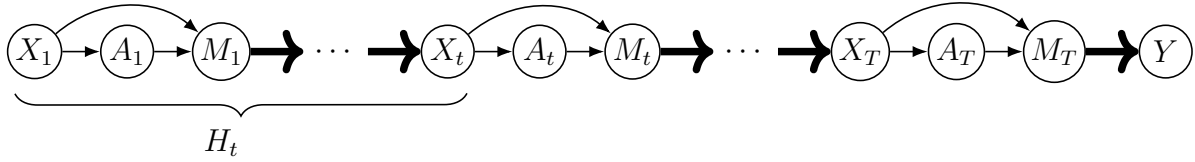

When designing an MRT, researchers may prohibit intervention delivery at times deemed unsafe or unethical---e.g., sending push notifications while a participant is driving. At such decision points, the participant is marked as ineligible, and no treatment is delivered. Formally, $X_t$ includes an indicator $I_t$, where $I_t = 1$ indicates being eligible for treatment at $t$, and $I_t = 0$ otherwise. If $I_t = 0$, then $A_t = 0$ by design. Eligibility is determined separately at each decision point for each participant. In the MRT literature $I_t = 0$ is also referred to as ``unavailable'' \citep{boruvka2018}.

The proposed method is also applicable to the general setting of observational intensive longitudinal data with data structure in \cref{fig:data-structure}. In such settings, $A_t$ are time-varying treatments that are not randomized, and $I_t = 0$ encodes times when a participant has zero probability to receive $A_t = 1$. For example, in a sleep study, a participant may have zero probability to have a good sleep ($A_t = 1$) if they are on night shift or responsible for overnight infant care for that night.

\subsubsection{Additional Notation}

Overbars denote sequences of variables from 1 to $t$; for example, $\bA_t :=(A_1,A_2,\ldots,A_t)$. Underbars denote sequences of variables from $t$ to $T$; for example, $\uM_t :=(M_t,M_{t+1}, \ldots, M_T)$. Although $I_t$ is contained in $H_t$, we sometimes write $I_t$ alongside $H_t$ in conditional expectations: For example, $\EE(\cdot | H_t, I_t = 1)$ means $\EE(\cdot | H_t \setminus \{I_t\}, I_t = 1)$. Data from participants are assumed to be independent and identically distributed from some distribution $\cP_0$, and expectation $\EE$ and probability $P$ are taken with respect to $\cP_0$ unless stated otherwise. 

Let $\PP_n$ represent the empirical mean across all participants. For any positive integer $k$, we define $[k] := \{1, 2, \ldots, k\}$. The superscript $\star$ indicates quantities related to the true data-generating distribution $\cP_0$. For a vector $\alpha$ and a vector-valued function $f(\alpha)$, we use $\partial_\alpha f(\alpha) := \partial f(\alpha) / \partial \alpha^\top$ to denote the Jacobian matrix, where the $(i,j)$-th entry is the partial derivative of the $i$-th component of $f$ with respect to the $j$-th component of $\alpha$.

\subsection{Mediation Functional Under Excursion Policies}
\label{subsec:def-mediation-functional}

To define causal effects, we use the potential outcomes notation \citep{rubin1974estimating,robins1986new}. Lowercase letters represent specific instantiations (non-random values) of the corresponding capitalized random variables. For example, $a_t$ and $m_t$ are instantiations of $A_t$ and $M_t$, respectively. For every participant, denote by $X_t(\ba_{t-1}, \bm_{t-1})$ the $X_t$ that would have been observed if the participant were assigned a treatment sequence of $\ba_{t-1}$ and a mediator sequence $\bm_{t-1}$. $M_t(\ba_t,\bm_{t-1})$ is defined similarly but it depends additionally on $a_t$. The potential outcome of $Y$ under $(\ba_T,\bm_T)$ is $Y(\ba_T,\bm_T)$.

We define a \textit{policy} as any decision rule that assigns $A_t$ given the history $H_t$ for each $t \in [T]$ \citep{murphy2003optimal}. We define $D := (d_1,\ldots, d_T)$ as the \textit{behavior policy}, i.e., the policy under the observed data distribution $\cP_0$: for MRTs, $D$ represents the policy designed by researchers to sequentially randomize $A_t$; for observational studies, $D$ represents the policy nature uses to sequentially assign $A_t$. Specifically, each $d_t(\cdot)$ is a stochastic mapping that maps $H_t$ to $\{0,1\}$, such that $d_t(H_t) = a$ with probability $P(A_t = a \mid H_t)$ for $a \in \{0,1\}$ (recall $P$ is with respect to the truth $\cP_0$). For a fixed $t$, we consider two alternative rules for assigning $A_t$ that will be contrasted in defining the causal effect: $d_t^1$ always assigns $A_t = 1$ except when $I_t = 0$ (in which case $A_t = 0$), and $d_t^0$ always assigns $A_t = 0$. That is, $d_t^1(H_t) = I_t$, and $d_t^0(H_t) = 0$. For $a\in\{0,1\}$, let $D_t^a$ denote the policy obtained by replacing $d_t$ with $d_t^a$ in the behavior policy $D$:
\begin{align}
    D_t^a := (d_1,\ldots,d_{t-1},d_t^a,d_{t+1},\ldots,d_T). \label{eq:def-policy_Dta}
\end{align}
$D_t^1$ and $D_t^0$ represents two \textit{excursions} \citep{boruvka2018,guo2021discussion} from the behavior policy $D$; they deviate from $D$ at decision point $t$ by always assigning $A_t = I_t$ or $A_t = 0$, respectively.
\begin{defn}[Mediation functional]
    For $a,b \in \{0,1\}$, define
    \begin{align}
        \theta_t^{ab} := \EE \Big[ Y \big\{D_t^a, \bM_{t-1}(D_t^a), M_t(D_t^b), \uM_{t+1}(D_t^a) \big\} \Big]. \label{eq:def-theta_t_ab}
    \end{align}
\end{defn}
When $a = b \in \{0,1\}$, $\theta_t^{aa} = \EE[Y\{D_t^a, \bM(D_t^a)\}]$ is the expected potential outcome of $Y$ under excursion policy $D_t^a$. When $a \neq b$, $\theta_t^{ab}$ is the mediation functional, i.e., the expected potential outcome of $Y$ in a hypothetical world where the treatments $\bA_T$ are assigned according to excursion policy $D_t^a$, but $M_t$ for the specific $t$ is set to its potential value under another excursion policy, $D_t^b$.

\begin{ex}
    \label{ex:theta_t_ab-heartsteps}
    \normalfont
    Consider the HeartSteps MRT, which evaluates an mHealth intervention package to promote physical activity. Here, $A_t$ is the randomized activity suggestion. $M_t$ is the proximal outcome (30-minute step count after decision point $t$), and $Y$ is a distal outcome reflecting long-term activity level. The quantity $\theta_t^{10}$ can be interpreted as follows. Imagine a participant for whom the activity suggestion delivery follows the MRT policy up to time $t-1$, and then always receives an activity suggestion at time $t$ (unless ineligible at $t$). However, suppose that their subsequent 30-minute step count $M_t$ is forced to what it would have been under no suggestion at $t$, through a hypothetical intervention that neutralizes the suggestion's immediate effect without influencing future behavior (i.e., future $\uM_{t+1}$ and $Y$). From time $t+1$ onward, the activity suggestion delivery again follows the MRT policy. Then $\theta_t^{10}$ is the expected value of $Y$ under this scenario. The interpretation of $\theta_t^{01}$ is analogous. \qed
\end{ex}

\begin{rmk}
    \label{rmk:theta_t_ab}
    \normalfont
    In cross-sectional settings, $\EE[Y\{a, M(b)\}]$ for $a \neq b$ is central to defining natural (in)direct effects and is known as the expected cross-world potential outcome or mediation functional \citep{tchetgen2012semiparametric}. Our $\theta_t^{ab}$ generalizes this to longitudinal data with time-varying treatments and mediators. In the definition \eqref{eq:def-theta_t_ab}, only $M_t$ is set to its natural value under $D_t^b$, i.e., ``in the other world''; all the other variables ($Y,\bM_{t-1},\uM_{t+1}$) all follow their natural values under $D_t^a$. This serves three purposes, which we elaborate on later: it reduces the dimensionality (\cref{rmk:decomposition}), aligns with scientific goals (\cref{rmk:decomposition}), and facilitates identification (\cref{sec:identification}). \qed
\end{rmk}

\subsection{Natural (In)direct Excursion Effects}
\label{subsec:def-mediation-effect}

We now define the total and mediation effects and discuss their interpretation.

\begin{defn}[Total excursion effect and natural (in)direct excursion effects]
    The total excursion effect (TEE) contrasting excursions at decision point $t$ is $\tee_t := \theta_t^{11} - \theta_t^{00}$. The natural indirect excursion effects (NIEE) and the natural direct excursion effects (NDEE) are defined as
    \begin{align*}
        \nieet_t := \theta_t^{11} - \theta_t^{10}, & & \nieep_t := \theta_t^{01} - \theta_t^{00}, \\
        \ndeep_t := \theta_t^{10} - \theta_t^{00}, & & \ndeet_t := \theta_t^{11} - \theta_t^{01},
    \end{align*}
    with the decomposition
    \begin{align}
        \tee_t = \nieet_t + \ndeep_t = \nieep_t + \ndeet_t. \label{eq:effect-decomposition}
    \end{align}
\end{defn}

\begin{rmk}[Excursion aspect of the effects]
    \label{rmk:excursion-aspect}
    \normalfont
    The total excursion effect $\tee_t$ captures the effect of changing treatment at a single time point $t$, while holding the treatment policy at other time points fixed at the behavior policy. It is a special case of the distal causal excursion effect in \citet{qian2025distal}, with their effect modifiers $S_t = \emptyset$. This effect is scientifically relevant because the effect is contextualized in a realistic scenario (i.e., the actual behavior policy) rather than an arbitrary or unrealistic policy. This also allows the effect to incorporate nuanced yet important aspects in intensive longitudinal studies and MRTs, such as user burden (e.g., Example 2 in \citet{qian2025distal}) and treatment effects on future eligibility (e.g., Example 3 in \citet{qian2025distal}). Statistically, anchoring the effect in the behavior policy reduces the number of parameters. These features apply similarly to NIEE and NDEE. Further discussion of excursion effects appears in \citet{guo2021discussion,zhang2021discussion}. \qed
\end{rmk}

\begin{rmk}[Effect decomposition based on the immediate mediator]
    \label{rmk:decomposition}
    \normalfont
    The decomposition \eqref{eq:effect-decomposition} splits the total effect of $A_t$ on $Y$ into indirect effects mediated through the immediate mediator $M_t$ and direct effects encompassing all other pathways. For brevity, we refer to the latter as ``direct'', though it includes indirect pathways such as $A_t \to M_{t+1} \to Y$.

    Scientifically, focusing on the immediate mediator aligns with key research questions. For instance, mHealth interventions ($A_t$) designed to encourage healthy behaviors often target proximal mediators ($M_t$) as short-term indicators of progress toward an ultimate health goal ($Y$). A concrete example is promoting immediate physical activity to achieve sustained long-term exercise behavior \citep{thompson2014texting,nahum2018just}. Our decomposition explicitly tests hypotheses regarding whether affecting these immediate mediators ultimately improves distal health outcomes. Similar logic applies in observational studies, such as assessing whether better sleep quality improves long-term alertness via reduced next-day sleepiness (see \cref{sec:sleephealth}).

    Statistically, focusing on the immediate mediator reduces the complexity of the parameter space. Moreover, it enables the identifiability of natural mediation effects, because the no intermediate confounding assumption is more plausible. We elaborate on identifiability in \cref{sec:identification}. \qed
    
\end{rmk}


\setcounter{ex}{0}
\begin{ex}[\textbf{continued}]
    \normalfont
    Consider again the HeartSteps MRT from \cref{ex:theta_t_ab-heartsteps}. Here, $\nieet_t$ and $\nieep_t$ quantify the indirect effect of a prompt at decision point $t$ on the distal outcome through immediate changes in the subsequent 30-minute step count. $\ndeep_t$ and $\ndeet_t$ capture the direct impact of the prompt on the distal outcome, assuming the immediate step count remains unchanged at the no-prompt natural level.
\end{ex}

\begin{ex}
    \normalfont
    Suppose $T = 2$ and $I_t = 1$ with probability 1 for $t = 1, 2$. Then we have
    \begin{align*}
        \theta_1^{a_1 b_1} & = \EE\Big[ Y\big\{a_1, A_2(a_1), M_1(b_1), M_2(a_1, A_2(a_1))\big\}\Big], \\
        \theta_2^{a_2 b_2} & = \EE\Big[ Y\big\{A_1, a_2, M_1(A_1), M_2(A_1, b_2)\big\}\Big].
    \end{align*}
    The $\nieet_t$ and $\ndeep_t$ for $t = 1$ are
    \begin{align*}
        \nieet_{t=1} & = \EE\Big[ Y\big\{1, A_2(1), M_1(1), M_2(1, A_2(1))\big\}\Big] - \EE\Big[ Y\big\{1, A_2(1), M_1(0), M_2(1, A_2(1))\big\}\Big], \\
        \ndeep_{t=1} & = \EE\Big[ Y\big\{1, A_2(1), M_1(0), M_2(1, A_2(1))\big\}\Big] - \EE\Big[ Y\big\{0, A_2(0), M_1(0), M_2(0, A_2(0))\big\}\Big].
    \end{align*}
    The other NIEEs and NDEEs can be expressed similarly. \cref{fig:direct-indirect-effects} illustrates the pathways that constitute NIEE and NDEE, respectively, for $t=1$. \qed
\end{ex}

\begin{figure}[htbp]
    \centering
        \begin{tikzpicture}[>=Stealth, node distance=1cm and 1cm]
        \definecolor{directBlue}{HTML}{0072B2}      
        \definecolor{indirectOrange}{HTML}{E69F00}  
        \definecolor{customGray}{HTML}{A0A0A0}      

        \node (X1) {\(X_1\)};
        \node[right=of X1] (A1) {\(A_1\)};
        \node[below=of A1] (M1) {\(M_1\)};
        
        \node[right=of A1] (X2) {\(X_2\)};
        \node[right=of X2] (A2) {\(A_2\)};
        \node[below=of A2] (M2) {\(M_2\)};
        
        \node[right=of A2] (Y) {\(Y\)};
        
        \draw[->, customGray] (X1) to (A1);
        \draw[->, customGray] (X1) to (M1);
        \draw[->, indirectOrange, thick] (A1) to (M1);
        
        \draw[->, indirectOrange, thick] (X2) to (A2);
        \draw[->, indirectOrange, thick] (X2) to (M2);
        \draw[->, indirectOrange, thick] (A2) to (M2);
        
        \draw[->, customGray, bend left=20] (X1) to (X2);
        \draw[->, customGray, bend left=25] (X1) to (A2);
        \draw[->, customGray] (X1) to (M2);
        \draw[->, customGray] (A1) to (X2);
        \draw[->, customGray, bend left=20] (A1) to (A2);
        \draw[->, customGray] (A1) to (M2);
        \draw[->, indirectOrange, thick] (M1) to (X2);
        \draw[->, indirectOrange, thick] (M1) to (A2);
        \draw[->, indirectOrange, thick] (M1) to (M2);
        
        \draw[->, customGray, bend left=30] (X1) to (Y);
        \draw[->, customGray, bend left=30] (A1) to (Y);
        \draw[->, indirectOrange, thick] (M1) to (Y);

        \draw[->, indirectOrange, thick, bend left=20] (X2) to (Y);
        \draw[->, indirectOrange, thick] (A2) to (Y);
        \draw[->, indirectOrange, thick] (M2) to (Y);

        \node[anchor=south west, font=\bfseries] at (current bounding box.south west) {(a)};

    \end{tikzpicture}
    \hfill
    \begin{tikzpicture}[>=Stealth, node distance=1cm and 1cm]
        \definecolor{directBlue}{HTML}{0072B2}      
        \definecolor{indirectOrange}{HTML}{E69F00}  
        \definecolor{customGray}{HTML}{A0A0A0}      

        \node (X1) {\(X_1\)};
        \node[right=of X1] (A1) {\(A_1\)};
        \node[below=of A1] (M1) {\(M_1\)};
        
        \node[right=of A1] (X2) {\(X_2\)};
        \node[right=of X2] (A2) {\(A_2\)};
        \node[below=of A2] (M2) {\(M_2\)};
        
        \node[right=of A2] (Y) {\(Y\)};
        
        \draw[->, customGray] (X1) to (A1);
        \draw[->, customGray] (X1) to (M1);
        \draw[->, customGray] (A1) to (M1);
        
        \draw[->, directBlue, thick] (X2) to (A2);
        \draw[->, directBlue, thick] (X2) to (M2);
        \draw[->, directBlue, thick] (A2) to (M2);
        
        \draw[->, customGray, bend left=20] (X1) to (X2);
        \draw[->, customGray, bend left=25] (X1) to (A2);
        \draw[->, customGray] (X1) to (M2);
        \draw[->, directBlue, thick] (A1) to (X2);
        \draw[->, directBlue, thick, bend left=20] (A1) to (A2);
        \draw[->, directBlue, thick] (A1) to (M2);
        \draw[->, customGray] (M1) to (X2);
        \draw[->, customGray] (M1) to (A2);
        \draw[->, customGray] (M1) to (M2);
        
        \draw[->, customGray, bend left=30] (X1) to (Y);
        \draw[->, directBlue, thick, bend left=30] (A1) to (Y);
        \draw[->, customGray] (M1) to (Y);

        \draw[->, directBlue, thick, bend left=20] (X2) to (Y);
        \draw[->, directBlue, thick] (A2) to (Y);
        \draw[->, directBlue, thick] (M2) to (Y);

        \node[anchor=south west, font=\bfseries] at (current bounding box.south west) {(b)};
    \end{tikzpicture}
    \caption{\footnotesize NIEE and NDEE of $A_1$ on $Y$ with number of decision points $T=2$. (a) Paths in orange collectively illustrate the indirect effect from $A_1$ to $Y$ mediated by $M_1$. (b) Paths in blue collectively illustrate the ``direct'' effect from $A_1$ to $Y$ not mediated by $M_1$.}
    \label{fig:direct-indirect-effects}
\end{figure}
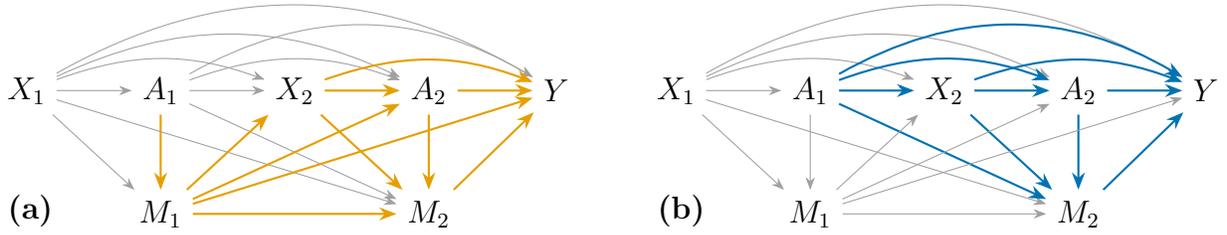

\section{Causal Assumptions and Identification}
\label{sec:identification}

We make the following causal assumptions.
\begin{asu}
    \label{asu:causal-assumptions}
    \normalfont
    \spacingset{1.5}
    \begin{asulist}
        \item \label{asu:consistency} (SUTVA.) There is no interference across participants, and the observed data equals the potential outcome under the observed treatment. Specifically, for each $t$, $X_t(\ba_{t-1}, \bm_{t-1}) = X_t$ if $\bA_{t-1} = \ba_{t-1}$ and $\bM_{t-1} = \bm_{t-1}$, and $M_t(\ba_t, \bm_{t-1}) = M_t$ if $\bA_t = \ba_t$ and $\bM_{t-1} = \bm_{t-1}$. In addition, $Y(\ba_T, \bm_T) = Y$ if $\bA_T = \ba_T$ and $\bM_T = \bm_T$.

        \item \label{asu:positivity} (Positivity when eligible.) For each $t$, there exists a positive constant $c > 0$, such that if $P(H_t = h_t, I_t = 1) > 0$ then $c < P(A_t = a \mid H_t = h_t, I_t = 1) < 1-c$ for $a \in \{0, 1\}$. Furthermore, if $P(H_t = h_t, A_t = a) > 0$, then $P(M_t = m \mid H_t = h_t, A_t = a) > 0$ for all $m$ in the support of $M_t$.

        \item \label{asu:seq-ign-A} (Sequential ignorability of $A_t$.) For each $t$, the potential outcomes $\big\{M_t(\ba_t)$, $X_{t+1}(\ba_t)$, $M_{t+1}(\ba_{t+1})$, $X_{t+2}(\ba_{t+1})$, $\ldots, X_T(\ba_{T-1})$, $M_T(\ba_T)$, $Y(\ba_T, M_T(\ba_T)): \ba_T \in \{0,1\}^{\otimes T} \big\}$ are conditionally independent of $A_t$ given $H_t$.

        \item \label{asu:seq-ign-M} (Sequential ignorability of $M_t$.) For each $t$, for any $a,b \in \{0,1\}$, and for any $m_t$ in the support of $M_t$, we have
        \begin{align*}
            M_t(D_t^b) \perp Y\{D_t^a, \bM_{t-1}(D_t^a), m_t, \uM_{t+1}(D_t^a)\} \mid H_t, A_t = d_t^b
        \end{align*}
    \end{asulist}
\end{asu}

\cref{asu:consistency} may be violated in the presence of interference, i.e., when one participant's treatment affects another's outcome. In such settings, alternative frameworks that account for causal interference would be needed \citep{hudgens2008toward,shi2022assessing}. In MRTs, the first part of \cref{asu:positivity} is guaranteed by design, while the second part can be checked empirically. In observational studies, both parts need to be checked empirically. The eligibility indicator $I_t$ allows one direction of positivity violation: $P(A_t = 1 \mid I_t = 0) = 0$. The probability $P(\cdot)$ in \cref{asu:positivity} denotes a density with respect to a dominating measure, so the same notation is used for discrete and continuous random variables. \cref{asu:seq-ign-A} holds by design in MRTs and must be assumed in observational settings.

\cref{asu:seq-ign-M} extends the cross-world independence assumption in causal mediation framework to the intensive longitudinal setting \citep{nguyen2022clarifying}. It assumes the independence between the mediator $M_t$ under $D_t^b$ and the outcome $Y$ under $D_t^a$. While multiple versions of this assumption exist with subtle distinctions \citep{pearl2001direct,robins2003semantics,petersen2006estimation}, ours generalizes Assumption 1 from \citet{imai2010identification}, implying that $M_t$ is conditionally ignorable given treatment $A_t$ and history $H_t$. This rules out ``intermediate confounding'', i.e., post-$A_t$ variables affected by $A_t$ that influence both $M_t$ and subsequent variables ($\uM_{t+1}$ or $Y$). \cref{asu:seq-ign-M} is untestable and must be justified using subject-matter knowledge. In intensive longitudinal studies, $M_t$ is often measured shortly after $A_t$, making the assumption more plausible. However, when there is a substantial delay between $A_t$ and $M_t$, this assumption may be violated, rendering NIEEs and NDEEs become unidentifiable. In such cases, alternative estimands such as interventional or path-specific mediation effects may be more appropriate \citep{vanderweele2017mediation,diaz2022efficient,avin2005identifiability,miles2020semiparametric}.

A reassuring nuance is that \cref{asu:seq-ign-M} does not preclude confounders affected by $A_t$ that influence future mediators (e.g., $M_{t+1}$) and the outcome $Y$. Such confounders are common in intensive longitudinal studies; in fact, $M_t$ itself often serves as one. NIEE and NDEE remain identifiable despite these confounders, because we focus on mediation through the immediate mediator $M_t$, rather than attempting to decompose effects through mediators at future decision points.

In Supplementary Material \ref{A-sec:proof-identification}, we prove the following identification result.
\begin{thm}[Identification]
    \label{thm:identification}
    Under \cref{asu:causal-assumptions}, $\theta_t^{aa}$ for $a \in \{0,1\}$ is identified as 
    \begin{align}
        \theta_t^{aa}
        & = \EE \Big\{ \EE\big(Y \mid H_t, A_t = d_t^a\big) \Big\} \label{eq:identify-theta-aa-ie} \\
        & = \EE \bigg\{ \frac{\indic(A_t = d_t^a)}{P(A_t = d_t^a \mid H_t)} Y \bigg\}. \label{eq:identify-theta-aa-ipw}
    \end{align}
    The mediation functional $\theta_t^{ab}$ for $a,b\in\{0,1\}$ and $a \neq b$ is identified as
    \begin{align}
        \theta_t^{ab} 
        & = \EE\Big[ \EE \big\{ \EE(Y \mid H_t, A_t = d_t^a, M_t) \mid H_t, A_t = d_t^b \big\} \Big] \label{eq:identify-theta-ab-ie} \\
        & = \EE \bigg\{ \frac{\indic(A_t = d_t^a)}{P(A_t = d_t^b \mid H_t)} \frac{P(A_t = d_t^b \mid H_t, M_t)}{P(A_t = d_t^a \mid H_t, M_t)} Y \bigg\}. \label{eq:identify-theta-ab-ipw}
    \end{align}
\end{thm}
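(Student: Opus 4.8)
The plan is to establish the two g-formula (iterated-expectation) representations \eqref{eq:identify-theta-aa-ie} and \eqref{eq:identify-theta-ab-ie} first, and then deduce the inverse-probability-weighted forms \eqref{eq:identify-theta-aa-ipw} and \eqref{eq:identify-theta-ab-ipw} from them by a Horvitz--Thompson change of measure. The organizing principle throughout is that each excursion policy $D_t^a$ coincides with the behavior policy $D$ at every decision point except $t$. Two ``matching facts'' follow from this: the distribution of $H_t$ under $D_t^a$ equals its observed distribution (the policies agree before $t$), and, conditional on $(H_t, A_t = d_t^a)$, the law of the post-$t$ trajectory $(M_t, X_{t+1}, \ldots, Y)$ under $D_t^a$ equals the observed conditional law (the policies agree after $t$). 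Identification therefore reduces to handling a single-point intervention at $t$ embedded in an otherwise natural process. I would establish these two facts by a forward induction over $s = t+1, \ldots, T$, at each step invoking \cref{asu:seq-ign-A} to peel off the treatment and \cref{asu:consistency} to equate each potential covariate/mediator with its observed counterpart.

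For the diagonal case $\theta_t^{aa}$, I would condition on $H_t$ and apply \cref{asu:seq-ign-A} to replace the conditional mean of the potential outcome given $H_t$ with its value given the augmented event $(H_t, A_t = d_t^a)$; \cref{asu:consistency} together with the post-$t$ matching fact then equates this with $\EE(Y \mid H_t, A_t = d_t^a)$, and averaging over $H_t$ yields \eqref{eq:identify-theta-aa-ie}. Equation \eqref{eq:identify-theta-aa-ipw} follows by writing $\EE(Y \mid H_t, A_t = d_t^a) = \EE[\indic(A_t = d_t^a) Y \mid H_t] / P(A_t = d_t^a \mid H_t)$ and taking the outer expectation, with the first part of \cref{asu:positivity} guaranteeing a bounded denominator (noting that on $\{I_t = 0\}$ both rules force $A_t = 0$, so the propensity equals one).

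The cross-world case $\theta_t^{ab}$ with $a \neq b$ is the crux. Writing $g(m_t) := Y\{D_t^a, \bM_{t-1}(D_t^a), m_t, \uM_{t+1}(D_t^a)\}$ for the world-$a$ outcome with the immediate mediator fixed at $m_t$, I would proceed in three steps after conditioning on $H_t$. Step one identifies the outcome regression: by \cref{asu:seq-ign-A}, then \cref{asu:seq-ign-M} taken with $b$ equal to $a$ (the same-world instance, which supplies ordinary ignorability of $M_t$ given $A_t, H_t$), together with \cref{asu:consistency}, I obtain $\EE[g(m_t) \mid H_t] = \EE(Y \mid H_t, A_t = d_t^a, M_t = m_t)$. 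Step two handles the mediator draw: by \cref{asu:consistency}, on $\{A_t = d_t^b\}$ the observed $M_t$ coincides with $M_t(D_t^b)$, so the law of $M_t(D_t^b)$ given $(H_t, A_t = d_t^b)$ is the observed law of $M_t$ given $(H_t, A_t = d_t^b)$. Step three is the genuinely cross-world step: using \cref{asu:seq-ign-A} to attach the event $\{A_t = d_t^b\}$ to the full quantity $Q^{ab} := g(M_t(D_t^b))$ and then invoking the cross-world independence of \cref{asu:seq-ign-M}, I factor $\EE[g(M_t(D_t^b)) \mid H_t, A_t = d_t^b]$ into $\int \EE[g(m_t) \mid H_t] \, dP(M_t = m_t \mid H_t, A_t = d_t^b)$. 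Combining the three steps and averaging over $H_t$ yields the nested expectation \eqref{eq:identify-theta-ab-ie}. Finally, \eqref{eq:identify-theta-ab-ipw} follows from \eqref{eq:identify-theta-ab-ie} by expressing both conditional expectations as importance-weighted averages over the full observed data: the inner regression contributes the weight $\indic(A_t = d_t^a)/P(A_t = d_t^a \mid H_t, M_t)$, while re-expressing the mediator average under $A_t = d_t^b$ relative to the marginal law of $M_t$ given $H_t$ contributes the Bayes factor $P(A_t = d_t^b \mid H_t, M_t)/P(A_t = d_t^b \mid H_t)$, and their product is the combined weight; the second part of \cref{asu:positivity} ensures common mediator support so the weights are well-defined.

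The main obstacle I anticipate is making the step-three factorization rigorous while correctly accounting for the stochasticity of the excursion policy. Unlike the classical g-formula, the treatments at times $s \neq t$ are drawn from the behavior policy rather than held fixed, so I must argue that the excursion-policy potential outcomes $g(m_t)$ and $M_t(D_t^b)$---each a functional of fixed-treatment potential outcomes integrated against the exogenous post-$t$ randomization---inherit the required conditional independences from \cref{asu:seq-ign-A,asu:seq-ign-M}, which are stated for fixed sequences $\ba_T$. Verifying that the exogenous randomization of $D$ at the non-$t$ decision points is independent of both $A_t$ and the cross-world potential outcomes given $H_t$, and that this licenses dropping and re-introducing the conditioning events $\{A_t = d_t^a\}$ and $\{A_t = d_t^b\}$ across the three steps, is the delicate part; the remaining manipulations are routine.
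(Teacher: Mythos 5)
Your proposal is correct and follows essentially the same route as the paper: establish the nested g-formula representations first by combining \cref{asu:seq-ign-A} (to attach/detach the events $\{A_t=d_t^a\}$ and $\{A_t=d_t^b\}$), the cross-world and same-world instances of \cref{asu:seq-ign-M} (to factor the mediator draw from the world-$a$ outcome and to re-insert the natural mediator), and \cref{asu:consistency} (to pass from potential to observed quantities), then obtain the weighted forms by a Bayes-theorem change of measure on the mediator density ratio. The only difference is cosmetic --- you condition on $\{A_t=d_t^b\}$ for the full cross-world quantity before factoring, whereas the paper conditions on $M_t(D_t^b)=m_t$ given $H_t$ alone and attaches $\{A_t=d_t^b\}$ to the mediator law afterward --- and the ``delicate part'' you flag (extending the fixed-sequence ignorability statements to functionals of the stochastic excursion policy) is exactly the step the paper also handles implicitly rather than formally.
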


\section{Estimation of Natural (In)direct Excursion Effects}
\label{sec:methods}

\subsection{Efficient Influence Function for Mediation Functional}
\label{subsec:eif}

We define a few nuisance functions, and we use superscript $\star$ to denote the truth. For $a \in \{0,1\}$, let
\begin{align}
    p_t^\star(a | H_t) & := P(A_t = d_t^a \mid H_t), &
    q_t^\star(a | H_t, M_t) & := P(A_t = d_t^a \mid H_t, M_t), \nonumber \\
    \eta_t^\star(a, H_t) & := \EE(Y \mid A_t = d_t^a, H_t), &
    \mu_t^\star(a, H_t, M_t) & := \EE(Y \mid A_t = d_t^a, H_t, M_t), \nonumber \\
    \nu_t^\star(a, H_t) & := \EE\{ \mu_t^\star(a, H_t, M_t) \mid A_t = d_t^{1-a}, H_t \}. \label{eq:def-true-nuisance-function}
\end{align}
Versions without superscript $\star$ denote working models that are not necessarily correct. For example, $p_t(a | H_t)$ denotes a working model that does not necessarily equal to $p_t^\star(a | H_t)$. Furthermore, for $a,b \in \{0,1\}$ and $a \neq b$, define
\begin{align}
    \phi_t^{aa}(p_t, \eta_t) & := \frac{\indic(A_t = d_t^a)}{p_t(a | H_t)} Y - \frac{\indic(A_t = d_t^a) - p_t(a | H_t)}{p_t(a | H_t)} \eta_t(a, H_t), \label{eq:def-phi-aa} \\
    \phi_t^{ab}(p_t, q_t, \mu_t, \nu_t) & := \frac{\indic(A_t = d_t^a) ~q_t(b | H_t, M_t)}{p_t(b | H_t) ~q_t(a | H_t, M_t)} \Big\{ Y - \mu_t (a, H_t, M_t) \Big\} \nonumber \\
    & ~~~~ + \frac{\indic(A_t = d_t^b)}{p_t(b | H_t)} \Big\{\mu_t (a, H_t, M_t) - \nu_t(a, H_t)\Big\} + \nu_t(a, H_t). \label{eq:def-phi-ab}
\end{align}
The following theorem gives the efficient influence functions (EIFs), with proof in in Supplementary Material \ref{A-sec:proof-eif}.
\begin{thm}[Efficient influence functions]
    \label{thm:eif}
    The EIF for $\theta_t^{aa}$ is $\phi_t^{aa}(p_t^\star, \eta_t^\star) - \theta_t^{aa}$. The EIF for $\theta_t^{ab}$ ($a \neq b$) is $\phi_t^{ab}(p_t^\star, q_t^\star, \mu_t^\star, \nu_t^\star) - \theta_t^{ab}$.
\end{thm}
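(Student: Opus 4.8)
The plan is to obtain both efficient influence functions as pathwise (Gateaux) derivatives of the identified functionals. Because the observed-data model is nonparametric, the tangent space is the whole space of mean-zero square-integrable functions, so each functional admits a unique gradient, and that gradient is automatically the EIF. It therefore suffices to (i) confirm that $\phi_t^{ab}(p_t^\star,q_t^\star,\mu_t^\star,\nu_t^\star)$ has mean $\theta_t^{ab}$ under $\cP_0$, and (ii) show that for every regular parametric submodel $\{P_\epsilon\}$ through $\cP_0$ with score $S$, one has $\frac{d}{d\epsilon}\theta_t^{ab}(P_\epsilon)\big|_{\epsilon=0} = \EE\big[(\phi_t^{ab}-\theta_t^{ab})\,S\big]$. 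Claim (i) is exactly \cref{thm:identification} (via \eqref{eq:identify-theta-ab-ie}). For claim (ii) I would start from the iterated-regression form $\theta_t^{ab} = \EE[\nu_t^\star(a,H_t)]$ with $\nu_t^\star(a,H_t) = \EE\{\mu_t^\star(a,H_t,M_t)\mid A_t=d_t^b,H_t\}$ and $\mu_t^\star(a,H_t,M_t)=\EE(Y\mid A_t=d_t^a,H_t,M_t)$. The key structural observation is that this is precisely a cross-sectional natural-mediation functional with ``baseline covariate'' $H_t$, treatment $A_t$, mediator $M_t$, and outcome $Y$; the excursion design makes the downstream assignment policy identical in both worlds, so it cancels in the contrast and does not generate additional terms.

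Next I would decompose the submodel score by the time-ordered likelihood factorization and differentiate $\theta_t^{ab}(P_\epsilon)$, producing three orthogonal contributions that match the three terms of \eqref{eq:def-phi-ab}: (a) variation in the marginal law of $H_t$, with representer $\nu_t^\star(a,H_t)-\theta_t^{ab}$ (the $\nu_t$ term); (b) variation in the mediator law $M_t\mid H_t,A_t=d_t^b$, pulled back to the observed record by the weight $\indic(A_t=d_t^b)/p_t^\star(b\mid H_t)$, with residual $\mu_t^\star-\nu_t^\star$ (the middle term); and (c) variation in the conditional law of $Y$ given $(H_t,A_t=d_t^a,M_t)$ — equivalently in all variables downstream of $M_t$ entering through $\mu_t^\star$ — with residual $Y-\mu_t^\star$ (the first term). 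For each source I would express the derivative via the tower property and convert the conditioning event into the observed record using indicator/propensity reweighting.

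The main obstacle is the cross-world reweighting appearing in source (c): the ``clean'' mediation weight $\frac{\indic(A_t=d_t^a)}{p_t^\star(a\mid H_t)}\cdot\frac{P(M_t\mid H_t,A_t=d_t^b)}{P(M_t\mid H_t,A_t=d_t^a)}$ must be rewritten in terms of $q_t^\star$ via the Bayes-rule identity
\begin{align*}
\frac{P(M_t\mid H_t,A_t=d_t^a)}{P(M_t\mid H_t,A_t=d_t^b)} = \frac{q_t^\star(a\mid H_t,M_t)/p_t^\star(a\mid H_t)}{q_t^\star(b\mid H_t,M_t)/p_t^\star(b\mid H_t)},
\end{align*}
which collapses the weight to $\indic(A_t=d_t^a)\,q_t^\star(b\mid H_t,M_t)/\{p_t^\star(b\mid H_t)\,q_t^\star(a\mid H_t,M_t)\}$, exactly the first term of \eqref{eq:def-phi-ab}. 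A secondary subtlety is justifying that the downstream scores (for $X_{t+1},\ldots,Y$) enter only through $\mu_t^\star$; this holds because $\mu_t^\star$ is a conditional mean and the functional marginalizes everything after $M_t$, so the derivative of $\mu_t^\star$ picks up only $\EE[(Y-\mu_t^\star)\,S_{>t}\mid H_t,A_t=d_t^a,M_t]$, where $S_{>t}$ denotes the sum of conditional scores after $(H_t,A_t,M_t)$. The $\theta_t^{aa}$ case is the degenerate specialization $b=a$: sources (b) and (c) merge into a single outcome-regression term, yielding the standard augmented-IPW gradient $\phi_t^{aa}-\theta_t^{aa}$ of \eqref{eq:def-phi-aa}, with the derivation paralleling the treatment-effect EIF under the excursion policy and matching \eqref{eq:identify-theta-aa-ie}. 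Collecting the three representers and invoking uniqueness of the gradient in the nonparametric model completes the argument.
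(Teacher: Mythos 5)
Your proposal is correct in outline and, at its computational core, follows the same route as the paper: both arguments compute the pathwise derivative of the identified functional along regular parametric submodels, match it to three representers corresponding to variation in the law of $H_t$, of $M_t \mid H_t, A_t = d_t^b$, and of $Y \mid H_t, A_t = d_t^a, M_t$ (giving the three terms of \eqref{eq:def-phi-ab}), and collapse the cross-world mediator-density ratio into the $q_t^\star/p_t^\star$ form via exactly the Bayes-rule identity you write down (the paper isolates this as a separate lemma and credits the resulting form to Farbmacher et al.). The genuine difference is in how efficiency is concluded. You assert that the observed-data model is nonparametric, so the gradient is unique and hence is the EIF. The paper instead works in the semiparametric model constrained by the eligibility rule ($A_t = 0$ whenever $I_t = 0$), explicitly derives the tangent space --- whose propensity-score component carries an $I_t$ factor and so is a priori a proper-looking subspace --- and then verifies that the candidate influence function lies in it, following Hahn (1998). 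Your shortcut can be salvaged, because the eligibility constraint is only a support restriction (on $\{I_t=0\}$ the treatment is degenerate, so every mean-zero square-integrable function of the observed record is still attainable as a score), but as written this is asserted rather than argued, and it is precisely the step where the eligibility indicator could cause trouble; the paper's tangent-space derivation and membership check exist to close that gap. A second, smaller divergence: the paper does not obtain $\theta_t^{aa}$ as the degenerate case $b=a$ but gives it a self-contained Hahn-style derivation in which the representer is handled separately on $\{I_t=1\}$ and $\{I_t=0\}$ (since $d_t^1(H_t)=I_t$); your ``merge sources (b) and (c)'' reduction reaches the same $\phi_t^{aa}$ but glosses over this case split. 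What the paper's approach buys is an airtight treatment of the eligibility constraint; what yours buys is brevity and a cleaner conceptual reduction to the cross-sectional mediation functional with baseline covariate $H_t$.
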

\cref{thm:eif} generalizes results in the literature to excursion effects in intensive longitudinal settings with eligibility constraints. Specifically, the EIF for $\theta_t^{aa}$ generalizes the classic result in \citet{hahn1998role}. The EIF for $\theta_t^{ab}$ generalizes \citet[Lemma 3.2]{farbmacher2022causal}, which is based on the seminal work of \citet{tchetgen2012semiparametric}. The EIFs are used to construct multiply-robust estimators for NDEEs and NIEEs next.

\subsection{Multiply-Robust Estimators for NDEE and NIEE}
\label{subsec:dr-estimator}

Let $f(t)$ be a pre-specified $p$-dimensional feature vector of index $t$. We consider estimating the best linear projections of $\ndeep_t$ and $\nieet_t$ onto $f(t)$. Specifically, the true estimand is $\gamma^\star := ((\alpha^\star)^\top, (\beta^\star)^\top)^\top$, with $\alpha^\star \in \RR^p$ and $\beta^\star \in \RR^p$ defined as:
\begin{align}
    \alpha^\star &:= \arg\min_{\alpha \in \RR^p} \sumt \omega(t) \Big\{ \ndeep_t - f(t)^\top\alpha \Big\}^2, \nonumber \\
    \beta^\star &:= \arg\min_{\beta \in \RR^p} \sumt \omega(t) \Big\{ \nieet_t - f(t)^\top\beta \Big\}^2, \label{eq:alphabeta-star-def}
\end{align}
where $\omega(t)$ is a pre-specified weight function with $\sum_{t=1}^T \omega(t) = 1$. These projections yield interpretable estimands even if the true time trends of NDEE and NIEE are not linear in $f(t)$.

The choice of $f(t)$ and $\omega(t)$ should reflect the scientific goal. For example, $f(t) = (1,t)^\top$ or $f(t) = (1,t,t^2)^\top$ models linear or quadratic trends in the effects over time, while more flexible basis functions of $t$ can also be used (see \cref{sec:heartsteps}). Setting $\omega(t) = 1/T$ gives equal weight to each time point. In the simplest case, with $f(t) = 1$ and $\omega(t) = 1/T$, the scalars $\alpha^\star$ and $\beta^\star$ correspond to the average $\ndeep_t$ and $\nieet_t$ over $t \in [T]$. To estimate effects at a specific time point $t_0$, one can set $\omega(t_0) = 1$ and $\omega(t) = 0$ for $t \neq t_0$.

We estimate $\gamma^\star$ using an estimating equation based on the EIFs in \cref{thm:eif}. Let $\zeta_t := (p_t, q_t, \eta_t, \mu_t, \nu_t)$ denote the nuisance functions corresponding to time point $t$, and $\zeta := (\zeta_1, \zeta_2, \ldots, \zeta_T)$. The proposed estimating function is
\begin{align}
    \psi(\gamma; \zeta) := \sum_{t=1}^T \omega(t)
    \left[\begin{matrix}
        \big\{ \phi_t^{10}(p_t, q_t, \mu_t, \nu_t) - \phi_t^{00}(p_t, \eta_t) - f(t)^\top\alpha \big\} f(t) \\
        \big\{ \phi_t^{11}(p_t, \eta_t) - \phi_t^{10}(p_t, q_t, \mu_t, \nu_t) - f(t)^\top\beta \big\} f(t)
    \end{matrix}\right]. \label{eq:def-psi}
\end{align}
We propose two estimators for $\gamma^\star$ based on $\psi(\gamma;\zeta)$, one without cross-fitting ($\hat\gamma$) and one with cross-fitting ($\tilde\gamma$), detailed in Algorithms \ref{algo:estimator-ncf} and \ref{algo:estimator-cf}. Their asymptotic properties are given in \cref{thm:estimator-consistency,thm:estimator-normality}.

\begin{algorithm}[htbp]
    \caption{A two-stage estimator $\hat\gamma$ (without cross-fitting)}
    \label{algo:estimator-ncf}
    \spacingset{1.5}
    \vspace{0.3em}
    \textbf{Stage 1:} Fit nuisance models $\hat\zeta_t = (\hat{p}_t, \hat{q}_t, \hat\eta_t, \hat\mu_t, \hat\nu_t)$ as conditional expectations defined in \eqref{eq:def-true-nuisance-function} for all $t$ using the entire sample. In practice we often pool across $t\in[T]$. Use $\hat\zeta$ to denote $(\hat\zeta_1, \ldots, \hat\zeta_T)$.

    \textbf{Stage 2:} Obtain $\hat\gamma$ by solving $\PP_n \psi(\gamma;\hat\zeta) = 0$.
    \vspace{0.3em}
\end{algorithm}

\begin{algorithm}[htbp]
    \caption{A two-stage estimator $\tilde\gamma$ (with $K$-fold cross-fitting)}
    \label{algo:estimator-cf}
    \spacingset{1.5}
    \vspace{0.3em}
    \textbf{Stage 1:} Take a $K$-fold equally-sized random partition $(B_k)_{k=1}^K$ of observation indices $[n] = \{1,\ldots,n\}$. Define $B_k^c = [n] \setminus B_k$ for $k \in [K]$. For each $k \in [K]$, use solely observations from $B_k^c$ and estimate nuisance functions $\zeta_t = (p_t, q_t, \eta_t, \mu_t, \nu_t)$ for all $t$. The fitted models using $B_k^c$ are denoted by $\hat\zeta_{kt}$, and let $\hat\zeta_k := (\hat\zeta_{k1},\ldots,\hat\zeta_{kT})$.

    \textbf{Stage 2:} Obtain $\tilde\gamma$ by solving $K^{-1} \sum_{k=1}^K \PP_{n,k} \psi(\gamma, \hat\zeta_k) = 0$. Here $\PP_{n,k}$ denotes empirical average over observations from $B_k$.
    \vspace{0.3em}
\end{algorithm}

\begin{thm}[Consistency of $\hat\gamma$ and $\tilde\gamma$]
    \label{thm:estimator-consistency}
    Suppose \cref{asu:causal-assumptions} hold and consider $\gamma^\star = ((\alpha^\star)^\top, (\beta^\star)^\top)^\top$ defined in \eqref{eq:alphabeta-star-def}. For each $t \in [T]$, suppose that $\zeta_t'$, which is the $L_2$-limit of the fitted nuisance function $\hat\zeta_t$, satisfies one of the four conditions: (i) $p_t' = p_t^\star$ and $q_t' = q_t^\star$; or (ii) $p_t' = p_t^\star$ and $\mu_t' = \mu_t^\star$; or (iii) $\eta_t' = \eta_t^\star$, $\nu_t' = \nu_t^\star$ and $q_t' = q_t^\star$; or (iv) $\eta_t' = \eta_t^\star$, $\nu_t' = \nu_t^\star$ and $\mu_t' = \mu_t^\star$. Then under regularity conditions, $\hat\gamma \pto \gamma^\star$ as $n\to\infty$. For $\tilde\gamma$, if for all $k\in[K]$ the $L_2$-limit of $\hat\zeta_{kt}$ satisfies one of these four conditions, then $\tilde\gamma \pto \gamma^\star$ as $n\to\infty$.
\end{thm}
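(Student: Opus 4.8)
The plan is to exploit two structural features of the estimating function $\psi(\gamma;\zeta)$: it is affine in $\gamma$, and its two coordinate blocks are built from the influence-function contrasts $D_t := \phi_t^{10} - \phi_t^{00}$ and $N_t := \phi_t^{11} - \phi_t^{10}$. Because $\psi$ is affine in $\gamma$, solving $\PP_n\psi(\gamma;\hat\zeta)=0$ has the closed form
\begin{align*}
    \hat\alpha = Q^{-1}\sumt \omega(t) f(t)\, \PP_n\bigl[\phi_t^{10}(\hat\zeta_t) - \phi_t^{00}(\hat\zeta_t)\bigr], \quad
    \hat\beta = Q^{-1}\sumt \omega(t) f(t)\, \PP_n\bigl[\phi_t^{11}(\hat\zeta_t) - \phi_t^{10}(\hat\zeta_t)\bigr],
\end{align*}
where $Q := \sumt \omega(t) f(t) f(t)^\top$ is a fixed (non-random) matrix, assumed invertible as a regularity condition. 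The matching normal equations for the projection \eqref{eq:alphabeta-star-def} give $\alpha^\star = Q^{-1}\sumt\omega(t)f(t)\ndeep_t$ and $\beta^\star = Q^{-1}\sumt\omega(t)f(t)\nieet_t$. Consistency therefore reduces to showing $\PP_n[\phi_t^{10}(\hat\zeta_t) - \phi_t^{00}(\hat\zeta_t)] \pto \ndeep_t$ and $\PP_n[\phi_t^{11}(\hat\zeta_t) - \phi_t^{10}(\hat\zeta_t)] \pto \nieet_t$ for each $t$, followed by the continuous mapping theorem applied to $Q^{-1}(\cdot)$.

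The crux is the multiple robustness of the EIF building blocks: that each of the four conditions forces the population estimating function to be unbiased at the $L_2$-limit $\zeta_t'$. First, $\phi_t^{aa}$ is the standard augmented-IPW functional, so $\EE[\phi_t^{aa}(p_t',\eta_t')] = \theta_t^{aa}$ whenever $p_t' = p_t^\star$ or $\eta_t'=\eta_t^\star$; each of conditions (i)--(iv) supplies one of these, covering $\phi_t^{11}$ and $\phi_t^{00}$. Second, for the cross-world functional I would verify $\EE[\phi_t^{10}(p_t',q_t',\mu_t',\nu_t')] = \theta_t^{10}$ under each condition by writing the bias $\EE[\phi_t^{10}(\zeta_t')] - \theta_t^{10}$ as a sum of products of nuisance errors (using iterated expectations and the identification formula \eqref{eq:identify-theta-ab-ie}), then checking that in each case every product contains at least one vanishing factor. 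This is the step I expect to be the main obstacle, since the cross-world term mixes the treatment ratios $q_t/p_t$ with the nested regressions $\mu_t,\nu_t$, so the bias decomposition has several cross-terms; it generalizes the triple-robustness of \citet{tchetgen2012semiparametric} and \citet{farbmacher2022causal}, with the eligibility constraint handled through $d_t^a$. Combining the two pieces yields $\EE[D_t'] = \ndeep_t$ and $\EE[N_t'] = \nieet_t$.

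With robustness in hand, I would establish the required convergence by the decomposition
\begin{align*}
    \PP_n\bigl[\phi_t^{10}(\hat\zeta_t)\bigr] - \theta_t^{10}
    = (\PP_n - \EE)\bigl[\phi_t^{10}(\hat\zeta_t)\bigr]
    + \bigl\{\EE[\phi_t^{10}(\hat\zeta_t)] - \EE[\phi_t^{10}(\zeta_t')]\bigr\}
    + \bigl\{\EE[\phi_t^{10}(\zeta_t')] - \theta_t^{10}\bigr\},
\end{align*}
and analogously for $\phi_t^{00}$ and $\phi_t^{11}$. The third brace vanishes by the robustness just established. The second brace tends to zero by $L_2$-continuity of $\zeta_t\mapsto\EE[\phi_t^{10}(\zeta_t)]$, which holds because \cref{asu:positivity} bounds the denominators $p_t,q_t$ away from zero and $Y$ has finite second moment, together with $\hat\zeta_t \to \zeta_t'$ in $L_2$. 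The first brace is an empirical-process term that vanishes under a Donsker-type restriction on the nuisance classes (or boundedness plus $L_2$-convergence), which I would impose among the regularity conditions. Summing the pieces gives $\PP_n[\hat D_t]\pto\ndeep_t$ and $\PP_n[\hat N_t]\pto\nieet_t$, hence $\hat\gamma\pto\gamma^\star$.

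For the cross-fitted estimator $\tilde\gamma$, I would repeat the argument fold by fold. The key simplification is that $\hat\zeta_k$ is estimated on $B_k^c$ and evaluated on the independent fold $B_k$, so conditional on $B_k^c$ the empirical-process brace is a centered average of i.i.d.\ terms and vanishes by Chebyshev without any Donsker condition; averaging the $K$ fold-specific estimating equations preserves the limit, giving $\tilde\gamma\pto\gamma^\star$ whenever each fold's $L_2$-limit satisfies one of the four conditions.
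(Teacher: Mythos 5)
Your proposal is correct, and its mathematical core coincides with the paper's: both arguments hinge on the same multiple-robustness property of the building blocks, namely that $\EE[\phi_t^{aa}(p_t',\eta_t')]=\theta_t^{aa}$ whenever $p_t'=p_t^\star$ or $\eta_t'=\eta_t^\star$, and that $\EE[\phi_t^{ab}(p_t',q_t',\mu_t',\nu_t')]=\theta_t^{ab}$ under each of the four conditions (the paper's Lemmas on double and multiple robustness of $\phi_t^{aa}$ and $\phi_t^{ab}$, proved exactly as you anticipate, by iterated-expectation identities and case-by-case cancellation of the cross terms). Where you diverge is the outer, Z-estimation layer. The paper does not write down the closed form $\hat\alpha=Q^{-1}\sum_t\omega(t)f(t)\PP_n[\phi_t^{10}-\phi_t^{00}]$; instead it packages the robustness into the statement $\PP\{\psi(\gamma^\star,\zeta')\}=0$ and then invokes an abstract consistency lemma for Z-estimators (well-separated zero of $\gamma\mapsto\PP\{\psi(\gamma,\zeta')\}$ on a compact parameter space, uniform convergence $\sup_\gamma|\PP\psi(\gamma,\hat\zeta)-\PP\psi(\gamma,\zeta')|=o_P(1)$, and a Glivenko--Cantelli condition on the class $\{\psi(\gamma,\zeta)\}$). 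Your route buys concreteness and a slightly more elementary argument --- linearity in $\gamma$ lets you bypass well-separatedness entirely and reduces everything to the convergence of $T$ scalar (or $p$-vector) averages, at the cost of explicitly assuming $Q=\sum_t\omega(t)f(t)f(t)^\top$ invertible; the paper's route buys generality, since its lemmas are stated so as to apply to estimating equations that are not affine in the parameter. Your cross-fitting remark (Chebyshev in place of Donsker, conditioning on $B_k^c$) also matches the paper, whose cross-fitted regularity conditions indeed drop the Donsker requirement. The only caveat, which you correctly fold into ``regularity conditions,'' is that the $L_2$-continuity of $\zeta_t\mapsto\EE[\phi_t^{ab}(\zeta_t)]$ needs the \emph{fitted} denominators $\hat p_t,\hat q_t$ bounded away from zero, not just the true ones guaranteed by the positivity assumption; the paper absorbs this into its assumed uniform-convergence condition rather than deriving it.
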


\begin{thm}[Asymptotic normality of $\hat\gamma$ and $\tilde\gamma$]
    \label{thm:estimator-normality}
    \spacingset{1.5}
    Suppose \cref{asu:causal-assumptions} hold and consider $\gamma^\star = ((\alpha^\star)^\top, (\beta^\star)^\top)^\top$ defined in \eqref{eq:alphabeta-star-def}. For each $t \in [T]$, suppose that the fitted nuisance function $\hat\zeta_t$ satisfy the following rates:
    \begin{align}
        \|\hat{p}_t - p_t^\star\| \cdot \big( \|\hat\eta_t - \eta_t^\star\| + \|\hat\nu_t - \nu_t^\star\| \big) = o_P(n^{-\frac{1}{2}}) \text{ and } \|\hat{q}_t - q_t^\star\| \cdot \|\hat\mu_t - \mu_t^\star\| = o_P(n^{-\frac{1}{2}}). \label{eq:nuisance-product-rate}
    \end{align}
    Let $\zeta'$ denote the $L_2$-limit of $\hat\zeta$. Under regularity conditions, we have $\sqrt{n} (\hat\gamma - \gamma^\star) \dto N(0, V)$ as $n\to\infty$, with 
    $V := [\EE\{ \partial_\gamma \psi(\gamma^\star, \zeta') \}]^{-1} ~ \EE\{ \psi(\gamma^\star, \zeta') \psi(\gamma^\star, \zeta')^\top\} ~ [\EE\{ \partial_\gamma \psi(\gamma^\star, \zeta') \}]^{-1 \top}$.
    $V$ can be consistently estimated by $[\PP_n\{ \partial_\gamma \psi(\hat\gamma, \hat\zeta) \}]^{-1} ~ \PP_n\{ \psi(\hat\gamma, \hat\zeta) \psi(\hat\gamma, \hat\zeta)^\top\} ~ [\PP_n\{ \partial_\gamma \psi(\hat\gamma, \hat\zeta) \}]^{-1 \top}$.

    For the estimator $\tilde\gamma$ that uses cross-fitting, assume that $\hat\zeta_k$ converges to the same limit $\zeta'$ in $L_2$ for all $k\in[K]$. For each $t \in [T]$, suppose that \eqref{eq:nuisance-product-rate} with $(\hat{p}_t, \hat{q}_t, \hat\eta_t, \hat\mu_t, \hat\nu_t)$ replaced by $(\hat{p}_{kt}, \hat{q}_{kt}, \hat\eta_{kt}, \hat\mu_{kt}, \hat\nu_{kt})$ is satisfied for all $k\in[K]$. Under regularity conditions, we have $\sqrt{n} (\tilde\gamma - \gamma^\star) \dto N(0, V)$ as $n\to\infty$, and $V$ can be consistently estimated by 
    \begin{align*}
        \bigg[\frac{1}{K}\sum_{k=1}^K \PP_{n,k}\big\{ \partial_\gamma \psi(\hat\gamma, \hat\zeta_k) \big\} \bigg]^{-1} \bigg[\frac{1}{K}\sum_{k=1}^K \PP_{n,k} \big\{ \psi(\hat\gamma, \hat\zeta_k) \psi(\hat\gamma, \hat\zeta_k)^\top \big\} \bigg]
        \bigg[\frac{1}{K}\sum_{k=1}^K \PP_{n,k}\big\{ \partial_\gamma \psi(\hat\gamma, \hat\zeta_k) \big\} \bigg]^{-1 \top},
    \end{align*}
    where $\PP_{n,k}$ denotes the empirical average over observations from the $k$-th partition $B_k$.
\end{thm}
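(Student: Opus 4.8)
The plan is to treat $\hat\gamma$ and $\tilde\gamma$ as $Z$-estimators with first-stage-estimated nuisances, exploiting two structural features: the estimating function $\psi(\gamma;\zeta)$ in \eqref{eq:def-psi} is \emph{linear} in $\gamma$, and its building blocks are the efficient influence functions of \cref{thm:eif}, which are Neyman-orthogonal in $\zeta$. First I would exploit the linearity. Since $\gamma=((\alpha)^\top,(\beta)^\top)^\top$ enters $\psi$ only through the terms $-f(t)^\top\alpha$ and $-f(t)^\top\beta$, the Jacobian $\partial_\gamma\psi(\gamma;\zeta)=-M$ is a deterministic, block-diagonal matrix with both diagonal blocks equal to $\sum_{t=1}^T\omega(t)f(t)f(t)^\top$, independent of $\gamma$ and $\zeta$. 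Assuming $M$ is invertible (a regularity condition on $f$ and $\omega$), the estimating equation $\PP_n\psi(\hat\gamma;\hat\zeta)=0$ rearranges \emph{exactly}, with no Taylor remainder in $\gamma$, to
\begin{align*}
    \sqrt{n}(\hat\gamma-\gamma^\star)=M^{-1}\,\sqrt{n}\,\PP_n\psi(\gamma^\star;\hat\zeta).
\end{align*}

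It then remains to analyze $\sqrt{n}\,\PP_n\psi(\gamma^\star;\hat\zeta)$, which I would split into the oracle term $\sqrt{n}\,\PP_n\psi(\gamma^\star;\zeta^\star)$, an empirical-process term $\sqrt{n}(\PP_n-P)\{\psi(\gamma^\star;\hat\zeta)-\psi(\gamma^\star;\zeta^\star)\}$, and a bias term $\sqrt{n}\,P\{\psi(\gamma^\star;\hat\zeta)-\psi(\gamma^\star;\zeta^\star)\}$. The oracle term obeys the central limit theorem, converging to $N(0,\Sigma)$ with $\Sigma:=\EE\{\psi(\gamma^\star;\zeta^\star)\psi(\gamma^\star;\zeta^\star)^\top\}$; here $\EE\{\psi(\gamma^\star;\zeta^\star)\}=0$ because \cref{thm:eif} gives $\EE\{\phi_t^{ab}(\zeta^\star)\}=\theta_t^{ab}$, so the two blocks reduce to $\sum_t\omega(t)\{\ndeep_t-f(t)^\top\alpha^\star\}f(t)$ and $\sum_t\omega(t)\{\nieet_t-f(t)^\top\beta^\star\}f(t)$, which vanish by the normal equations defining $\gamma^\star$ in \eqref{eq:alphabeta-star-def}. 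Under the stated product rates the nuisance $L_2$-limit satisfies $\zeta'=\zeta^\star$, so this $\Sigma$ coincides with the matrix in the statement.

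The empirical-process term I would show is $o_P(1)$. For the cross-fitted $\tilde\gamma$ this is the clean cross-fitting argument: conditioning on the training fold $B_k^c$ renders $\psi(\gamma^\star;\hat\zeta_k)$ a fixed function, so the increment over the independent fold $B_k$ has conditional mean zero and conditional variance bounded by $\|\psi(\gamma^\star;\hat\zeta_k)-\psi(\gamma^\star;\zeta^\star)\|_{L_2}^2$, which vanishes by nuisance $L_2$-consistency together with the positivity lower bound of \cref{asu:positivity} controlling the denominators; a conditional Chebyshev inequality then yields $o_P(1)$ fold-by-fold. For the non-cross-fitted $\hat\gamma$ I would instead invoke a Donsker / stochastic-equicontinuity regularity condition on the nuisance class, combined with the same $L_2$-consistency.

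The crux is the bias term, where the EIF structure is essential. Using iterated expectations and the definitions \eqref{eq:def-phi-aa}--\eqref{eq:def-phi-ab}, I would derive the mixed-bias expansions
\begin{align*}
    P\phi_t^{aa}(p_t,\eta_t)-\theta_t^{aa}=\EE\Big[\tfrac{p_t-p_t^\star}{p_t}\,(\eta_t-\eta_t^\star)\Big],
\end{align*}
together with the analogous identity for $\phi_t^{ab}$, whose remainder factorizes into products of $(p_t-p_t^\star)$ with $(\nu_t-\nu_t^\star)$ and of $(q_t-q_t^\star)$ with $(\mu_t-\mu_t^\star)$, the first-order perturbations cancelling by Neyman orthogonality. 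Evaluating at $\hat\zeta$ and applying Cauchy--Schwarz with the positivity constant $c$ bounding the denominators, each contribution is dominated by the products in \eqref{eq:nuisance-product-rate} and is therefore $o_P(n^{-1/2})$, so $\sqrt{n}$ times the total bias is $o_P(1)$. Combining the three pieces gives $\sqrt{n}\,\PP_n\psi(\gamma^\star;\hat\zeta)\dto N(0,\Sigma)$ and hence $\sqrt{n}(\hat\gamma-\gamma^\star)\dto N(0,\,M^{-1}\Sigma\,M^{-1\top})=N(0,V)$; the cross-fitted case is identical after averaging the fold-wise conditional arguments. Consistency of $\hat V$ follows from $\PP_n\{\partial_\gamma\psi(\hat\gamma,\hat\zeta)\}\pto -M$ (immediate, as the Jacobian is deterministic) and $\PP_n\{\psi(\hat\gamma,\hat\zeta)\psi(\hat\gamma,\hat\zeta)^\top\}\pto\Sigma$, using $\hat\gamma\pto\gamma^\star$ from \cref{thm:estimator-consistency} and nuisance consistency, with the cross-fitted variance handled fold-by-fold. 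I expect the factorization of the $\phi_t^{ab}$ bias—verifying that the first-order nuisance terms cancel and that the remainder matches exactly the product structure of \eqref{eq:nuisance-product-rate}—to be the main obstacle, since it requires careful bookkeeping across the five nuisance functions and the cross-world conditioning.
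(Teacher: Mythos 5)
Your overall architecture matches the paper's: an exact/linearized $Z$-estimator expansion, a three-way split of $\sqrt{n}\,\PP_n\psi(\gamma^\star;\hat\zeta)$ into an oracle CLT term, an empirical-process term (Donsker for $\hat\gamma$, fold-wise conditioning for $\tilde\gamma$), and a mixed-bias term controlled by the product rates; your mixed-bias identity for $\phi_t^{aa}$ is correct and is essentially the paper's Lemma on rate double robustness. Your observation that $\partial_\gamma\psi$ is a deterministic block-diagonal matrix, so the expansion in $\gamma$ is exact with no Taylor remainder, is a genuine (and correct) simplification over the paper's mean-value-theorem step.

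There is, however, one genuine gap: your claim that ``under the stated product rates the nuisance $L_2$-limit satisfies $\zeta'=\zeta^\star$'' is false, and the whole point of centering the theorem's variance at $\zeta'$ rather than $\zeta^\star$ is to cover the case where it fails. The condition $\|\hat p_t-p_t^\star\|\cdot(\|\hat\eta_t-\eta_t^\star\|+\|\hat\nu_t-\nu_t^\star\|)=o_P(n^{-1/2})$ is satisfied, for example, when $\hat p_t$ converges to some $p_t'\neq p_t^\star$ while $\hat\eta_t,\hat\nu_t$ converge to the truth at rate $o_P(n^{-1/2})$; the product rates only force $\zeta'$ into one of the four multiple-robustness configurations, not into $\zeta'=\zeta^\star$. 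With your centering at $\zeta^\star$, two steps break in that regime: the empirical-process term $\sqrt{n}(\PP_n-P)\{\psi(\gamma^\star;\hat\zeta)-\psi(\gamma^\star;\zeta^\star)\}$ is no longer $o_P(1)$ because $\|\psi(\gamma^\star;\hat\zeta)-\psi(\gamma^\star;\zeta^\star)\|_{L_2}$ does not vanish, and the limiting covariance you obtain, $\EE\{\psi(\gamma^\star;\zeta^\star)\psi(\gamma^\star;\zeta^\star)^\top\}$, is not the $\EE\{\psi(\gamma^\star;\zeta')\psi(\gamma^\star;\zeta')^\top\}$ appearing in the statement. The repair is exactly what the paper does: center every decomposition at $\zeta'$, note that $P\psi(\gamma^\star;\zeta')=0$ follows from the multiple-robustness lemma because the product rates guarantee $\zeta'$ satisfies one of the four configurations, and observe that the bias term $P\{\psi(\gamma^\star;\hat\zeta)-\psi(\gamma^\star;\zeta')\}$ still factorizes into the same nuisance-error products (with cross terms involving the limits $p',q',\mu',\nu'$ vanishing under each configuration). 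Your bias computation survives this re-centering essentially unchanged; your oracle and empirical-process steps need $\zeta^\star$ replaced by $\zeta'$ throughout. As written, your argument proves the theorem only in the special case where every nuisance estimator is consistent for the truth.
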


The regularity conditions and the proofs are provided in Supplementary Material \ref{A-sec:proof-asymptotic}.

\begin{rmk}[Multiple robustness]
    \label{rmk:estimator-robustness}
    \normalfont
    The estimators $\hat\gamma$ and $\tilde\gamma$ are multiply-robust: they remain consistent if one of four specific pairs of the four working models are correctly specified (\cref{thm:estimator-consistency}). \cref{thm:estimator-normality} further shows that they are  rate-multiply-robust \citep{smucler2019unifying}, requiring each nuisance parameter to converge at only $o_P(n^{-1/4})$ rate to ensure $o_P(n^{-1/2})$ convergence of their products. Many data-adaptive prediction algorithms, such as generalized additive models and the highly adaptive lasso, can achieve this rate under mild conditions \citep{wood2017generalized,benkeser2016highly}. Furthermore, the asymptotic variance $V$ depends only on the $L_2$-limits of the fitted nuisance functions and does not depend on the specific prediction algorithm used, because the bias term is negligible due to rate-multiple-robustness \citep{kennedy2016semiparametric}. Lastly, the convergence condition \eqref{eq:nuisance-product-rate} implies that the consistency conditions in \cref{thm:estimator-consistency} are satisfied. \qed
\end{rmk}

In MRTs, the known randomization probabilities $p_t^\star$ yield a stronger form of robustness.

\begin{cor}[Double robustness of $\hat\gamma$ and $\tilde\gamma$ with known $p_t^\star$]
    \label{cor:normality-known-pt}
    \spacingset{1.5}
    Suppose \cref{asu:causal-assumptions} hold and consider $(\alpha^\star, \beta^\star)$ defined in \eqref{eq:alphabeta-star-def}. Suppose the treatment assignment mechanism $p_t^\star$ is known for $t \in [T]$ (such as in an MRT), and that $\hat{p}_t$ in Algorithms \ref{algo:estimator-ncf} and \ref{algo:estimator-cf} is replaced with the known $p_t^\star$. For each $t \in [T]$, suppose that the fitted nuisance functions satisfy $\|\hat{q}_t - q_t^\star\| \cdot \|\hat\mu_t - \mu_t^\star\| = o_P(n^{-1/2})$. Then \cref{thm:estimator-normality} conclusions hold.
\end{cor}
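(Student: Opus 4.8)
The plan is to recognize \cref{cor:normality-known-pt} as a direct specialization of \cref{thm:estimator-normality}, so that the entire task reduces to verifying that the latter's hypotheses hold under the corollary's weaker assumptions. The one structural change is that the known randomization probability $p_t^\star$ is substituted for the fitted $\hat p_t$ in Algorithms \ref{algo:estimator-ncf} and \ref{algo:estimator-cf}. The key observation I would exploit is that this substitution makes $\|\hat p_t - p_t^\star\| = 0$ exactly, not merely $o_P(1)$, for every $t \in [T]$, since the plugged-in quantity equals the truth identically.

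With this observation, I would check the two product-rate conditions in \eqref{eq:nuisance-product-rate}. The first, $\|\hat p_t - p_t^\star\| \cdot ( \|\hat\eta_t - \eta_t^\star\| + \|\hat\nu_t - \nu_t^\star\| )$, vanishes identically because its leading factor is zero; under the regularity conditions the remaining factors converge in $L_2$ and are $O_P(1)$, so the product is $0 = o_P(n^{-1/2})$. Crucially, this holds whether or not $\hat\eta_t$ and $\hat\nu_t$ converge to the truth, so no conditions on these two nuisances are needed. The second condition, $\|\hat q_t - q_t^\star\| \cdot \|\hat\mu_t - \mu_t^\star\| = o_P(n^{-1/2})$, is exactly what the corollary assumes. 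Hence both conditions in \eqref{eq:nuisance-product-rate} are satisfied, and \cref{thm:estimator-normality} applies to yield $\sqrt{n}(\hat\gamma - \gamma^\star) \dto N(0, V)$ together with the stated consistent sandwich estimator of $V$, where $V$ is evaluated at the realized $L_2$-limit $\zeta'$ (which may have $\eta_t' \neq \eta_t^\star$ or $\nu_t' \neq \nu_t^\star$).

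For the cross-fitted estimator $\tilde\gamma$, I would run the identical argument fold-by-fold: substituting $p_t^\star$ for $\hat p_{kt}$ gives $\|\hat p_{kt} - p_t^\star\| = 0$ for all $k \in [K]$, the per-fold first product vanishes, and the per-fold second condition is assumed, so the cross-fitting conclusion of \cref{thm:estimator-normality} transfers directly. To justify the ``double robustness'' label, I would also note that an $o_P(n^{-1/2})$ product forces the product of the limiting errors to be zero, i.e.\ $\|q_t' - q_t^\star\| \cdot \|\mu_t' - \mu_t^\star\| = 0$, so at least one of $q_t' = q_t^\star$ or $\mu_t' = \mu_t^\star$ holds; combined with $p_t' = p_t^\star$ this meets consistency condition (i) or (ii) of \cref{thm:estimator-consistency}, hence only one of the mediator-related models $q_t, \mu_t$ must be correctly specified for consistency, as anticipated in \cref{rmk:estimator-robustness}.

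The step I expect to require the most care, rather than being a genuine obstacle, is confirming that the asymptotic variance is correctly given by the sandwich $V$ at $\zeta'$ even though $\hat\eta_t$ and $\hat\nu_t$ may converge to incorrect limits. This reduces to checking that the estimating function remains population-unbiased at such $\zeta'$, i.e.\ $\EE\{\psi(\gamma^\star; \zeta')\} = 0$. With $p_t^\star$ known, the augmentation terms carrying $\eta_t$ and $\nu_t$ have mean-zero multipliers (for instance $\EE[\{\indic(A_t = d_t^b)/p_t^\star(b \mid H_t)\} \mid H_t] = 1$ makes the $\nu_t$ contribution cancel), so these nuisances drop out of the population mean and unbiasedness follows from the $q_t$/$\mu_t$ robustness established above. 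Since this is precisely the Neyman-orthogonality structure underlying \cref{thm:estimator-normality}, no new machinery is required beyond this bookkeeping.
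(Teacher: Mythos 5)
Your proposal is correct and follows exactly the route the paper intends: with $\hat p_t$ replaced by the known $p_t^\star$, the factor $\|\hat p_t - p_t^\star\|$ vanishes identically, so the first product-rate condition in \eqref{eq:nuisance-product-rate} holds trivially, the second is assumed, and \cref{thm:estimator-normality} applies verbatim to both $\hat\gamma$ and $\tilde\gamma$. Your closing observations — that the limiting product being zero forces condition (i) or (ii) of \cref{thm:estimator-consistency}, and that $\EE\{\psi(\gamma^\star;\zeta')\}=0$ even with misspecified $\eta_t',\nu_t'$ — are exactly what the paper's multiple-robustness lemmas establish, so no gap remains.
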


Parallel methods and asymptotic theory for $\ndeet_t$ and $\nieep_t$ are straightforward. One simply revises the estimating function $\psi$ by replacing $\phi_t^{10}$ and $\phi_t^{00}$ in the first row of \eqref{eq:def-psi} with $\phi_t^{11}$ and $\phi_t^{01}$, and replacing $\phi_t^{11}$ and $\phi_t^{10}$ in the second row of \eqref{eq:def-psi} with $\phi_t^{01}$ and $\phi_t^{00}$. The asymptotic theory for for $\ndeet_t$ and $\nieep_t$ holds with the newly defined $\psi$.


\section{Simulations}
\label{sec:simulation}

We conduct two simulation studies to evaluate the proposed estimator. The first verifies the convergence rate condition \eqref{eq:nuisance-product-rate} and the rate-multiple-robustness by correctly specifying all nuisance models and precisely controlling their convergence rates. The second assesses estimator performance in a more realistic setting with misspecified nuisance models. 

\subsection{Simulation Study 1: Verifying Rate Multiple Robustness}
\label{subsec:simulation-dgm_simple}

We describe GM-1, the generative model used in the first simulation study. We set $T = 5$. Variables $(X_t, I_t, A_t, M_t)_{1 \leq t \leq T}$ and the distal outcome $Y$ are generated sequentially. Each $X_t \sim N(0, \sigma^2_X)$, and $I_t = 1$. Treatment $A_t$ and mediator $M_t$ are both binary, with $(A_t, M_t) \mid H_t$ following a multinomial distribution with $P(A_t = a, M_t = m \mid H_t) = s_{am} / s$, where $s_{00} := 1$, $s_{10} := \exp\{\kappa_1 + h_1(t,X_t)\}$, $s_{01} := \exp\{\kappa_2 + h_2(t,X_t)\}$, $s_{11} := \exp\{\kappa_0 + \kappa_1 + \kappa_2 + h_1(t,X_t) + h_2(t,X_t)\}$, and $s := s_{00} + s_{10} + s_{01} + s_{11}$. The distal outcome $Y \sim N\Big(\sum_{t=1}^T (\xi_t X_t + \rho_t M_t + \lambda_t A_t + \tau_t A_t M_t), \sigma^2_Y\Big)$. Parameters are set as $\sigma_X = \sigma_Y = 2$, $\kappa_0 = 2$, $\kappa_1 = \kappa_2 = -1.5$, $\xi_t = \rho_t = \lambda_t = \tau_t = 0.5 + 0.25(t-1)/T$. The nonlinear $h_1(t, X_t)$ and $h_2(t, X_t)$ are:
\begin{align*}
    h_1(t, X_t) := \frac{g_{2,5}(t/T) + g_{2,5}(\expit(X_t))}{2}, \quad h_2(t, X_t) = \frac{g_{5,2}(t/T) + g_{5,2}(\expit(X_t))}{2},
\end{align*}
where $g_{2,5}$ and $g_{5,2}$ are $\text{Beta}(2,5)$ and $\text{Beta}(5,2)$ densities. With $f(t) = 1$ and $\omega(t) = 1/T$, the true marginal estimands are $\alpha^\star = 1.381$ and $\beta^\star = 0.822$. We derive closed-form expressions for the true nuisance functions $p_t^\star$, $q_t^\star$, $\eta_t^\star$, $\mu_t^\star$, and $\nu_t^\star$ (see Supplementary Material \ref{A-subsec:deriving_the_true_nuisance_functions_for_gm_1}).

We use perturbation parameters $(r_p, r_q, r_\eta, r_\mu, r_\nu)$ to control the convergence rates of the nuisance estimators to their truths. Specifically, the estimator for each nuisance function equals the true function multiplied by a random variable following $\text{Uniform}[1-n^{-r}, 1]$, where $r$ is the perturbation parameter for that nuisance function. For example, with $U_n \sim \text{Uniform}[1-n^{-r_p}, 1]$, we set $\hat{p}_t(a | H_t) = U_n ~p_t^\star(a | H_t)$, and this yields $\|\hat{p}_t - p_t^\star\| \leq \|U_n - 1\| \cdot \|p_t^\star\| = O_P(n^{-r_p})$. We set $r_p = r_\eta = r_\mu = r_1$ and $r_q = r_\mu = r_2$, and we vary $r_1, r_2 \in \{0.1, 0.2, 0.3, 0.4, 0.5\}$. By \cref{thm:estimator-normality}, $\hat\alpha$ and $\hat\beta$ are asymptotically normal when $r_1 > 0.25$ and $r_2 > 0.25$. 

\begin{figure}[htbp]
    \includegraphics[width = 0.49\textwidth]{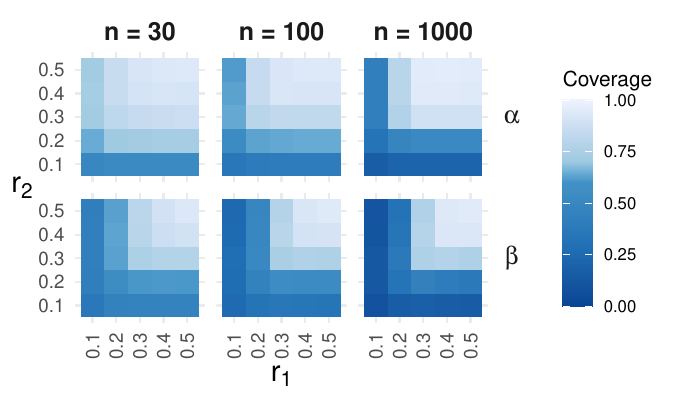}
    \includegraphics[width = 0.49\textwidth]{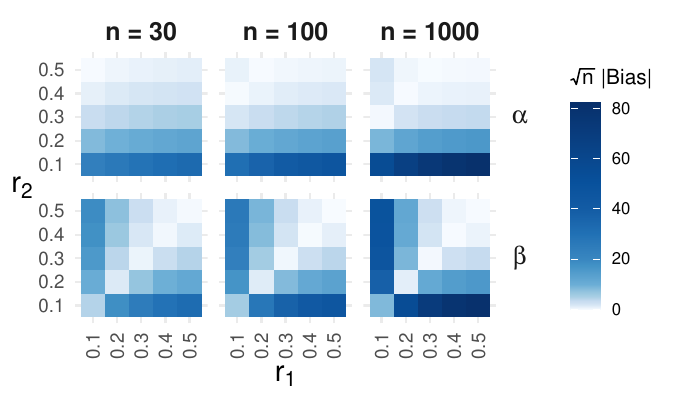}
    \caption{\footnotesize   Simulation results under GM-1: Coverage of 95\% confidence intervals (left) and $\sqrt{n}~|\text{Bias}|$ (right) for $\hat\alpha$ and $\hat\beta$ under various convergence rates of the nuisance parameters, $r_1$ and $r_2$.}
    \label{fig:simulation-1}
\end{figure}

\cref{fig:simulation-1} shows the coverage of 95\% confidence intervals and $\sqrt{n}~|\text{Bias}|$ for $\hat\alpha$ and $\hat\beta$ across different values of $r_1$ and $r_2$. The covarage is nominal when $r_1, r_2 \geq 0.4$ (top right $2\times 2$ tiles in each panel of the coverage plot) and near nominal when $r_1, r_2 \geq 0.3$. The latter suggests that larger sample sizes may be needed when nuisance functions converge slowly. The bias plot, though noisier, shows a similar pattern: $\sqrt{n}~|\text{Bias}|$ stabilizes with increasing $n$ for $r_1, r_2 \geq 0.3$ (top right $3\times 3$ tiles in each panel of the bias plot), demonstrating $\sqrt{n}$-rate convergence of the estimators which validates \cref{thm:estimator-normality}. A detailed version of \cref{fig:simulation-1}, along with additional results using $f(t) = (1,t)^\top$, is provided in Supplementary Material \ref{A-subsec:additional_results_from_simulation_study_1}.

\subsection{Simulation Study 2: Complex and Realistic MRT Setting}
\label{subsec:simulation-dgm_complex}

The second simulation uses a more complex generative model, GM-2, designed to mimic a MRT with serial dependence and nontrivial eligibility (i.e., $I_t$ not always 1). We set $T = 30$. Variables $(X_t, I_t, A_t, M_t)_{1 \leq t \leq T}$ and the distal outcome $Y$ are generated sequentially as follows:
\begin{align}
    X_t & \sim N\Big(0.3 X_{t-1} + 0.2 A_{t-1} + 0.2 M_{t-1}, 1\Big), \nonumber\\
    I_t & \sim \bern\Big\{\expit(1.5 - 0.3 A_{t-1} - 0.3 M_{t-1} + 0.3 X_t)\Big\}, \nonumber\\
    A_t & \sim \bern\Big[I_t ~ \expit\big\{0.2 A_{t-1} + 0.2 h_3(t, M_{t-1}) + 0.3 h_3(t, X_t)\big\}\Big], \label{eq:GM2-true-p}\\
    M_t & \sim N\Big\{0.4 A_{t-1} + 0.4 h_3(t, M_{t-1}) + 0.3 h_3(t, X_t) + 0.6 A_t, 1\Big\}, \nonumber\\
    Y & \sim N\Big[\sum_{t=1}^T \big\{0.3 h_3(t, X_t) + 0.4 h_3(t, M_t) + 0.2 A_t + 0.1 A_t h_3(t, M_t)\big\}, 1\Big]. \nonumber
\end{align}
Here, $h_3(t, z) := \text{tanh} \{3\times(2 t - T) / T\} + \text{sin}(z)$ is nonlinear in both arguments. With $f(t) = 1$ and $\omega(t) = 1/T$, the true marginal estimands are $\alpha_0^\star = 0.285$, $\beta_0^\star = 0.121$. With $f(t) = (1,t)^\top$ and $\omega(t) = 1/T$, the true estimands are $\alpha_1^\star = 0.247, \beta_1^\star = 0.253$ (intercepts) and $\alpha_2^\star = 0.002, \beta_2^\star = -0.008$ (slopes for $t$).

Under GM-2, $p_t^\star$ is given by \eqref{eq:GM2-true-p}. $q_t^\star$ is a nonlinear function of $(t, A_{t-1}, M_{t-1}, X_t, M_t)$ with no closed form. The other three nuisance functions are complex, history-dependent functions with no closed forms: $\eta_t^\star$ and $\nu_t^\star$ depend on $t$ and all of $H_t$, and $\mu_t^\star$ depends on $(t, M_t)$ and all of $H_t$. Because $q_t^\star$ only depends on a small subset of variables, it can be reasonably approximated with generalized additive models. In contrast, the full-history dependence of $\eta_t^\star$, $\mu_t^\star$, and $\nu_t^\star$ makes their correct specification infeasible when $T = 30$.

To capture these nuances, we simulate four scenarios under GM-2 (\cref{tab:simulation-2-scenarios}), varying the working models for $q_t$ and $\mu_t$. Model specification is color-coded: green (correct), light red (moderate misspecification), and dark red (severe misspecification). Mimicking an MRT, we use the known truth $p_t^\star$ in all scenarios. Because the estimator's robustness with known $p_t^\star$ depends on $q_t$ and $\mu_t$ (\cref{cor:normality-known-pt}), we vary their specifications accordingly. For clarity, we fix the other two nuisance functions $\eta_t = \nu_t = 0$ (severely misspecified) across all scenarios. \cref{cor:normality-known-pt} guarantees good performance in Scenarios 1 and 3.

\begin{table}[tb]
    \spacingset{1.5}
    \scriptsize
    \centering
    \begin{tabular}{c|ccccc}
    \hline\hline
    \textbf{Sce.} & Model for $p_t$ & Model for $q_t$ & Model for $\mu_t$ & Model for $\eta_t$ & Model for $\nu_t$ \\
    \hline
    1 & \cellcolor{NatureLightGreen} use true value $p_t^\star$
      & \cellcolor{NatureLightGreen} $s(t)\!+\!A_{t-1}\!+\!s(M_{t-1})\!+\!s(X_t)\!+\!s(M_t)$
      & \cellcolor{NatureLightRed} $s(t)\!+\!s(X_t)\!+\!s(M_t)$
      & \cellcolor{NatureMidRed} set to 0
      & \cellcolor{NatureMidRed} set to 0 \\

    2 & \cellcolor{NatureLightGreen} use true value $p_t^\star$
      & \cellcolor{NatureMidRed} $s(t)$
      & \cellcolor{NatureLightRed} $s(t)\!+\!s(X_t)\!+\!s(M_t)$
      & \cellcolor{NatureMidRed} set to 0
      & \cellcolor{NatureMidRed} set to 0 \\

    3 & \cellcolor{NatureLightGreen} use true value $p_t^\star$
      & \cellcolor{NatureLightGreen} $s(t)\!+\!A_{t-1}\!+\!s(M_{t-1})\!+\!s(X_t)\!+\!s(M_t)$
      & \cellcolor{NatureMidRed} $s(t)$
      & \cellcolor{NatureMidRed} set to 0
      & \cellcolor{NatureMidRed} set to 0 \\

    4 & \cellcolor{NatureLightGreen} use true value $p_t^\star$
      & \cellcolor{NatureMidRed} $s(t)$
      & \cellcolor{NatureMidRed} $s(t)$
      & \cellcolor{NatureMidRed} set to 0
      & \cellcolor{NatureMidRed} set to 0 \\
    \hline\hline
    \end{tabular}
    \caption{\footnotesize  Four simulation scenarios (Sce.) under GM-2 that differ in how nuisance parameters are estimated. Expressions like $s(t) + A_{t-1} + s(M_{t-1}) + s(X_t)$ denote generalized additive models with penalized spline terms for $t$, $M_{t-1}$, $X_t$, using appropriate link functions (identity for $\mu_t$, logistic for $q_t$). Cells are colored green, light red, or dark red to indicate correctly specified, moderately misspecified, or severely misspecified models, respectively.}
    \label{tab:simulation-2-scenarios}
\end{table}

\cref{fig:simulation-2} displays $\sqrt{n}~|\text{Bias}|$, RMSE (root mean squared error), ASE/SD (ratio of average standard error to empirical standard deviation), and the coverage of 95\% confidence intervals. Scenarios 1 and 3 perform well across all metrics as expected. Scenario 2 shows slight overcoverage and ASE/SD slightly greater than 1 due to misspecification of both $q_t$ and $\mu_t$. Scenario 4 performs poorly because both $q_t$ and $\mu_t$ are severely misspecified. Although $q_t$ and $\mu_t$ are misspecified in both Scenarios 2 and 4, Scenario 2 performs better because its $\mu_t$ model includes more relevant variables.

\begin{figure}[htbp]
    \spacingset{1.5}
    \centering
    \includegraphics[width = \textwidth]{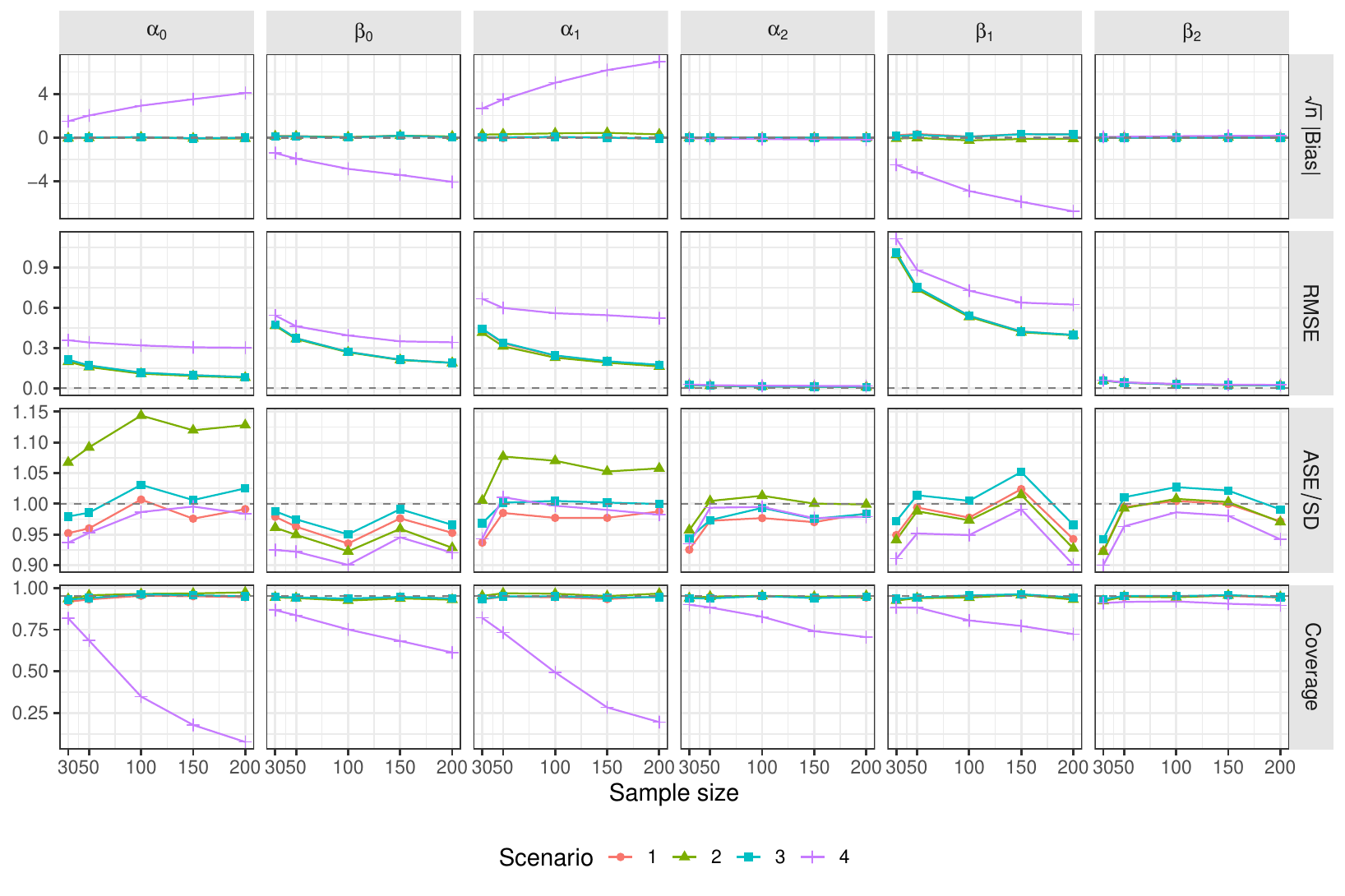}
    \caption{\footnotesize Simulation results under GM-2: $\sqrt{n}~|\text{Bias}|$, RMSE (root mean squared error), ASE/SD (ratio of average standard error to empirical standard deviation), and the coverage of 95\% confidence intervals across the four model specification scenarios in \cref{tab:simulation-2-scenarios}.}
    \label{fig:simulation-2}
\end{figure}

\section{Micro-Randomized Trial Application: HeartSteps I}
\label{sec:heartsteps}

\subsection{HeartSteps I: Background and Methods}

We analyze data from HeartSteps I, the first in a sequence of HeartSteps MRTs, for developing mHealth interventions to promote sustained physical activity in sedentary adults \citep{klasnja2015microrandomized}. We focus on the micro-randomized activity suggestions intervention. A total of $n = 37$ participants were enrolled for six weeks, with five pre-specified decision points per day (210 decision points total). We used the first five weeks ($T = 175$ decision points) to define the time-varying treatment $A_t$ and mediator $M_t$, reserving Week 6 to define the distal outcome $Y$. Participants were eligible for randomization ($I_t = 1$) if not driving or walking and had stable internet; approximately 80\% of decision points were eligible.
At each eligible decision point, a suggestion ($A_t = 1$) encouraging brief activity was delivered with probability 0.6; otherwise, no suggestion ($A_t = 0$) was given. Step counts were continuously recorded via wristband trackers, and the mediator $M_t$ is defined as the step count in the 30 minutes following decision point $t$. The distal outcome $Y$ is the average daily step count during Week 6, serving as a proxy measure for sustained physical activity habits. We discuss implications of this outcome choice later in the section.

Researchers hypothesized that activity suggestions ($A_t$) would boost immediate physical activity ($M_t$) and, in turn, promote long-term habits. While prior work has examined the effects of $A_t$ on $M_t$ \citep{klasnja2018efficacy} and $A_t$ on $Y$ \citep{qian2025distal}, the mediated pathway from $A_t$ through $M_t$ to $Y$ remains unexplored, and we address it in this analysis.

We conducted three causal mediation analyses using different specifications of $f(t)$: (i) a constant $f(t)=1$ (average effect); (ii) a linear $f(t)=(1, t-1)^\top$ (linear moderation by time); and (iii) a B-spline basis with 6 degrees of freedom (nonlinear moderation). Choices (ii) and (iii) were motivated by prior evidence that the effect of $A_t$ on $M_t$ declines over time \citep{klasnja2018efficacy}, and the term $t-1$ in (ii) improves interpretability. Given the known randomization probability in HeartSteps MRT, we used the true $p_t^\star$ and estimated nuisance functions ($\hat{q}_t, \hat\eta_t, \hat\mu_t, \hat\nu_t$) using generalized additive models (\texttt{gam} from the R package \texttt{mgcv} \citep{wood2017generalized}). For $\hat\eta_t$ and $\hat\nu_t$, covariates included a spline on prior 30-minute step count and an indicator for being at home/work. For $\hat{q}_t$ and $\hat\mu_t$, we additionally included a spline on the mediator $M_t$.

\subsection{HeartSteps I: Analysis Results}

In Analysis (i) with $f(t)=1$, we estimated an indirect effect of $\hat\beta_0=25$ steps (95\% CI $[-3,53]$) and a direct effect of $\hat\alpha_0=72$ steps (95\% CI $[-110,174]$) (see \cref{tab:heartsteps}). This suggests that, when averaged over Weeks 1--5, the activity suggestions had a relatively small but nearly statistically significant indirect effect---a 25-steps increase in Week 6 daily step count---mediated through increases in the subsequent 30-minute step count. In contrast, the direct effect was larger in magnitude (a 72-step increase) but with greater uncertainty.

In Analysis (ii) with $f(t) = (1,t-1)^\top$, the indirect and direct effects are parameterized by $\beta = (\beta_1, \beta_2)^\top$ and $\alpha = (\alpha_1, \alpha_2)^\top$. The estimated effects were $\hat\beta_1 = 73$ (95\% CI $[26, 120]$), $\hat\beta_2 = -0.5$ (95\% CI $[-1.0, -0.1]$), $\hat\alpha_1 = 208$ (95\% CI $[-124, 539]$), and $\hat\alpha_2 = -2.0$ (95\% CI $[-5.0, 0.9]$). These results suggests a statistically significant indirect effect at the beginning of the study that declines over time, and a larger but more uncertain direct effect that also declines. Notably, both effects are strongest early on, despite being temporally distant from the distal outcome in Week 6. This suggests that early interventions had a greater impact on long-term habit formation---both mediated through short-term physical activity and via direct pathways---compared to those delivered later in the study.

\begin{table}[htbp]
    \spacingset{1.5}
    \centering
    \caption{\footnotesize Estimated mediation effects of activity suggestion on final week average daily step count from HeartSteps I MRT. NDEE: natural direct excursion effect; NIEE: natural indirect excursion effect; SE: standard error; CI: confidence interval.}
    \footnotesize
    \label{tab:heartsteps}
    \begin{tabular}{lcccc}
        \toprule
        & Marginal NDEE & Marginal NIEE & Moderated NDEE by $t$ & Moderated NIEE by $t$ \\
        \midrule
        Coefficient & $\alpha_0$ & $\beta_0$ & $\alpha_1$, $\alpha_2$ & $\beta_1$, $\beta_2$ \\
        Estimate & 31.8 & 24.9 & 207.7, $-2.0$ & 73.0, $-0.6$ \\
        SE & 72.4 & 14.3 & 169.2, 1.5 & 23.9, 0.2 \\
        95\% CI & $(-110.0, -173.7)$ & $(-3.2, 53.0)$ & $(-123.9, 539.4)$, $(-5.0, 0.9)$ & $(26.1, 119.9)$, $(-1.0, -0.12)$ \\
        \bottomrule
    \end{tabular}
\end{table}

In Analysis (iii), we estimated non-linear mediation effects using the spline-based model (\cref{fig:heartsteps-analysis}). The indirect effect (NIEE) remained relatively stable and positive until around the 100th decision point (i.e., mid-study). In contrast, the direct effect (NDEE) rose sharply during the first week, peaking near the 25th decision point (Day 5) at an estimated 400-step increase in the distal outcome, then declined rapidly. These findings suggest that early in the study, the impact of activity suggestions on long-term behavior was primarily through direct pathways, independent of immediate activity. Later in the study, the impact shifted toward indirect pathways via short-term activity increases. This pattern is intuitive: early short-term changes are temporally distant from the distal outcome and less likely to persist, whereas later short-term changes are more likely to translate to sustained behavior. Notably, although the direct effect was generally larger, it showed greater variability. The total effect closely followed the direct effect pattern, while the indirect effect, though smaller, has less uncertainty and remained statistically significant over a longer portion of the study.

\begin{figure}[htbp]
    \spacingset{1.5}
    \centering
    \includegraphics[width = 0.7\textwidth]{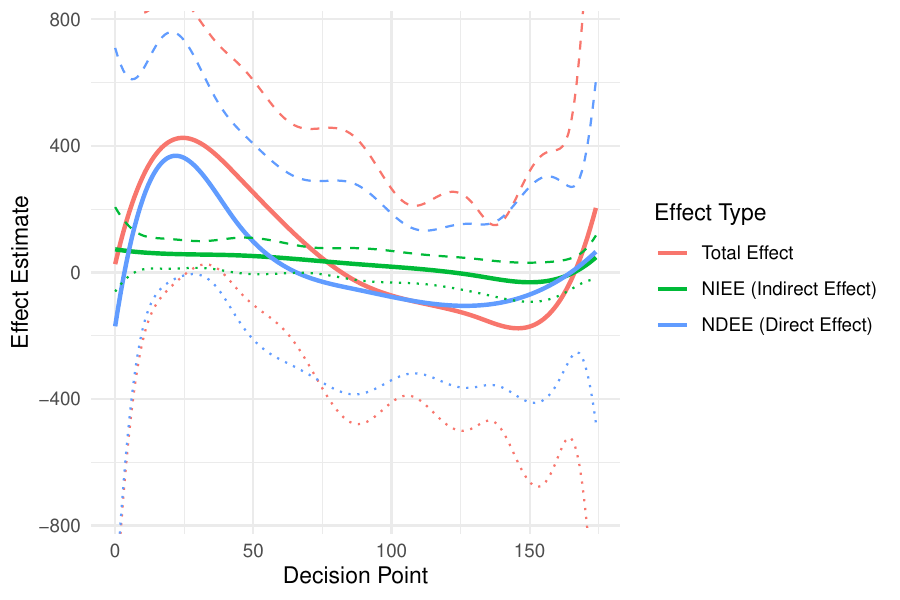}
    \caption{\footnotesize HeartSteps analysis: total effect, NIEE, and NDEE over the $T=175$ decision points. Solid, dashed, and dotted lines in each color represent point estimates, upper and lower bounds of pointwise 95\% confidence interval for the corresponding effect.}
    \label{fig:heartsteps-analysis}
\end{figure}

The mediation analysis findings have implications for designing just-in-time adaptive interventions \citep{nahum2018just}. Strong early direct effects suggest that initial interventions, i.e., those delivered when participants are newly engaged and most receptive, can shape long-term behavior even without immediate activity increases. This supports front-loading more engaging or motivational content to capitalize on heightened receptivity and motivation. In contrast, the sustained indirect effects later in the study suggest that continued delivery of activity suggestions help maintain behavior change through short-term activity boosts. Together, these results suggest a dynamic intervention strategy: intensify early efforts to initiate habit formation, then taper to lighter, well-timed nudges that sustain behavior change.

Lastly, we note that participants continued receiving randomized suggestions in Week 6, which directly influenced the distal outcome. While our analysis remains valid, interpretation of results should account for the ongoing interventions. In other words, had interventions stopped by the end of Week 5, the observed effects on the Week 6 outcomes could differ due to possible interactions between earlier and later interventions.

\section{Observational Study Application: SleepHealth}
\label{sec:sleephealth}

The SleepHealth study was conducted to investigate sleep and daily activity patterns using real-world data collected through wearables and daily self-reported measures \citep{deering2020real}. Participants completed baseline surveys and provided daily self-reported measures on sleep quality and daytime sleepiness, alongside measures collected via Apple HealthKit such as step count. This illustrative analysis examines whether sleep quality affects distal sleepiness, and whether this relationship is mediated by immediate next-day sleepiness.

We restricted our analysis to 280 participants who provided at least six consecutive days of complete data on sleep quality and daytime sleepiness. The distal outcome was daytime sleepiness reported on Day 6. For each of the first five days ($T = 5$), prior-night sleep quality was dichotomized into a treatment indicator $A_t$ (1 if sleep quality score $\geq 4$, indicating ``good'' or ``very good''; 0 otherwise). Daily daytime sleepiness served as a mediator $M_t$, measured on a 1-to-9 scale (1 representing extremely alert and 9 extremely sleepy). Working models for $\hat{p}_t$, $\hat{q}_t$, $\hat\eta_t$, and $\hat\mu_t$ included baseline covariates (demographics, health behaviors, and sleep habits), time-varying daily step counts, and the current $M_t$ in $\hat{q}_t$ and $\hat\mu_t$. Full details on data preprocessing and covariate definitions are provided in Supplementary Material \ref{A-sec:sleephealth}.

Our causal mediation analysis estimated the average-over-five-days direct and indirect excursion effects by setting $f(t)=1$ and $\omega(t) = 1/5$ for $t =1,\ldots,5$. As shown in the first two columns in \cref{tab:sleephealth}, both effects were statistically significant. The estimated indirect effect was $\hat\beta_0 = -0.34$ (95\% CI [–0.48, –0.21]), indicating that better sleep quality reduced distal sleepiness via improvements in next-day sleepiness. The direct effect was similar: $\hat\alpha_0 = -0.31$ (95\% CI [–0.54, –0.08]), suggesting additional pathways beyond immediate next-day sleepiness. The presence of both direct and indirect effects suggest that while short-term sleepiness mediates part of the effect, other mechanisms---such as cumulative sleep, psychological states, or physiological factors---also contribute. A secondary analysis using $f(t) = (1, t - 1)^\top$ found no significant time trends (last two columns in \cref{tab:sleephealth}).

\begin{table}[htbp]
    \spacingset{1.5}
    \centering
    \caption{\footnotesize Estimated mediation effects of sleep quality on daytime sleepiness from SleepHealth Study. NDEE: natural direct excursion effect; NIEE: natural indirect excursion effect; SE: standard error; CI: confidence interval.}
    \footnotesize
    \label{tab:sleephealth}
    \begin{tabular}{lcccc}
        \toprule
        & Marginal NDEE & Marginal NIEE & Moderated NDEE by $t$ & Moderated NIEE by $t$ \\
        \midrule
        Coefficient & $\alpha_0$ & $\beta_0$ & $\alpha_1$, $\alpha_2$ & $\beta_1$, $\beta_2$ \\
        Estimate & $-0.31$ & $-0.34$ & $-0.28$, $-0.01$ & $-0.32$, $-0.01$ \\
        SE       & $0.12$ & $0.07$ & $0.18$, $0.06$ & $0.11$, $0.04$ \\
        95\% CI  & $(-0.54, -0.08)$ & $(-0.48, -0.21)$ & $(-0.65, 0.08)$, $(-0.14, 0.12)$ & $(-0.53, -0.11)$, $(-0.09, 0.06)$ \\
        \bottomrule
    \end{tabular}
\end{table}

\section{Discussion}
\label{sec:discussion}

We studied causal mediation in intensive longitudinal settings with time-varying exposures, time-varying mediators, and a distal outcome, covering both MRTs and observational studies. We introduced the natural (in)direct excursion effects, a novel class of mediation effects that decompose the total effect through the most immediate mediator. The estimands are identifiable under plausible assumptions. We derived the EIF for the mediation functionals and proposed multiply-robust estimators that accommodate flexible machine learning algorithms and optional cross-fitting. In MRTs, the estimators are doubly-robust.

Applying our method to the HeartSteps MRT revealed that early activity suggestions primarily influenced long-term activity through direct pathways, while later effects were increasingly mediated by short-term steps. In the SleepHealth observational study, daily sleep quality impacted long-term alertness both via next-day sleepiness and through other pathways.

Our approach has limitations and points to directions for future work. First, identification of the natural (in)direct excursion effects requires the no intermediate confounding assumption (an implication of \cref{asu:seq-ign-M}). While plausible in MRTs or observational studies where the mediator is measured immediately after treatment, this may be violated when there's a delay or when the mediator is aggregated over a window that includes subsequent treatments, as in some MRTs \citep{battalio2021sense2stop}. In such cases, sensitivity analyses for \cref{asu:seq-ign-M} warrant further exploration. Alternatively, extensions of interventional or path-specific mediation effects to intensive longitudinal settings may yield identifiable estimands under weaker assumptions.

Second, although we decompose the total effect into components mediated through and not through the immediate mediator, we do not further separate more granular pathways within each component. For example, identifying mediation via a mediator $k$ time points into the future would require new methodological development.

Third, because natural (in)direct excursion effects are defined relative to the behavior policy, their interpretation depends on how treatments are assigned in the study. In MRTs, this dependence is often desirable \citep{boruvka2018,dempsey2020stratified}, as it contextualizes the effects within realistic treatment policies. In observational settings, however, this dependence may complicate interpretation if the behavior policy is poorly understood. Future work could examine the sensitivity of these effects to perturbations in the behavior policy.

Fourth, our mediation effects marginalizes over covariates except for time index $t$. Extensions could explore effect modification by time-varying covariates, particularly endogenous ones.

Lastly, extending our approach to accommodate binary or survival distal outcomes and to handle missing data is an important direction for future research.

Code to reproduce all simulations and applications can be downloaded at \url{https://anonymous.4open.science/r/paper_mediationMRT-451F/}.

\section*{Acknowledgement}

The author gratefully acknowledges support from the ICS Research Award at the University of California, Irvine. The author thanks Linda Valeri and Hengrui Cai for their valuable discussions and insights. The author also thanks the support from the coach and group members of the Faculty Success Program by the National Center for Faculty Development and Diversity (NCFDD).

\section*{Supplementary Materials}
\label{sec:supp}

Supplementary Material \ref{A-sec:proof-identification} contains the proof of the identification result (\cref{thm:identification}). Supplementary Material \ref{A-sec:proof-eif} contains the derivation of the EIF (\cref{thm:eif}). Supplementary Material \ref{A-sec:proof-asymptotic} contains the regularity conditions and the proofs for the asymptotic theory (\cref{thm:estimator-consistency,thm:estimator-normality}). Supplementary Material \ref{A-sec:additional_details_and_results_for_simulation} contains additional details and results for the simulation studies (\cref{sec:simulation}). Supplementary Material \ref{A-sec:sleephealth} contains additional details on data preprocessing and covariate definitions for the SleepHealth application (\cref{sec:sleephealth}).

\newpage

\bibliographystyle{agsm}

\bibliography{mhealth-ref}

\newpage

\begin{appendices}

\spacingset{1.9} 

\section{Proof of Identification Result (\texorpdfstring{\cref{thm:identification}}{Theorem 1})}
\label{A-sec:proof-identification}

\subsection{Assumptions for Identification}

Here we restate the identification assumptions.

\begin{asu}[\cref{asu:causal-assumptions} in the main paper]
    \label{A-asu:causal-assumptions}
    \normalfont
    \spacingset{1.5}
    \begin{asulist}
        \item \label{A-asu:consistency} (SUTVA.) There is no interference across participants, and the observed data equals the potential outcome under the observed treatment. Specifically, for each $t$, $X_t(\ba_{t-1}, \bm_{t-1}) = X_t$ if $\bA_{t-1} = \ba_{t-1}$ and $\bM_{t-1} = \bm_{t-1}$, and $M_t(\ba_t, \bm_{t-1}) = M_t$ if $\bA_t = \ba_t$ and $\bM_{t-1} = \bm_{t-1}$. In addition, $Y(\ba_T, \bm_T) = Y$ if $\bA_T = \ba_T$ and $\bM_T = \bm_T$.

        \item \label{A-asu:positivity} (Positivity when eligible.) For each $t$, there exists a positive constant $c > 0$, such that if $P(H_t = h_t, I_t = 1) > 0$ then $c < P(A_t = a \mid H_t = h_t, I_t = 1) < 1-c$ for $a \in \{0, 1\}$. Furthermore, if $P(H_t = h_t, A_t = a) > 0$, then $P(M_t = m \mid H_t = h_t, A_t = a) > 0$ for all $m$ in the support of $M_t$.

        \item \label{A-asu:seq-ign-A} (Sequential ignorability of $A_t$.) For each $t$, the potential outcomes $\big\{M_t(\ba_t)$, $X_{t+1}(\ba_t)$, $M_{t+1}(\ba_{t+1})$, $X_{t+2}(\ba_{t+1})$, $\ldots, X_T(\ba_{T-1})$, $M_T(\ba_T)$, $Y(\ba_T, M_T(\ba_T)): \ba_T \in \{0,1\}^{\otimes T} \big\}$ are conditionally independent of $A_t$ given $H_t$.

        \item \label{A-asu:seq-ign-M} (Sequential ignorability of $M_t$.) For each $t$, for any $a,b \in \{0,1\}$, and for any $m_t$ in the support of $M_t$, we have
        \begin{align*}
			M_t(D_t^b) \perp Y\{D_t^a, \bM_{t-1}(D_t^a), m_t, \uM_{t+1}(D_t^a)\} \mid H_t, A_t = d_t^b
		\end{align*}
    \end{asulist}
\end{asu}

        

\subsection{Lemmas for Identificaftion}

We state and prove a few useful lemmas.

\begin{lem}
	\label{A-lem:causal-consistency}
	Given any $t \in [T]$ and any $a \in \{0,1\}$, if $A_t = d_t^a(H_t)$ then $M_t(D_t^a) = M_t$ and $Y \big\{D_t^a, \bM(D_t^a)\big\} = Y$.
\end{lem}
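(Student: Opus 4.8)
The plan is to show that, on the event $\{A_t = d_t^a(H_t)\}$, the dynamic regime $D_t^a$ reproduces the observed trajectory exactly, after which both conclusions follow immediately from the SUTVA/consistency part of \cref{A-asu:consistency}. The central observation is that $D_t^a$ differs from the behavior policy $D$ only in the time-$t$ rule, and that the conditioning event forces even the time-$t$ assignment to coincide with the observed $A_t$; hence the override is vacuous and nothing downstream is perturbed. No ignorability or positivity is needed, so I would invoke \cref{A-asu:consistency} alone.

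First I would make precise the potential outcomes under a stochastic regime. Represent each stochastic rule $d_s$ as a deterministic map $d_s(H_s; U_s)$ of the history and an exogenous randomization seed $U_s$, and define the regime-induced quantities recursively for a generic policy $\pi=(\pi_1,\ldots,\pi_T)$: $A_s^\pi = \pi_s(H_s^\pi; U_s)$, $M_s^\pi = M_s(\bA_s^\pi, \bM_{s-1}^\pi)$, $X_{s+1}^\pi = X_{s+1}(\bA_s^\pi, \bM_s^\pi)$, with $H_{s+1}^\pi = (H_s^\pi, A_s^\pi, M_s^\pi, X_{s+1}^\pi)$. With the seeds coupled to those that generated the data, the behavior policy $D$ reproduces the observed trajectory, i.e. $A_s^D = A_s$, $M_s^D = M_s$, $X_s^D = X_s$ and $H_s^D = H_s$ for every $s$; this is the usual convention in the excursion-effect framework and is where SUTVA enters.

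Next I would run a forward induction in $s$ to establish $H_s^{D_t^a} = H_s$ (hence $A_s^{D_t^a} = A_s$ and $M_s^{D_t^a} = M_s$) for all $s$. For $s < t$ the regimes $D_t^a$ and $D$ share the rule $d_s$, so applying $d_s$ to the common history (induction hypothesis $H_s^{D_t^a}=H_s$) gives $A_s^{D_t^a}=A_s$, and SUTVA propagates the match to $M_s$ and $X_{s+1}$, hence to $H_{s+1}$. At $s=t$, the induction gives $H_t^{D_t^a}=H_t$, so the override assigns $A_t^{D_t^a}=d_t^a(H_t)$; the conditioning event $A_t=d_t^a(H_t)$ makes this equal the observed $A_t$, and SUTVA then yields $M_t(D_t^a)=M_t(\bA_t,\bM_{t-1})=M_t$, the first conclusion. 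For $s>t$ the regimes again share the rule $d_s$ and the histories still coincide, so the induction continues unbroken through $s=T$; applying SUTVA to $Y$ with $\bA_T^{D_t^a}=\bA_T$ and $\bM_T^{D_t^a}=\bM_T$ gives $Y\{D_t^a,\bM(D_t^a)\}=Y(\bA_T,\bM_T)=Y$, the second conclusion.

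The only real obstacle is the bookkeeping around the stochastic rules: I must ensure that ``shared rule'' genuinely forces ``shared treatment'' at each $s\neq t$, which is exactly what the coupled-seed representation buys us, and that the single override at $t$ does not leak into the downstream histories---guaranteed precisely because the conditioning event makes the overridden assignment coincide with the observed one. Once this equality of trajectories is in hand, the two claimed identities are just two instances of the consistency clause in \cref{A-asu:consistency}.
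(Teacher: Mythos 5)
Your proof is correct and follows essentially the same route as the paper: the paper's own argument simply notes that consistency gives $M_t(\bA_t)=M_t$ and $Y\{\bA,\bM(\bA)\}=Y$, and that on the event $A_t=d_t^a(H_t)$ the policy $D_t^a$ coincides with the observed treatment sequence, which is exactly the trajectory-coincidence fact you establish carefully via the coupled-seed representation and forward induction. Your version just makes explicit the bookkeeping the paper leaves implicit.
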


\begin{proof}[Proof of \cref{A-lem:causal-consistency}]
	By \cref{A-asu:consistency}, $M_t(\bA_t) = M_t$ and $Y \big\{\bA, \bM(\bA)\big\} = Y$. When $A_t = d_t^a(H_t)$, we have $D_t^a = \bA$, and therefore the lemma statement holds. 
\end{proof}

\begin{lem}
	\label{A-lem:id-cond-exp}
	Given any $t \in [T]$, for any $a \neq b \in \{0,1\}$ and any fixed $m_t$ in the support of $M_t(D_t^b)$, we have
	\begin{align*}
		& ~~~~ \EE \Big[ Y \big\{D_t^a, \bM_{t-1}(D_t^a), M_t(D_t^b), \uM_{t+1}(D_t^a) \big\} ~\Big|~ H_t(\bA_{t-1}), M_t(D_t^b) = m_t \Big] \\
		& = \EE(Y \mid H_t, A_t = d_t^a, M_t = m_t).
	\end{align*}
\end{lem}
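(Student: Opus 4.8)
The plan is to show that both sides of the claimed identity reduce to the single quantity $\EE[W \mid H_t]$, where $W := Y\{D_t^a, \bM_{t-1}(D_t^a), m_t, \uM_{t+1}(D_t^a)\}$ is the fixed-mediator cross-world outcome obtained by substituting the constant $m_t$ into the $M_t$-slot of the $Y$-functional. I would use throughout that $H_t(\bA_{t-1}) = H_t$ by \cref{A-asu:consistency}, and that \cref{A-asu:positivity} makes every conditional expectation below well-defined (and guarantees that $m_t$, being in the support of $M_t(D_t^b)$, also lies in the support of $M_t(D_t^a)$). The three tools are consistency (to replace nested counterfactual mediators by the constant $m_t$ and to relate $W$ to the observed $Y$), sequential ignorability of $A_t$ (\cref{A-asu:seq-ign-A}, to insert or delete conditioning on $A_t$), and cross-world independence (\cref{A-asu:seq-ign-M}, to delete conditioning on a counterfactual mediator value), the latter applied with both $b=a$ and $b \neq a$.

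For the right-hand side, I would first note that on the event $\{A_t = d_t^a, M_t = m_t\}$, \cref{A-lem:causal-consistency} gives $M_t(D_t^a) = M_t = m_t$ and $Y = Y\{D_t^a, \bM(D_t^a)\} = W$, so that $\EE(Y \mid H_t, A_t = d_t^a, M_t = m_t) = \EE[W \mid H_t, A_t = d_t^a, M_t(D_t^a) = m_t]$. Next I would apply \cref{A-asu:seq-ign-M} with $b = a$, namely $W \perp M_t(D_t^a) \mid H_t, A_t = d_t^a$, to drop the conditioning event $M_t(D_t^a) = m_t$, leaving $\EE[W \mid H_t, A_t = d_t^a]$; then \cref{A-asu:seq-ign-A} removes the conditioning on $A_t$, yielding $\EE[W \mid H_t]$.

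For the left-hand side, consistency of the nested counterfactual shows that on $\{M_t(D_t^b) = m_t\}$ the random outcome in the statement equals $W$, so the left-hand side equals $\EE[W \mid H_t, M_t(D_t^b) = m_t]$. I would then apply \cref{A-asu:seq-ign-A} to the pair $(W, M_t(D_t^b))$ to insert the conditioning $A_t = d_t^b$ without changing the conditional law, use \cref{A-asu:seq-ign-M} (with $a \neq b$) to drop $M_t(D_t^b) = m_t$, and finally use \cref{A-asu:seq-ign-A} again to drop $A_t = d_t^b$, reducing the left-hand side to $\EE[W \mid H_t]$ as well. Comparing the two reductions then gives the claim.

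The main obstacle is justifying the uses of \cref{A-asu:seq-ign-A} for $W$ and for the pair $(W, M_t(D_t^b))$: as literally stated, that assumption concerns the natural-value potential outcomes indexed by fixed treatment sequences $\ba_T$, whereas $W$ is a fixed-$M_t$ outcome under the stochastic excursion policy $D_t^a$, whose post-$t$ treatments follow the behavior rules $d_{t+1},\ldots,d_T$. The careful step is therefore to express $W$ and $M_t(D_t^b)$ as measurable functions of the natural-value potential outcomes appearing in \cref{A-asu:seq-ign-A}, together with the constant $m_t$ and the (exogenous) post-$t$ randomization draws, so that their joint conditional independence from $A_t$ given $H_t$ is inherited from that assumption. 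Once this representation is established, the remaining manipulations are routine bookkeeping of conditioning events.
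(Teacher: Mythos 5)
Your proposal is correct, and it takes a recognizably different route from the paper's. The paper proves the identity as a single one-directional chain: after applying \cref{A-asu:consistency} to replace $M_t(D_t^b)$ by $m_t$ inside the $Y$-functional, it inserts the conditioning $A_t = d_t^a$ (via the independence $W \perp A_t \mid H_t, M_t(D_t^b)=m_t$ implied by \cref{A-asu:seq-ign-A}), then swaps the cross-world conditioning event $M_t(D_t^b)=m_t$ out and $M_t(D_t^a)=m_t$ in while holding $A_t = d_t^a$ fixed (two applications of \cref{A-asu:seq-ign-M}), and finishes with consistency. You instead reduce both sides to the common pivot $\EE[W \mid H_t]$, inserting $A_t = d_t^b$ on the left-hand side rather than $A_t = d_t^a$. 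A consequence worth noting is that both of your applications of \cref{A-asu:seq-ign-M} use it in exactly its stated form --- the independence for $M_t(D_t^b)$ is invoked conditional on $A_t = d_t^b$ and that for $M_t(D_t^a)$ conditional on $A_t = d_t^a$ --- whereas the paper's chain removes the $M_t(D_t^b)$ conditioning while conditioning on $A_t = d_t^a$, which reads the assumption slightly more liberally. The price of your version is two additional uses of $A_t$-ignorability to pass through the unconditional $\EE[W\mid H_t]$; the paper's is shorter. Finally, the ``obstacle'' you flag --- representing $W$ and $M_t(D_t^b)$ as measurable functions of the deterministic-sequence potential outcomes in \cref{A-asu:seq-ign-A} together with $m_t$ and the exogenous post-$t$ randomization --- is precisely the step the paper takes implicitly when it asserts that \cref{A-asu:seq-ign-A} yields its display \cref{A-eq:id-cond-exp-proofuse0}, so it is a shared bookkeeping burden rather than a gap specific to your argument.
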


\begin{proof}[Proof of \cref{A-lem:id-cond-exp}]
	First, note that \cref{A-asu:seq-ign-A} implies that for any $m_t$ and $m_t'$ in the support of $M_t$ and any $a,b\in\{0,1\}$, we have
	\begin{align}
		Y\{D_t^a, \bM_{t-1}(D_t^a), m_t, \uM_{t+1}(D_t^a)\} \perp A_t \mid H_t, M_t(D_t^b) = m_t'. \label{A-eq:id-cond-exp-proofuse0}
	\end{align}

	Now, for any $a,b\in\{0,1\}$, we have
	\begin{align}
		& ~~~~ \EE \Big[ Y \big\{D_t^a, \bM_{t-1}(D_t^a), M_t(D_t^b), \uM_{t+1}(D_t^a) \big\} ~\Big|~ H_t(\bA_{t-1}), M_t(D_t^b) = m_t \Big] \nonumber \\
		& = \EE \Big[ Y \big\{D_t^a, \bM_{t-1}(D_t^a), m_t, \uM_{t+1}(D_t^a) \big\} ~\Big|~ H_t, M_t(D_t^b) = m_t \Big] \label{A-eq:id-cond-exp-proofuse1} \\
		& = \EE \Big[ Y \big\{D_t^a, \bM_{t-1}(D_t^a), m_t, \uM_{t+1}(D_t^a) \big\} ~\Big|~ H_t, A_t = d_t^a, M_t(D_t^b) = m_t \Big] \label{A-eq:id-cond-exp-proofuse2} \\
		& = \EE \Big[ Y \big\{D_t^a, \bM_{t-1}(D_t^a), m_t, \uM_{t+1}(D_t^a) \big\} ~\Big|~ H_t, A_t = d_t^a \Big] \label{A-eq:id-cond-exp-proofuse3} \\
		& = \EE \Big[ Y \big\{D_t^a, \bM_{t-1}(D_t^a), m_t, \uM_{t+1}(D_t^a) \big\} ~\Big|~ H_t, A_t = d_t^a, M_t(D_t^a) = m_t \Big] \label{A-eq:id-cond-exp-proofuse4} \\
		& = \EE \Big[ Y \big\{D_t^a, \bM_{t-1}(D_t^a), M_t(D_t^a), \uM_{t+1}(D_t^a) \big\} ~\Big|~ H_t, A_t = d_t^a, M_t(D_t^a) = m_t \Big] \nonumber \\
		& = \EE(Y \mid H_t, A_t = d_t^a, M_t = m_t). \label{A-eq:id-cond-exp-proofuse5}
	\end{align}
	Here, \cref{A-eq:id-cond-exp-proofuse1} follows from \cref{A-asu:consistency}, \cref{A-eq:id-cond-exp-proofuse2} follows from \cref{A-eq:id-cond-exp-proofuse0}, \cref{A-eq:id-cond-exp-proofuse3,A-eq:id-cond-exp-proofuse4} both follow from \cref{A-asu:seq-ign-M}, and \cref{A-eq:id-cond-exp-proofuse5} follows from \cref{A-asu:consistency}. This completes the proof.
\end{proof}

\begin{lem}
	\label{A-lem:id-cond-exp-equals-ipw}
	Fix any $t \in [T]$ and any $a \neq b \in \{0,1\}$. Let $f(m_t \mid H_t, A_t = d_t^a)$ denote the conditional density of $M_t \mid H_t, A_t = d_t^a$ with respect to a dominating measure; likewise for $f(m_t \mid H_t, A_t = d_t^b)$. We have
	\begin{align}
		& ~~~~ \EE \Big\{ \EE (Y \mid H_t, A_t = d_t^a, M_t) ~\Big|~ H_t, A_t = d_t^b \Big\} \nonumber \\
		& = \EE \bigg\{ \frac{\indic(A_t = d_t^a)}{P(A_t = d_t^a \mid H_t)} \frac{f(M_t \mid H_t, A_t = d_t^b)}{f(M_t \mid H_t, A_t = d_t^a)} ~ Y ~\bigg|~ H_t \bigg\} \label{A-eq:lem:id-cond-exp-equals-ipw-1} \\
		& = \EE \bigg\{ \frac{\indic(A_t = d_t^a)}{P(A_t = d_t^b \mid H_t)} \frac{P(A_t = d_t^b \mid H_t, M_t)}{P(A_t = d_t^a \mid H_t, M_t)} ~ Y ~\bigg|~ H_t \bigg\}. \label{A-eq:lem:id-cond-exp-equals-ipw-2}
	\end{align}
\end{lem}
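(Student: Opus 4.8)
The plan is to establish the two stated equalities in turn, each by elementary manipulation of conditional expectations together with the positivity in \cref{A-asu:positivity}. For the first equality, I would begin by rewriting the outer expectation, which is taken over $M_t$ given $(H_t, A_t = d_t^b)$, as an expectation in the $A_t = d_t^a$ world through a change of measure. Writing that outer expectation as an integral against the conditional density $f(m_t \mid H_t, A_t = d_t^b)$ and multiplying and dividing the integrand by $f(m_t \mid H_t, A_t = d_t^a)$ yields
\begin{align*}
    \EE \Big\{ \EE (Y \mid H_t, A_t = d_t^a, M_t) \mid H_t, A_t = d_t^b \Big\}
    = \EE \bigg\{ \EE(Y \mid H_t, A_t = d_t^a, M_t) \, \frac{f(M_t \mid H_t, A_t = d_t^b)}{f(M_t \mid H_t, A_t = d_t^a)} \,\Big|\, H_t, A_t = d_t^a \bigg\}.
\end{align*}
Since the density ratio is $M_t$-measurable, the tower property collapses the inner conditional expectation of $Y$, leaving $\EE\{ Y \, f(M_t \mid H_t, A_t = d_t^b)/f(M_t \mid H_t, A_t = d_t^a) \mid H_t, A_t = d_t^a\}$. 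Finally I would convert this conditional-on-$(H_t, A_t = d_t^a)$ expectation into the inverse-probability-weighted form conditional on $H_t$ alone, using the identity $\EE\{ \indic(A_t = d_t^a)\, k \,/\, P(A_t = d_t^a \mid H_t) \mid H_t\} = \EE\{ k \mid H_t, A_t = d_t^a\}$, which holds for any integrable $k$ by conditioning on $A_t$ and noting that only the $A_t = d_t^a$ term survives the indicator. Taking $k = Y \, f(M_t \mid H_t, A_t = d_t^b)/f(M_t \mid H_t, A_t = d_t^a)$ recovers exactly the right-hand side of \eqref{A-eq:lem:id-cond-exp-equals-ipw-1}.

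For the second equality, the task reduces to showing that the two weights multiplying $\indic(A_t = d_t^a)\, Y$ agree pointwise. I would apply Bayes' theorem to each propensity, $P(A_t = d_t^a \mid H_t, M_t) = f(M_t \mid H_t, A_t = d_t^a)\, P(A_t = d_t^a \mid H_t) / f(M_t \mid H_t)$ and likewise with $a$ replaced by $b$. Dividing these two expressions cancels the marginal $f(M_t \mid H_t)$, and rearranging shows
\begin{align*}
    \frac{1}{P(A_t = d_t^b \mid H_t)} \frac{P(A_t = d_t^b \mid H_t, M_t)}{P(A_t = d_t^a \mid H_t, M_t)}
    = \frac{1}{P(A_t = d_t^a \mid H_t)} \frac{f(M_t \mid H_t, A_t = d_t^b)}{f(M_t \mid H_t, A_t = d_t^a)},
\end{align*}
so the two integrands are identical and the equality follows after taking $\EE(\cdot \mid H_t)$.

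The one point requiring genuine care is the legitimacy of the change of measure in the first step: the ratio $f(m_t \mid H_t, A_t = d_t^b)/f(m_t \mid H_t, A_t = d_t^a)$ must be well defined wherever it is integrated. This is precisely what the second part of \cref{A-asu:positivity} supplies, since it ensures $f(m_t \mid H_t, A_t = d_t^a) > 0$ for every $m_t$ in the support of $M_t$, so that the law of $M_t \mid (H_t, A_t = d_t^b)$ is absolutely continuous with respect to that of $M_t \mid (H_t, A_t = d_t^a)$ and the ratio is finite almost everywhere. The same assumption's first part keeps the propensities $P(A_t = d_t^a \mid H_t)$ in the denominators bounded away from zero. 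Everything else is a routine application of the tower property and Bayes' theorem, so I expect no substantive obstacle beyond careful bookkeeping once absolute continuity is invoked.
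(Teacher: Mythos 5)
Your proposal is correct and follows essentially the same route as the paper's proof: write the iterated expectation as an integral against $f(m_t \mid H_t, A_t = d_t^b)$, multiply and divide by $f(m_t \mid H_t, A_t = d_t^a)$, collapse via the tower property, convert to the indicator-weighted form by the law of total probability, and obtain the second display by Bayes' theorem cancelling the marginal $f(M_t \mid H_t)$. The only difference is that you make the absolute-continuity justification from \cref{A-asu:positivity} explicit, which the paper leaves implicit.
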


\begin{proof}[Proof of \cref{A-lem:id-cond-exp-equals-ipw}]
	We have
	\begin{align}
		& ~~~~ \EE \Big\{ \EE (Y \mid H_t, A_t = d_t^a, M_t) ~\Big|~ H_t, A_t = d_t^b \Big\} \nonumber \\
		& = \int \EE (Y \mid H_t, A_t = d_t^a, M_t = m_t) ~ f(m_t \mid H_t, A_t = d_t^b) ~ \dd m_t \nonumber \\
		& = \int \EE (Y \mid H_t, A_t = d_t^a, M_t = m_t) ~ \frac{f(m_t \mid H_t, A_t = d_t^b)}{f(m_t \mid H_t, A_t = d_t^a)} ~ f(m_t \mid H_t, A_t = d_t^a) ~ \dd m_t \nonumber \\
		& = \int \EE \bigg\{ \frac{f(M_t \mid H_t, A_t = d_t^b)}{f(M_t \mid H_t, A_t = d_t^a)} ~ Y ~\bigg|~ H_t, A_t = d_t^a, M_t = m_t \bigg\} ~ f(m_t \mid H_t, A_t = d_t^a) ~ \dd m_t \nonumber \\
		& = \EE \bigg\{ \frac{f(M_t \mid H_t, A_t = d_t^b)}{f(M_t \mid H_t, A_t = d_t^a)} ~ Y ~\bigg|~ H_t, A_t = d_t^a \bigg\} \nonumber \\
		& = \EE \bigg\{ \frac{\indic(A_t = d_t^a)}{P(A_t = d_t^a \mid H_t)} \frac{f(M_t \mid H_t, A_t = d_t^b)}{f(M_t \mid H_t, A_t = d_t^a)} ~ Y ~\bigg|~ H_t \bigg\}, \label{A-eq:lem:id-cond-exp-equals-ipw-proofuse1}
	\end{align}
	where the last equality follows from the law of total probability. This proves \cref{A-eq:lem:id-cond-exp-equals-ipw-1}.

	Furthermore, using Bayes' Theorem we have
	\begin{align}
		f(m_t \mid H_t, A_t = d_t^a) = \frac{f(m_t, A_t = d_t^a \mid H_t)}{P(A_t = d_t^a \mid H_t)} = \frac{P(A_t = d_t^a \mid M_t = m_t, H_t) ~ f(m_t \mid H_t)}{P(A_t = d_t^a \mid H_t)}, \label{A-eq:lem:id-cond-exp-equals-ipw-proofuse2}
	\end{align}
	where $f(m_t, A_t = d_t^a \mid H_t)$ and $f(m_t \mid H_t)$ represent the corresponding conditional densities with respect to a dominating measure. Using \cref{A-eq:lem:id-cond-exp-equals-ipw-proofuse2} and its parallel version for $d_t^b$, and we have
	\begin{align}
		& ~~~~ \EE \bigg\{ \frac{\indic(A_t = d_t^a)}{P(A_t = d_t^a \mid H_t)} \frac{f(M_t \mid H_t, A_t = d_t^b)}{f(M_t \mid H_t, A_t = d_t^a)} ~ Y ~\bigg|~ H_t \bigg\} \nonumber \\
		& = \EE \bigg\{ \frac{\indic(A_t = d_t^a)}{P(A_t = d_t^b \mid H_t)} \frac{P(A_t = d_t^b \mid H_t, M_t)}{P(A_t = d_t^a \mid H_t, M_t)} ~ Y ~\bigg|~ H_t \bigg\}. \label{A-eq:lem:id-cond-exp-equals-ipw-proofuse3}
	\end{align}
	Plugging \cref{A-eq:lem:id-cond-exp-equals-ipw-proofuse3} into \cref{A-eq:lem:id-cond-exp-equals-ipw-proofuse1} yields \cref{A-eq:lem:id-cond-exp-equals-ipw-2}. This completes the proof.
\end{proof}

\subsection{Proof of the Identification Result}

The following is \cref{thm:identification} in the main paper.
\begin{thm}[Identification, \cref{thm:identification} in the main paper]
    \label{A-thm:identification}
    Under \cref{A-asu:causal-assumptions}, $\theta^{aa}(t)$ for $a \in \{0,1\}$ is identified as 
    \begin{align}
        \theta_t^{aa}
        & = \EE \Big\{ \EE\big(Y \mid H_t, A_t = d_t^a\big) \Big\} \label{A-eq:identify-theta-aa-ie} \\
        & = \EE \bigg\{ \frac{\indic(A_t = d_t^a)}{P(A_t = d_t^a \mid H_t)} Y \bigg\}. \label{A-eq:identify-theta-aa-ipw}
    \end{align}
    The mediation functional $\theta_t^{ab}$ for $a,b\in\{0,1\}$ and $a \neq b$ is identified as
    \begin{align}
        \theta_t^{ab} 
        & = \EE\Big[ \EE \big\{ \EE(Y \mid H_t, A_t = d_t^a, M_t) \mid H_t, A_t = d_t^b \big\} \Big] \label{A-eq:identify-theta-ab-ie} \\
        & = \EE \bigg\{ \frac{\indic(A_t = d_t^a)}{P(A_t = d_t^b \mid H_t)} \frac{P(A_t = d_t^b \mid H_t, M_t)}{P(A_t = d_t^a \mid H_t, M_t)} Y \bigg\}. \label{A-eq:identify-theta-ab-ipw}
    \end{align}
\end{thm}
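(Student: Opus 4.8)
The plan is to assemble the theorem from the three lemmas already established, treating the diagonal case $\theta_t^{aa}$ and the off-diagonal (cross-world) case $\theta_t^{ab}$ separately. In both cases the strategy follows the same two-move pattern: first peel off the outermost layer by iterated expectation and invoke \cref{A-asu:seq-ign-A} to exchange ``no conditioning on $A_t$'' for ``conditioning on $A_t = d_t^a$ (or $d_t^b$)'', then apply the consistency identity of \cref{A-lem:causal-consistency} to replace the potential outcome by the observed $Y$ (and $M_t(D_t^a)$ by $M_t$). The resulting nested-regression (g-formula) expressions \eqref{A-eq:identify-theta-aa-ie} and \eqref{A-eq:identify-theta-ab-ie} are then converted to their inverse-probability-weighting counterparts \eqref{A-eq:identify-theta-aa-ipw} and \eqref{A-eq:identify-theta-ab-ipw} using the Horvitz--Thompson identity and \cref{A-lem:id-cond-exp-equals-ipw} respectively, with positivity (\cref{A-asu:positivity}) ensuring the weights are well defined.

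For the diagonal case I would start from $\theta_t^{aa} = \EE[ Y\{D_t^a, \bM(D_t^a)\}]$ and condition on $H_t$. Because $D_t^a$ agrees with the behavior policy $D$ before time $t$, the history generated under $D_t^a$ coincides with the observed $H_t$, so by \cref{A-asu:seq-ign-A} the potential outcome is independent of $A_t$ given $H_t$; hence $\EE[Y\{D_t^a, \bM(D_t^a)\} \mid H_t] = \EE[Y\{D_t^a, \bM(D_t^a)\} \mid H_t, A_t = d_t^a]$. On the event $A_t = d_t^a(H_t)$, \cref{A-lem:causal-consistency} gives $Y\{D_t^a, \bM(D_t^a)\} = Y$, so the inner expectation equals $\EE(Y \mid H_t, A_t = d_t^a)$, and the outer expectation yields \eqref{A-eq:identify-theta-aa-ie}. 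For \eqref{A-eq:identify-theta-aa-ipw} I would verify the elementary identity $\EE(Y \mid H_t, A_t = d_t^a) = \EE\{ \indic(A_t = d_t^a) Y / P(A_t = d_t^a \mid H_t) \mid H_t\}$ and take the outer expectation; here \cref{A-asu:positivity} guarantees $P(A_t = d_t^a \mid H_t) > 0$, noting that when $I_t = 0$ both $d_t^1$ and $d_t^0$ equal $0$ and the conditional probability is $1$.

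For the off-diagonal case I would condition the definition \eqref{eq:def-theta_t_ab} on the pair $(H_t, M_t(D_t^b))$ and apply \cref{A-lem:id-cond-exp}, which identifies the inner cross-world conditional expectation as $\EE(Y \mid H_t, A_t = d_t^a, M_t)$ evaluated at $M_t = M_t(D_t^b)$. This reduces $\theta_t^{ab}$ to $\EE[ \mu_t^\star(a, H_t, M_t(D_t^b))]$, where $\mu_t^\star(a, H_t, m_t) := \EE(Y \mid H_t, A_t = d_t^a, M_t = m_t)$. I would then condition on $H_t$ and repeat the two-move pattern with $b$ in place of $a$: since $M_t(D_t^b) = M_t(\bA_{t-1}, b)$ depends only on treatments through time $t$, \cref{A-asu:seq-ign-A} permits inserting the conditioning event $A_t = d_t^b$, and \cref{A-lem:causal-consistency} then replaces $M_t(D_t^b)$ by the observed $M_t$, producing $\EE\{\EE(Y \mid H_t, A_t = d_t^a, M_t) \mid H_t, A_t = d_t^b\}$. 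Taking the outer expectation proves \eqref{A-eq:identify-theta-ab-ie}, and \cref{A-lem:id-cond-exp-equals-ipw} converts the inner conditional expectation to the weighted form, yielding \eqref{A-eq:identify-theta-ab-ipw}.

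The main obstacle is the cross-world object $M_t(D_t^b)$ appearing jointly with outcomes evaluated under $D_t^a$: it cannot be handled by sequential ignorability of $A_t$ alone and is precisely where the cross-world independence \cref{A-asu:seq-ign-M} (no intermediate confounding) enters. Crucially, this difficulty is already isolated and discharged inside \cref{A-lem:id-cond-exp}, so at the level of the theorem the work reduces to careful bookkeeping. The remaining subtleties I would watch are (i) justifying that the potential history $H_t(\bA_{t-1})$ may be identified with the observed $H_t$, which holds because $D_t^a$ and $D_t^b$ both follow the behavior policy before $t$; and (ii) the eligibility structure: the events $\{A_t = d_t^a\}$ and the weights depend on $I_t$ through $d_t^1(H_t) = I_t$ and $d_t^0(H_t) = 0$, so I must confirm positivity holds in both the eligible and ineligible strata for the identities to remain valid.
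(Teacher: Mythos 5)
Your proposal is correct and follows essentially the same route as the paper: the cross-world case is reduced via \cref{A-lem:id-cond-exp} (where \cref{A-asu:seq-ign-M} is discharged), then \cref{A-asu:seq-ign-A} and \cref{A-lem:causal-consistency} convert $M_t(D_t^b)$ into the observed $M_t$ conditional on $A_t = d_t^b$, and \cref{A-lem:id-cond-exp-equals-ipw} supplies the weighted form. The only cosmetic difference is that the paper dispatches the diagonal case $\theta_t^{aa}$ by citing prior work, whereas you prove it directly with the same standard argument.
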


\begin{proof}[Proof of \cref{A-thm:identification}]
	\cref{A-eq:identify-theta-aa-ie,A-eq:identify-theta-aa-ipw} are a special case of Theorem 1 in \citet{qian2025distal}.

	Now we show \cref{A-eq:identify-theta-ab-ie,A-eq:identify-theta-ab-ipw}. We present the proof for the setting where the support of $M_t$ is discrete. For settings where the support of $M_t$ is continuous, the proof is similar with $\sum_{m_t}$ replaced by the integral.

	We have
	\begin{align}
		\theta_t^{ab} & := \EE \Big[ Y \big\{D_t^a, \bM_{t-1}(D_t^a), M_t(D_t^b), \uM_{t+1}(D_t^a) \big\} \Big] \nonumber \\
		& = \EE \Big( \EE \Big[ Y \big\{D_t^a, \bM_{t-1}(D_t^a), M_t(D_t^b), \uM_{t+1}(D_t^a) \big\} ~\Big|~ H_t(\bA_{t-1}) \Big] \Big) \nonumber \\
		& = \EE \bigg( \sum_{m_t} P\big\{M_t (D_t^b) = m_t \mid H_t(\bA_{t-1}) \big\} \nonumber \\
		& ~~~~ \times \EE \Big[ Y \big\{D_t^a, \bM_{t-1}(D_t^a), M_t(D_t^b), \uM_{t+1}(D_t^a) \big\} ~\Big|~ H_t(\bA_{t-1}), M_t(D_t^b) = m_t \Big] \bigg) \nonumber \\
		& = \EE \bigg[ \sum_{m_t} P\big\{M_t (D_t^b) = m_t \mid H_t \big\} ~ \EE (Y \mid H_t, A_t = d_t^a, M_t = m_t) \bigg] \label{A-eq:identification-proofuse1} \\
		& = \EE \bigg[ \sum_{m_t} P\big\{M_t (D_t^b) = m_t \mid H_t, A_t= d_t^b \big\} ~ \EE (Y \mid H_t, A_t = d_t^a, M_t = m_t) \bigg] \label{A-eq:identification-proofuse2} \\
		& = \EE \bigg[ \sum_{m_t} P\big( M_t = m_t \mid H_t, A_t= d_t^b \big) ~ \EE (Y \mid H_t, A_t = d_t^a, M_t = m_t) \bigg] \label{A-eq:identification-proofuse3} \\
		& = \EE \Big[ \EE \big\{ \EE (Y \mid H_t, A_t = d_t^a, M_t = m_t) \mid H_t, A_t = d_t^b \big\} \Big]. \label{A-eq:identification-proofuse4}
	\end{align}
	Here, \cref{A-eq:identification-proofuse1} follows from \cref{A-lem:causal-consistency,A-lem:id-cond-exp}, \cref{A-eq:identification-proofuse2} follows from \cref{A-asu:seq-ign-A}, \cref{A-eq:identification-proofuse3} follows from \cref{A-lem:causal-consistency}, and \cref{A-eq:identification-proofuse4} follows from the definition of conditional expectation.

	Thus, we have shown \cref{A-eq:identify-theta-ab-ie}. \cref{A-eq:identify-theta-ab-ipw} follows directly from \cref{A-lem:id-cond-exp-equals-ipw}. This completes the proof.
\end{proof}

\section{Deriving the Efficient Influence Function (EIF) for $\theta_t^{aa}$ and $\theta_t^{ab}$}
\label{A-sec:proof-eif}

We prove \cref{thm:eif} in this section. We derive the EIF for $\theta_t^{aa}$ in \cref{A-subsec:proof-eif-theta-t-aa} and the EIF for $\theta_t^{aa}$ in \cref{A-subsec:proof-eif-theta-t-ab}. In both derivations, we first construct an influence function (and verify that it is indeed an influence function), then show that the constructed influence function lies in the tangent space (and therefore its projection onto the tangent space is itself), and thus it is the efficient influence function.

\subsection{EIF for $\theta_t^{aa}$}
\label{A-subsec:proof-eif-theta-t-aa}

Fix $t \in [T]$. For notation simplicity, within this subsection we denote $Y_0 := Y(\bA_{t-1}, 0, \uA_{t+1})$, $Y_1 := Y(\bA_{t-1}, 1, \uA_{t+1})$, $A := A_t$, $I := I_t$ and $H := H_t$. By the causal consistency assumption, $Y = Y_0$ if $A_t = 0$ and $Y = Y_1$ if $A_t = 1$. We consider the semiparametric model, $\cM$, that consists of all probability distributions of $(H, A, Y_0, Y_1)$ that satisfy the constraints that $A$ must be 0 if $I = 0$ (recall $I \in H$) and that $A \perp (Y_0, Y_1) \mid H$. Suppose $P^\star \in \cM$ is the truth that generates the data. The following derivation extends the proof of Theorem 1 in \citet{hahn1998role} to causal excursion effects, where we tackle the technical challenges brought by the eligibility indicator $I$.

\subsubsection{Step 1: Derive the score function in a parametric submodel}

Denote a parametric submodel by $\{P_\vartheta: \vartheta \in \Theta\}$ such that there exists $\vartheta_0 \in \Theta$ with $P_{\vartheta_0} = P^\star$. Note that we use $\vartheta$ to parameterize the submodel to be consistent with \citet{hahn1998role}, and $\vartheta$ is distinct from the $\theta$ notation in $\theta_t^{aa}$. The density of $(H, A, Y_0, Y_1)$ (with respect to some $\sigma$-finite measure) in this submodel is given by
\begin{align*}
	f(h; \vartheta) ~ \pi(h; \vartheta)^a ~ \{1-\pi(h; \vartheta)\}^{1-a} ~ f(y_0, y_1 | h; \vartheta),
\end{align*}
where $f(h; \vartheta)$ denotes the marginal density of $H$, $\pi(h; \vartheta) = P_\vartheta(A = 1 | h)$, and $f(y_0, y_1 \mid h; \vartheta)$ denotes the conditional density of $(Y_0, Y_1)$ given $H$, respectively. The density of $(H, A, Y)$ is then equal to
\begin{align*}
	L(\vartheta) = f(h; \vartheta) ~ \{f_1(y | h; \vartheta) ~ \pi(h; \vartheta)\}^a ~ \{f_0(y | h; \vartheta) ~ (1 - \pi(h; \vartheta) )\}^a,
\end{align*}
where $f_1(\cdot | h) := \int f(y_0, \cdot | h) dy_0$ and $f_0(\cdot | h) := \int f(\cdot, y_1 | h) dy_1$. Equivalently, we rewrite the density as (recall $i$ is the dummy argument for $I \equiv I_t$, the eligibility indicator)
\begin{align*}
	L(\vartheta) = \begin{cases}
		f(h; \vartheta) ~ \{f_1(y | h; \vartheta) ~ \pi(h; \vartheta)\}^a ~ \{f_0(y | h; \vartheta) ~ (1 - \pi(h; \vartheta) )\}^a & \text{ if } i = 1,\\
		f(h; \vartheta) ~ f_0(y | h; \vartheta) & \text{ if } i = 0.
	\end{cases}
\end{align*}
The log-likelihood $\ell(\vartheta)$ equals
\begin{align}
	\ell(\vartheta) = \begin{cases}
		\log f(h; \vartheta) + a \{ \log f_1(y | h; \vartheta) + \log \pi(h; \vartheta)\} \\ \qquad + (1-a) \{ \log f_0(y | h; \vartheta) + \log (1 - \pi(h; \vartheta) )\} & \text{ if } i = 1,\\
		\log f(h; \vartheta) + \log f_0(y | h; \vartheta) & \text{ if } i = 0.
	\end{cases} \label{A-eq:eif-aa-proofuse1}
\end{align}

Therefore, when $i = 1$, it follows from \cref{A-eq:eif-aa-proofuse1} that the score function at the truth $\theta = \vartheta_0$ equals
\begin{align}
	S_\vartheta(\vartheta_0) & := \frac{\partial \ell(\vartheta_0)}{\partial \vartheta} \nonumber \\
	& = \frac{\dot{f}(h; \vartheta_0)}{f(h; \vartheta_0)} + a \bigg\{ \frac{\dot{f}_1(y|h; \vartheta_0)}{f_1(y|h; \vartheta_0)} + \frac{\dot\pi(h; \vartheta_0)}{\pi(h; \vartheta_0)} \bigg\} + (1-a) \bigg\{ \frac{\dot{f}_0(y|h; \vartheta_0)}{f_0(y|h; \vartheta_0)} - \frac{\dot\pi(h; \vartheta_0)}{1 - \pi(h; \vartheta_0)} \bigg\} \nonumber \\
	& = t(h) + a ~ s_1(y|h) + (1-a)~s_0(y|h) + \frac{a - \pi(h)}{\pi(h) \{1 - \pi(h)\}} ~ \dot\pi(h), \label{A-eq:eif-aa-proofuse2}
\end{align}
where
\begin{align}
	& t(h) := t(h; \vartheta_0), & & t(h; \vartheta) := \frac{\partial}{\partial \vartheta} \log f(h; \vartheta), \nonumber \\
	& s_1(y|h) := s_1(y|h; \vartheta_0), & & s_1(y|h; \vartheta) := \frac{\partial}{\partial \vartheta} \log f_1(y|h; \vartheta), \nonumber \\
	& s_0(y|h) := s_0(y|h; \vartheta_0), & & s_0(y|h; \vartheta) := \frac{\partial}{\partial \vartheta} \log f_0(y|h; \vartheta), \nonumber \\
	& \pi(h) := \pi(h; \vartheta_0), & & \dot\pi(h; \vartheta) := \frac{\partial}{\partial \vartheta} \pi(h; \vartheta). \label{A-eq:eif-aa-proofuse2.1}
\end{align}
When $i = 0$, it follows from \cref{A-eq:eif-aa-proofuse1} that the score function at the truth $\theta = \vartheta_0$ equals
\begin{align}
	S_\vartheta(\vartheta_0) := \frac{\partial \ell(\vartheta_0)}{\partial \vartheta} = t(h) + s_0(y|h). \label{A-eq:eif-aa-proofuse3}
\end{align}

\cref{A-eq:eif-aa-proofuse2} and \cref{A-eq:eif-aa-proofuse3} imply that
\begin{align}
	S_\vartheta(\vartheta_0) & = t(h) + i \bigg\{ a ~ s_1(y|h) + (1-a)~s_0(y|h) + \frac{a - \pi(h)}{\pi(h) \{1 - \pi(h)\}} ~ \dot\pi(h;\vartheta_0) \bigg\} \nonumber \\
	& ~~~~ + (1-i) s_0(y|h). \label{A-eq:eif-aa-proofuse4}
\end{align}
Note that $t(h), s_1(y|h), s_0(y|h), \dot\pi(h;\vartheta_0)$ all depend on the choice of the specific parametric submodel, whereas $\pi(h) = P(A = 1 | H = h)$ does not depend on the choice of the parametric submodel.

\subsubsection{Step 2: Derive the tangent space}

Let $\cH$ denote the Hilbert space of 1-dimensional mean zero finite-variance measurable funtions of $(H,A,Y)$, equipped with the covariance inner product). (1-dimensional because $\theta_t^{aa}$ is a scalar.) The tangent space $\cT$ is a linear subspace of the $\cH$ defined as the mean-square closure of parametric submodel tangent spaces, the latter being the 1-dimensional subspaces spanned by linear combinations of $S_\vartheta(\vartheta_0)$. Using the same logic as the proof of Theorem 1 in \citet{hahn1998role}, we obtain based on \cref{A-eq:eif-aa-proofuse4} that
\begin{align}
	\cT = \bigg\{ \tilde{t}(H) + I \Big[ A \tilde{s}_1(Y | H) + (1-A) \tilde{s}_0(Y | H) + \{A - \pi(H)\} \tilde{b}(H)\Big] + (1-I) \tilde{s}_0(Y | H) \bigg\}, \label{A-eq:eif-aa-proofuse5}
\end{align}
where $\tilde{t}(\cdot)$, $\tilde{s}_1(\cdot|\cdot)$, and $\tilde{s}_0(\cdot|\cdot)$ satisfy
\begin{align}
	\EE\{\tilde{t}(H)\} = 0, \quad \EE\{\tilde{s}_1(Y(1)|H) \mid H\} = 0, \quad \EE\{\tilde{s}_0(Y(0)|H) \mid H\} = 0, \label{A-eq:eif-aa-proofuse5.1}
\end{align}
$\tilde{b}(\cdot)$ is any square-integrable measurable function of $h$, and $\pi(h) = P(A = 1 \mid H = h)$. Here, $\EE$ and $P$ are with respect to the true $P^\star$.

\subsubsection{Step 3: Prove pathwise differentiability}
\label{A-subsubsec:proof-eif-theta-t-aa-pathwise-diff}

By definition of pathwise differentiability, it suffices to show that there exists a function $\varphi_t^{aa}$ such that for any parametric submodel $\{P_\vartheta: \vartheta \in \Theta\}$ with $P_{\vartheta_0} = P^\star$, 
\begin{align}
	\frac{\partial \theta_t^{aa}(\vartheta_0)}{\partial \vartheta} = \EE \{\varphi_t^{aa} S_\vartheta(\vartheta_0)\}, \label{A-eq:eif-aa-proofuse6}
\end{align}
with $\EE$ taken under $P^\star$. (Notation remark: $\varphi_t^{aa}$ is distinct from but closely related to $\phi_t^{aa}$ in the main paper, as we will demonstrate shortly.)

We first derive $\frac{\partial \theta_t^{11}(\vartheta_0)}{\partial \vartheta}$. Consider the same parametric submodel as in Step 1. By \cref{thm:identification} we have
\begin{align}
	\theta_t^{11}(\vartheta) & = \EE_\vartheta \{ \EE_\vartheta ( Y \mid H, A = I)\} \nonumber \\
	& = \EE_\vartheta \{ I ~ \EE_\vartheta ( Y \mid H, A = 1) + (1-I) ~ \EE_\vartheta ( Y \mid H, A = 0) \} \nonumber \\
	& = \EE_\vartheta \{ I ~ \EE_\vartheta ( Y_1 \mid H) + (1-I) ~ \EE_\vartheta ( Y_0 \mid H) \} \nonumber \\
	& = \int y f_1(y | h; \vartheta) i f(h;\vartheta) dydh + \int y f_0(y | h; \vartheta) (1-i) f(h;\vartheta) dydh. \label{A-eq:eif-aa-proofuse7}
\end{align}
So using the identity $\frac{\partial f}{\partial\vartheta} = \frac{\partial \log f}{\partial \vartheta} f$, we obtain from \cref{A-eq:eif-aa-proofuse7} that
\begin{align}
	\frac{\partial \theta_t^{11}(\vartheta)}{\partial \vartheta} & = \int y s_1(y | h; \vartheta) f_1(y | h; \vartheta) i f(h; \vartheta) dy dh + \int y f_1(y | h; \vartheta) i t(h; \vartheta) f(h; \vartheta) dy dh \nonumber \\
	& ~~~~ + \int y s_0(y | h; \vartheta) f_0(y | h; \vartheta) (1-i) f(h; \vartheta) dy dh + \int y f_0(y | h; \vartheta) (1-i) t(h; \vartheta) f(h; \vartheta) dy dh. \label{A-eq:eif-aa-proofuse8}
\end{align}
Evaluating \cref{A-eq:eif-aa-proofuse8} at $\vartheta = \vartheta_0$ and we obtain
\begin{align}
	\frac{\partial \theta_t^{11}(\vartheta_0)}{\partial \vartheta} & = \EE [Y(1) s_1\{Y(1)|H\} I] + \EE \{ \eta_1(H) I t(H)\} \nonumber \\
	& ~~~~ + \EE [Y(0) s_0\{Y(0)|H\} (1-I)] + \EE \{ \eta_0(H) (1-I) t(H)\}, \label{A-eq:eif-aa-proofuse8.1}
\end{align}
where $s_1(y|h), s_0(y|h)$ and $t(h)$ are defined in \cref{A-eq:eif-aa-proofuse2.1}, $\eta_1(h) := \EE \{Y(1) | H = h\}$, and $\eta_0(h) := \EE \{Y(0) | H = h\}$.

Define
\begin{align}
	\varphi_t^{11} := \frac{\indic(A = I)}{P(A = I | H)}\{Y - \EE(Y | A = I, H)\} + \EE(Y | A = I, H) - \theta_t^{11}. \label{A-eq:eif-aa-proofuse9}
\end{align}
\cref{A-lem:eif-theta-11-pathwise-diff}, which is formally stated and proven later, implies that
\begin{align*}
	\frac{\partial \theta_t^{11}(\vartheta_0)}{\partial \vartheta} = \EE \{\varphi_t^{11} S_\vartheta(\vartheta_0)\}.
\end{align*}
Therefore, $\theta_t^{11}$ is pathwise differentiable.

We now derive $\frac{\partial \theta_t^{00}(\vartheta_0)}{\partial \vartheta}$. Consider the same parametric submodel as in Step 1. By \cref{thm:identification} we have
\begin{align*}
	\theta_t^{00}(\vartheta) = \EE_\vartheta \{ \EE_\vartheta ( Y \mid H, A = 0)\} = \int y f_0(y | h; \vartheta) f(h; \vartheta) dydh.
\end{align*}
So using the identity $\frac{\partial f}{\partial\vartheta} = \frac{\partial \log f}{\partial \vartheta} f$, we have
\begin{align}
	\frac{\partial \theta_t^{00}(\vartheta)}{\partial \vartheta} & = \int y s_0(y|h; \vartheta) f_0(y | h; \vartheta) f(h; \vartheta) dydh + \int y f_0(y | h; \vartheta) t(h; \vartheta) f(h; \vartheta) dydh. \label{A-eq:eif-aa-proofuse9.1}
\end{align}
Evaluating \cref{A-eq:eif-aa-proofuse9.1} at $\vartheta = \vartheta_0$ and we obtain
\begin{align}
	\frac{\partial \theta_t^{00}(\vartheta_0)}{\partial \vartheta} = \EE [Y(0) s_0\{Y(0) | H\}] + \EE \{ \eta_0(H) t(H)\} \label{A-eq:eif-aa-proofuse9.2}.
\end{align}

Define
\begin{align}
	\varphi_t^{00} := \frac{\indic(A = 0)}{P(A = 0 | H)}\{Y - \EE(Y | A = 0, H)\} + \EE(Y | A = 0, H) - \theta_t^{00}. \label{A-eq:eif-aa-proofuse9.3}
\end{align}
\cref{A-lem:eif-theta-00-pathwise-diff}, which is formally stated and proven later, implies that
\begin{align*}
	\frac{\partial \theta_t^{00}(\vartheta_0)}{\partial \vartheta} = \EE \{\varphi_t^{00} S_\vartheta(\vartheta_0)\}.
\end{align*}
Therefore, $\theta_t^{00}$ is pathwise differentiable.

\subsubsection{Step 4: Verify that the pathwise derivative is the EIF}

We first verify that $\varphi_t^{11} \in \cT$. For $\varphi_t^{11}$ defined in \cref{A-eq:eif-aa-proofuse9}, when $I = 1$ we have
\begin{align*}
	\varphi_t^{11} = \frac{A}{\pi(H)}\{Y - \eta_1(H)\} + \eta_1(H) - \theta_t^{11}.
\end{align*}
When $I = 0$, we have $\pi(H) = 0$ and $A = 0$, and thus
\begin{align*}
	\varphi_t^{11} = Y - \theta_t^{11}.
\end{align*}
Therefore, using the identities $I \eta_1(H) = I ~ \EE(Y | A = I, H)$ and $(1-I) \eta_0(H) = (1-I) ~\EE(Y | A = I, H)$, we have
\begin{align*}
	\varphi_t^{11} & = I \frac{A}{\pi(H)}\{Y - \eta_1(H)\} + I ~ \eta_1(H) + (1-I)Y - \theta_t^{11} \\
	& = I A \frac{Y - \eta_1(H)}{\pi(H)} + (1 - I)\{Y - \eta_0(H)\} + \EE(Y | A = I, H) - \theta_t^{11}.
\end{align*}
Therefore, $\varphi_t^{11} \in \cT$ with $\tilde{s}_1(y|h) = \frac{y - \eta_1(h)}{\pi(h)}$, $\tilde{s}_0(y|h) = y - \eta_0(h)$, $\tilde{t}(h) = \EE(Y | A = I, H) - \theta_t^{11}$, and $\tilde{b}(h) = 0$. (We can easily verify that \eqref{A-eq:eif-aa-proofuse5.1} holds for these $\tilde{s}_1$, $\tilde{s}_0$, and $\tilde{t}$. $\EE\{\tilde{t}(H)\} = 0$ is true because of \cref{thm:identification}.) Thus, by \citet[][Section 3.3, Theorem 1 and Definition 2]{bkrw1993}, $\varphi_t^{11}$ is the EIF for $\theta_t^{11}$.

We now verify that $\varphi_t^{00} \in \cT$. For $\varphi_t^{00}$ defined in \cref{A-eq:eif-aa-proofuse9.2}, using the fact that $(1-I)(1-A) = 1-I$, we have
\begin{align*}
	\varphi_t^{00} & = I\frac{1-A}{1 - \pi(H)}\{Y - \eta_0(H)\} + (1-I) \frac{1-A}{1 - \pi(H)}\{Y - \eta_0(H)\} + \eta_0(H) - \theta_t^{00} \\
	& = I\frac{1-A}{1 - \pi(H)}\{Y - \eta_0(H)\} + (1-I) \frac{1}{1 - \pi(H)}\{Y - \eta_0(H)\} + \eta_0(H) - \theta_t^{00}.
\end{align*}
Therefore, $\varphi_t^{00} \in \cT$ with $\tilde{s}_1(y|h) = 0$, $\tilde{s}_0(y|h) = \frac{1}{1 - \pi(h)}\{y - \eta_0(h)\}$, $\tilde{t}(h) = \eta_0(h) - \theta_t^{00}$, and $\tilde{b}(h) = 0$. (We can easily verify that \eqref{A-eq:eif-aa-proofuse5.1} holds for these $\tilde{s}_1$, $\tilde{s}_0$, and $\tilde{t}$. $\EE\{\tilde{t}(H)\} = 0$ is true because of \cref{thm:identification}.) Thus, by \citet[][Section 3.3, Theorem 1 and Definition 2]{bkrw1993}, $\varphi_t^{11}$ is the EIF for $\theta_t^{11}$.

This completes the derivation of the EIFs for $\theta_t^{11}$ and $\theta_t^{00}$.

\subsubsection{Lemmas for deriving the EIF of $\theta_t^{aa}$}

First, we state and prove a few auxiliary lemmas.

\begin{lem}
	\label{A-lem:eif-auxiliary-identities}
	\begin{lemlist}
		\item \label{A-lem:eif-auxiliary-identities-1} For any function $g(H,A,Y)$, we have
		\begin{align*}
			\EE\bigg\{ \frac{\indic(A = I)}{P(A = I | H)} g(H,A,Y) \bigg\} = \EE\Big[ \EE \big\{ g(H,A,Y) \mid H, A = I \big\} \Big].
		\end{align*}
		\item \label{A-lem:eif-auxiliary-identities-2} For any function $g(H,A)$, we have
		\begin{align*}
			\EE\bigg[ \frac{\indic(A = I)}{P(A = I | H)} \Big\{ Y - \EE(Y \mid A = I, H) \Big \}  g(H,A) \bigg] = 0.
		\end{align*}
		\item \label{A-lem:eif-auxiliary-identities-3} For any function $g(H)$, we have $\EE\{s_1(Y|H) g(H) \mid H, A = 1\} = 0$ and $\EE\{s_0(Y|H) g(H) \mid H, A = 0\} = 0$.
	\end{lemlist}
\end{lem}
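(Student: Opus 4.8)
The plan is to establish the three identities separately. Parts (i) and (ii) follow from the tower property of conditional expectation together with the structure of the inverse-probability weight $1/P(A=I\mid H)$, while part (iii) follows from the standard fact that a score function has conditional mean zero.

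For part (i), I would condition on $H$ and use the elementary identity $\EE\{\indic(A=I)\,Z\mid H\}=P(A=I\mid H)\,\EE\{Z\mid H,A=I\}$, valid for any integrable $Z$ and the event $\{A=I\}$. Taking $Z=g(H,A,Y)$, the factor $P(A=I\mid H)$ produced by this identity cancels the weight $1/P(A=I\mid H)$, leaving $\EE[\EE\{g(H,A,Y)\mid H,A=I\}]$ after a final outer expectation. The only point needing care is that the weight is well defined almost surely: on $\{I=1\}$ we have $P(A=I\mid H)=P(A=1\mid H)$, which is bounded away from zero by \cref{A-asu:positivity}; on $\{I=0\}$ the design constraint $A=0$ forces $P(A=I\mid H)=P(A=0\mid H)=1$, so no division by zero occurs in either stratum.

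For part (ii), I would apply part (i) with $g(H,A,Y)$ replaced by $\{Y-\EE(Y\mid A=I,H)\}\,g(H,A)$. This converts the left-hand side into $\EE[\EE\{(Y-\EE(Y\mid A=I,H))\,g(H,A)\mid H,A=I\}]$. Conditioning on the event $\{A=I\}$ makes $g(H,A)$ a function of $H$ alone, so it factors out of the inner conditional expectation, and the residual factor $\EE\{Y-\EE(Y\mid A=I,H)\mid H,A=I\}=\EE(Y\mid H,A=I)-\EE(Y\mid A=I,H)=0$, which gives the claim.

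For part (iii), I would first invoke causal consistency (\cref{A-asu:consistency}), so that $Y=Y_1$ on $\{A=1\}$, and sequential ignorability of $A$ (\cref{A-asu:seq-ign-A}), which yields $Y_1\perp A\mid H$ and hence that the conditional law of $Y_1$ given $(H,A=1)$ coincides with its law given $H$ (equivalently, this independence is built into the semiparametric model $\cM$). These reduce the quantity to $g(H)\,\EE\{s_1(Y_1\mid H)\mid H\}=g(H)\int s_1(y\mid h)f_1(y\mid h)\,dy$. This integral vanishes because $s_1=\partial_\vartheta\log f_1$ is a score, so $\int s_1(y\mid h)f_1(y\mid h)\,dy=\partial_\vartheta\int f_1(y\mid h)\,dy=\partial_\vartheta 1=0$ under the usual regularity permitting interchange of differentiation and integration; the $s_0$ statement is identical with $\{A=0\}$ and $Y_0$. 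The main obstacle throughout is really the correct treatment of the eligibility-induced degeneracy in part (i)---verifying finiteness of the weight and the uniform validity of the conditional-expectation identity across the $\{I=1\}$ and $\{I=0\}$ strata; once this is handled, the remaining manipulations are routine.
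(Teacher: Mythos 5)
Your proposal is correct and follows essentially the same route as the paper: parts (i) and (ii) via the tower property (conditioning on $H$ and the event $\{A=I\}$ so that the weight $1/P(A=I\mid H)$ cancels), and part (iii) via causal consistency, sequential ignorability, and the mean-zero property of the score. Your extra remark that the weight is well defined on both eligibility strata (equal to $1/P(A=1\mid H)$ when $I=1$ and to $1$ when $I=0$) is a small but welcome addition the paper leaves implicit.
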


\begin{proof}[Proof of \cref{A-lem:eif-auxiliary-identities}]
	\cref{A-lem:eif-auxiliary-identities-1} follows from iterated expectation given $H,A$:
	\begin{align*}
		& ~~~~ \EE\bigg\{ \frac{\indic(A = I)}{P(A = I | H)} g(H,A,Y) \bigg\} \\
		& = \EE \bigg[ \EE\bigg\{ \frac{\indic(A = I)}{P(A = I | H)} g(H,A,Y) ~\bigg|~ H,A = I \bigg\} P(A = I | H) \bigg] \\
		& = \EE\Big[ \EE \big\{ g(H,A,Y) \mid H, A = I \big\} \Big].
	\end{align*}
	\cref{A-lem:eif-auxiliary-identities-2} follows from iterated expectation given $H,A$:
	\begin{align*}
		& ~~~~ \EE\bigg[ \frac{\indic(A = I)}{P(A = I | H)} \Big\{ Y - \EE(Y \mid A = I, H) \Big \}  g(H,A) \bigg] \\
		& = \EE \bigg( \EE\bigg[ \frac{\indic(A = I)}{P(A = I | H)} \Big\{ Y - \EE(Y \mid A = I, H) \Big \}  g(H,A) ~\bigg|~ H,A = I \bigg] P(A = I | H) \bigg) \\
		& = \EE\bigg[ \Big\{ Y - \EE(Y \mid A = I, H) ~\Big|~ H, A = I  \Big \}  g(H,A) \bigg] \\
		& = 0.
	\end{align*}
	\cref{A-lem:eif-auxiliary-identities-3} follows from the following: For $a \in \{0,1\}$, we have
	\begin{align*}
		& ~~~~ \EE\{s_a(Y|H) \mid H, A = a\} \\
		& = \EE[s_a\{Y(a)|H\} \mid H, A = a] & \text{(causal consistency)} \\
		& = \EE[s_a\{Y(a)|H\} \mid H] & \text{(sequential ignorability)} \\
		& = 0. & \text{(property of the score function)}
	\end{align*}
	This completes the proof.
\end{proof}

 Now we state and prove the lemmas used in \cref{A-subsubsec:proof-eif-theta-t-aa-pathwise-diff}. 

\begin{lem}
	\label{A-lem:eif-theta-11-pathwise-diff}
	With $\dfrac{\partial \theta_t^{11}(\vartheta_0)}{\partial \vartheta}$ derived in \cref{A-eq:eif-aa-proofuse8.1}, $\varphi_t^{11}$ defined in \cref{A-eq:eif-aa-proofuse9}, and $S_\vartheta(\vartheta_0)$ derived in \cref{A-eq:eif-aa-proofuse4}, we have
	\begin{align*}
		\frac{\partial \theta_t^{11}(\vartheta_0)}{\partial \vartheta} = \EE \{\varphi_t^{11} S_\vartheta(\vartheta_0)\}.
	\end{align*}
\end{lem}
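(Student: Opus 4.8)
The plan is to verify the identity by directly evaluating $\EE\{\varphi_t^{11} S_\vartheta(\vartheta_0)\}$ and matching it term-by-term against the expression for $\partial\theta_t^{11}(\vartheta_0)/\partial\vartheta$ in \cref{A-eq:eif-aa-proofuse8.1}. First I would discard the constant: since $\theta_t^{11}$ does not depend on the data and $\EE\{S_\vartheta(\vartheta_0)\}=0$ (scores are mean zero), the term $-\theta_t^{11}$ in $\varphi_t^{11}$ contributes nothing. I then split the remainder of $\varphi_t^{11}$ into the inverse-probability-weighted residual part $\varphi_1 := \frac{\indic(A=I)}{P(A=I\mid H)}\{Y - \EE(Y\mid A=I,H)\}$ and the regression part $\varphi_2 := \EE(Y\mid A=I,H)$. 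Because $I\in H$ and the treatment is deterministic when $I=0$, the latter simplifies to $\varphi_2 = I\,\eta_1(H) + (1-I)\,\eta_0(H)$, a function of $H$ alone.

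For $\EE\{\varphi_2\, S_\vartheta(\vartheta_0)\}$ I would use that $\varphi_2$ depends on $H$ only. Conditioning on $(H,A)$, the outcome-score components $i\,a\,s_1 + i(1-a)\,s_0 + (1-i)\,s_0$ integrate to zero by \cref{A-lem:eif-auxiliary-identities-3}, and the propensity component $i\,\frac{A-\pi(H)}{\pi(H)\{1-\pi(H)\}}\dot\pi(H)$ integrates to zero since $\EE\{A-\pi(H)\mid H\}=0$. Only the marginal score $t(H)$ survives, yielding $\EE\{\varphi_2\, t(H)\} = \EE\{\eta_1(H)\,I\,t(H)\} + \EE\{\eta_0(H)\,(1-I)\,t(H)\}$, which are exactly the two marginal-score terms of \cref{A-eq:eif-aa-proofuse8.1}.

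For $\EE\{\varphi_1\, S_\vartheta(\vartheta_0)\}$ the key tool is \cref{A-lem:eif-auxiliary-identities-2}: every component of $S_\vartheta(\vartheta_0)$ that is a function of $(H,A)$ only---namely $t(H)$ and the propensity direction---is annihilated against $\varphi_1$. The surviving terms are $\EE\{\varphi_1\,[\,I A\,s_1 + I(1-A)\,s_0 + (1-I)\,s_0\,]\}$. On $\{I=1\}$ one has $\indic(A=I)=\indic(A=1)$, $P(A=I\mid H)=\pi(H)$, and $\EE(Y\mid A=I,H)=\eta_1(H)$; the cross term with $(1-A)$ vanishes because $\indic(A=1)(1-A)=0$, and the $I A\,s_1$ term, evaluated by iterated expectation over $(H,A=1)$ together with causal consistency, sequential ignorability, and \cref{A-lem:eif-auxiliary-identities-3}, gives $\EE[Y(1)\,s_1\{Y(1)\mid H\}\,I]$. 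On $\{I=0\}$ the treatment is $0$, $P(A=I\mid H)=1$, and $\EE(Y\mid A=I,H)=\eta_0(H)$, and the analogous computation gives $\EE[Y(0)\,s_0\{Y(0)\mid H\}\,(1-I)]$. These are precisely the two outcome-score terms of \cref{A-eq:eif-aa-proofuse8.1}, completing the match.

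The main obstacle is the careful bookkeeping forced by the eligibility indicator $I$: on $\{I=0\}$ the treatment collapses to $A=0$ deterministically, so the normalizer $P(A=I\mid H)$ equals $1$ rather than a propensity, and $\EE(Y\mid A=I,H)$ switches from $\eta_1$ to $\eta_0$. One must separate these two events precisely to see which products survive on each, and to confirm that the $I(1-A)s_0$ cross term drops out. Once the eligibility cases are correctly handled, each surviving expectation reduces by a single application of iterated expectation plus the auxiliary identities, and the remaining algebra is routine.
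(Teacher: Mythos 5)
Your proposal is correct and follows essentially the same route as the paper: both expand $\EE\{\varphi_t^{11} S_\vartheta(\vartheta_0)\}$ term by term, kill the constant and all cross terms via the auxiliary identities (\cref{A-lem:eif-auxiliary-identities-1}--\cref{A-lem:eif-auxiliary-identities-3}, $\EE\{A-\pi(H)\mid H\}=0$, and $\indic(A=I)I(1-A)\equiv 0$), and recover the four surviving terms of \cref{A-eq:eif-aa-proofuse8.1} by the same case split on the eligibility indicator. Your grouping into an IPW-residual part and a regression part is just a slightly more economical bookkeeping of the paper's $3\times 5$ cross-product table.
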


\begin{proof}[Proof of \cref{A-lem:eif-theta-11-pathwise-diff}]
	Let
	\begin{align*}
		\cnone & := \frac{\indic(A = I)}{P(A = I | H)}\{Y - \EE(Y | A = I, H)\}, \\
		\cntwo & := \EE(Y | A = I, H), \\
		\cnthree & := \theta_t^{11}, \\
		\cna & := t(H), \\
		\cnb & := I A s_1(Y|H), \\
		\cnc & := I(1-A)s_0(Y|H), \\
		\cnd & := I \frac{A - \pi(H)}{\pi(H) \{1 - \pi(H)\}} ~ \dot\pi(H;\vartheta_0), \\
		\cne & := (1-I) s_0(Y|H),
	\end{align*}
	then we have $\varphi_t^{11} = \cnone + \cntwo + \cnthree$ and $S_\vartheta(\vartheta_0) = \cna + \cnb + \cnc + \cnd + \cne$, so it suffices to show that
	\begin{align*}
		\EE\Big\{\Big(\cnone + \cntwo + \cnthree\Big) \Big(\cna + \cnb + \cnc + \cnd + \cne\Big)\Big\} = \dfrac{\partial \theta_t^{11}(\vartheta_0)}{\partial \vartheta}	
	\end{align*}
	with the latter given in \cref{A-eq:eif-aa-proofuse8.1}.

	We have $\EE\big(\cnone \cna\big) = 0$ by \cref{A-lem:eif-auxiliary-identities-2}.

	For $\EE\big(\cnone \cnb\big)$, we have:
	\begin{align*}
		& ~~~~ \EE\big(\cnone \cnb\big) \\
		& = \EE\bigg[ \frac{\indic(A = I)}{P(A = I | H)}\{Y - \EE(Y | A = I, H)\} I A s_1(Y|H) \bigg] \\
		& = \EE\bigg( \EE \Big[ \{Y - \EE(Y | A = I, H)\} I A s_1(Y|H) ~\Big|~ H, A = I \Big] \bigg) & \text{(due to \cref{A-lem:eif-auxiliary-identities-1})} \\
		& = \EE [ \EE \{Y I A s_1(Y|H) \mid H, A = I \} ] \\
		& ~~~~ - \EE [ \EE \{\EE(Y | A = I, H) I A s_1(Y|H) \mid H, A = I \} ] \\
		& = \EE [ I \EE \{Y s_1(Y|H) \mid H, A = 1 \} ] - 0 & \text{(due to \cref{A-lem:eif-auxiliary-identities-3})} \\
		& = \EE [ I \EE \{Y(1) s_1\{Y(1)|H\} \mid H] & \text{(consistency, ignorability)} \\
		& = \EE [I Y(1) s_1\{Y(1)|H\}].  & \text{(iterated expectation)}
	\end{align*}

	We have $\EE\big(\cnone \cnc\big) = 0$ because $\indic(A = I) I (1-A) \equiv 0$.

	We have $\EE\big(\cnone \cnd\big) = 0$ by \cref{A-lem:eif-auxiliary-identities-2}.

	For $\EE\big(\cnone \cne\big)$, we have:
	\begin{align*}
		& ~~~~ \EE\big(\cnone \cne\big) \\
		& = \EE\bigg[ \frac{\indic(A = I)}{P(A = I | H)}\{Y - \EE(Y | A = I, H)\} (1-I) s_0(Y|H) \bigg] \\
		& = \EE\bigg( \EE \Big[ \{Y - \EE(Y | A = I, H)\} (1-I) s_0(Y|H) ~\Big|~ H, A = I \Big] \bigg) & \text{(due to \cref{A-lem:eif-auxiliary-identities-1})} \\
		& = \EE [ (1-I) \EE \{Y s_0(Y|H) \mid H, A = 0 \} ] \\
		& ~~~~ - \EE [ (1-I) \EE \{\EE(Y | A = I, H) s_0(Y|H) \mid H, A = 0 \} ] \\
		& = \EE [ (1-I) \EE \{Y s_0(Y|H) \mid H, A = 0 \} ] - 0 & \text{(due to \cref{A-lem:eif-auxiliary-identities-3})} \\
		& = \EE [ (1-I) \EE \{Y(0) s_0\{Y(0)|H\} \mid H] & \text{(consistency, ignorability)} \\
		& = \EE [(1-I) Y(0) s_0\{Y(0)|H\}].  & \text{(iterated expectation)}
	\end{align*}

	For $\EE\big(\cntwo \cna\big)$, we have:
	\begin{align*}
		& ~~~~ \EE\big(\cntwo \cna\big) \\
		& = \EE \{ \EE(Y \mid A = I, H) t(H) \} \\
		& = \EE \{ I \EE(Y \mid A = I, H) t(H) \} + \EE \{ (1-I) \EE(Y \mid A = I, H) t(H) \} \\
		& = \EE \{ I \eta_1(H) t(H) \} + \EE \{ (1-I) \eta_0(H) t(H) \},
	\end{align*}
	where the last step used the identities $I \eta_1(H) = I ~ \EE(Y | A = I, H)$ and $(1-I) \eta_0(H) = (1-I) ~\EE(Y | A = I, H)$.

	For $\EE\big(\cntwo \cnb\big)$, $\EE\big(\cntwo \cnc\big)$, $\EE\big(\cntwo \cnd\big)$, and $\EE\big(\cntwo \cne\big)$, they are all equal to 0, because $\cntwo$ is a function of $H$ alone, \cref{A-lem:eif-auxiliary-identities-3}, and the fact that $\EE\{A - \pi(H) \mid H \} = 0$.

	For $\EE\big(\cnthree \cna\big)$, $\EE\big(\cnthree \cnb\big)$, $\EE\big(\cnthree \cnc\big)$, $\EE\big(\cnthree \cnd\big)$, and $\EE\big(\cnthree \cne\big)$, they are all equal to 0, because $\cnthree$ is a constant, \cref{A-lem:eif-auxiliary-identities-3}, the fact that $\EE\{A - \pi(H) \mid H \} = 0$, and the fact that $\EE\{t(H)\} = 0$.

	Putting together all the calculation results gives exactly $\dfrac{\partial \theta_t^{11}(\vartheta_0)}{\partial \vartheta}$ given in \cref{A-eq:eif-aa-proofuse8.1}, and thus the proof is completed.
\end{proof}

\begin{lem}
	\label{A-lem:eif-theta-00-pathwise-diff}
	With $\dfrac{\partial \theta_t^{00}(\vartheta_0)}{\partial \vartheta}$ derived in \cref{A-eq:eif-aa-proofuse9.2}, $\varphi_t^{00}$ defined in \cref{A-eq:eif-aa-proofuse9.2}, and $S_\vartheta(\vartheta_0)$ derived in \cref{A-eq:eif-aa-proofuse4}, we have
	\begin{align*}
		\frac{\partial \theta_t^{00}(\vartheta_0)}{\partial \vartheta} = \EE \{\varphi_t^{00} S_\vartheta(\vartheta_0)\}.
	\end{align*}
\end{lem}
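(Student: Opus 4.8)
The plan is to follow the proof of \cref{A-lem:eif-theta-11-pathwise-diff} line for line, taking advantage of the fact that the $\theta_t^{00}$ case is structurally simpler: because $d_t^0$ sets $A_t=0$ regardless of eligibility, we have $\EE(Y\mid A=0,H)=\eta_0(H)$ with no need to split the conditional mean according to whether $I=1$ or $I=0$. First I would decompose both factors of $\EE\{\varphi_t^{00}S_\vartheta(\vartheta_0)\}$. Writing $\varphi_t^{00}=c_1+c_2-\theta_t^{00}$ with $c_1:=\indic(A=0)\{Y-\eta_0(H)\}/P(A=0\mid H)$ and $c_2:=\eta_0(H)$, and splitting $S_\vartheta(\vartheta_0)$ from \cref{A-eq:eif-aa-proofuse4} into its five summands $t(H)$, $IAs_1(Y\mid H)$, $I(1-A)s_0(Y\mid H)$, the $\dot\pi$ summand, and $(1-I)s_0(Y\mid H)$, the task reduces to evaluating the $3\times 5$ cross-product expectations and verifying they sum to $\frac{\partial\theta_t^{00}(\vartheta_0)}{\partial\vartheta}$ in \cref{A-eq:eif-aa-proofuse9.2}.

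The tools are the $A=0$ analogues of \cref{A-lem:eif-auxiliary-identities}. I would first note that the iterated-expectation identity \cref{A-lem:eif-auxiliary-identities-1} and the orthogonality identity \cref{A-lem:eif-auxiliary-identities-2} hold verbatim with $\indic(A=I)/P(A=I\mid H)$ replaced by $\indic(A=0)/P(A=0\mid H)$; their proofs are identical and in fact cleaner, since $P(A=0\mid H)$ is bounded below (it equals $1$ when $I=0$ and is at least $c$ when $I=1$), so no positivity subtlety arises. Identity \cref{A-lem:eif-auxiliary-identities-3} for $s_0$ is used unchanged.

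With these in hand the accounting is routine. In the $c_1$ row, the cross terms with $t(H)$ and with the $\dot\pi$ summand vanish by orthogonality, the cross term with $IAs_1(Y\mid H)$ vanishes because $\indic(A=0)A\equiv 0$, and the two remaining cross terms combine across $\{I=1\}$ and $\{I=0\}$ through consistency and sequential ignorability to give $\EE[Y(0)s_0\{Y(0)\mid H\}]$, the first summand of the target. In the $c_2=\eta_0(H)$ row only $\EE\{\eta_0(H)t(H)\}$ survives, the other four terms dying because $\eta_0(H)$ is $H$-measurable together with $\EE\{s_1\mid H,A=1\}=\EE\{s_0\mid H,A=0\}=\EE\{A-\pi(H)\mid H\}=0$. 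The constant $-\theta_t^{00}$ row vanishes entirely since $S_\vartheta(\vartheta_0)$ has mean zero. Summing gives $\EE[Y(0)s_0\{Y(0)\mid H\}]+\EE\{\eta_0(H)t(H)\}$, matching \cref{A-eq:eif-aa-proofuse9.2}.

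The one delicate point I anticipate is the term $\EE\{\eta_0(H)(1-I)s_0(Y\mid H)\}$: its conditioning forces the event $\{I=0\}$, on which $A=0$ holds deterministically, so to kill it I must apply $\EE\{s_0(Y\mid H)\mid H,A=0\}=0$ exactly on that event. This is legitimate because the proof of \cref{A-lem:eif-auxiliary-identities-3} rests on the sequential ignorability of $A$ (\cref{A-asu:seq-ign-A}), which yields $Y(0)\perp A\mid H$ irrespective of the value of $I$; hence the identity holds for every $H$, including those with $I=0$. Making this reliance on ignorability explicit is what secures the argument.
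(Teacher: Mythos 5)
Your proof is correct and follows exactly the route the paper intends: the paper omits this proof, stating only that it is ``similar to (and simpler than)'' the proof of \cref{A-lem:eif-theta-11-pathwise-diff}, and your term-by-term accounting with the $A=0$ analogues of \cref{A-lem:eif-auxiliary-identities} is precisely that argument carried out. Your handling of the $\EE\{\eta_0(H)(1-I)s_0(Y|H)\}$ term---using that $A=0$ holds deterministically on $\{I=0\}$ so that \cref{A-lem:eif-auxiliary-identities-3} applies---is the right resolution of the only genuinely delicate point.
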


\begin{proof}[Proof of \cref{A-lem:eif-theta-00-pathwise-diff}]
	The proof is similar to (and simpler than) the proof of \cref{A-lem:eif-theta-11-pathwise-diff}, and is thus omitted.
\end{proof}

\subsection{EIF for $\theta_t^{ab}$ ($a \neq b$)}
\label{A-subsec:proof-eif-theta-t-ab}

The derivation of the EIF for $\theta_t^{ab}$ follows in a similar way as \cref{A-subsec:proof-eif-theta-t-aa}. We provide a sketch of the proof below.

First, we show that the function $\varphi_t^{ab}$, defined as follows, is the EIF for $\theta_t^{ab}$:
\begin{align}
	\varphi_t^{ab} & = \frac{\indic(A_t = d_t^a) ~ f_t(M_t | A_t = d_t^b, H_t) }{p_t(a | H_t) ~ f_t(M_t | A_t = d_t^a, H_t)} \Big\{ Y - \mu_t (a, H_t, M_t) \Big\} \nonumber \\
    & ~~~~ + \frac{\indic(A_t = d_t^b)}{p_t(b | H_t)} \Big\{\mu_t (a, H_t, M_t) - \nu_t(a, H_t)\Big\} + \nu_t(a, H_t), \label{A-eq:eif-ab-proofuse1}
\end{align}
where $f_t(M_t | A_t, H_t)$ is the conditional density of $M_t$ given $A_t, H_t$. (Note that $\varphi_t^{ab}$ is distinct from $\phi_t^{ab}$.) The fact that $\varphi_t^{ab}$ defined in \cref{A-eq:eif-ab-proofuse1} is the EIF can be shown in a similar fashion as the proof of Theorem 1 in \citet{tchetgen2012semiparametric}. The unique challenge beyond \citet{tchetgen2012semiparametric} is the handling of the eligibility indicator $I_t$, which can be handled in a similar way as \cref{A-subsec:proof-eif-theta-t-aa}.

Second, we show that $\phi_t^{ab} - \theta_t^{ab}$ with $\phi_t^{ab}$ defined in \cref{eq:def-phi-ab} of the main paper is equivalent to \cref{A-eq:eif-ab-proofuse1}. This is proven using Bayes theorem in a way similar to Lemma 3.2 in \citet{farbmacher2022causal}.

\section{Proofs for Asymptotic Theory and Multiple Robustness}
\label{A-sec:proof-asymptotic}

\subsection{Theorem Statement}
\label{A-subsec:asymptotic-theorem-statement}

Let $\gamma := (\alpha,\beta)$ denote the parameter of interest. Let $\zeta_t := (p_t, q_t, \eta_t, \mu_t, \nu_t)$ denote the nuisance functions corresponding to decision point $t$, and $\zeta := (\zeta_1, \zeta_2, \ldots, \zeta_T)$. Recall that the estimating function $\psi(\gamma, \zeta)$is defined as follows:
\begin{align}
	\psi(\gamma, \zeta) := \sum_{t=1}^T \omega(t)
	\left[\begin{matrix}
		\big\{ \phi_t^{10}(p_t, q_t, \mu_t, \nu_t) - \phi_t^{00}(p_t, \eta_t) - f(t)^T\alpha \big\} f(t) \\
		\big\{ \phi_t^{11}(p_t, \eta_t) - \phi_t^{10}(p_t, q_t, \mu_t, \nu_t) - f(t)^T\beta \big\} f(t)
	\end{matrix}\right]. \label{A-eq:def-psi}
\end{align}

We restate \cref{thm:estimator-normality} (asymptotic normality) separately for $\hat\gamma$ (the estimator without cross-fitting in Algorithm \ref{algo:estimator-ncf}) and $\tilde\gamma$ (the estimator with $K$-fold cross-fitting in Algorithm \ref{algo:estimator-cf}). We state explicitly the regularity conditions required for each estimator. \cref{thm:estimator-consistency} (consistency) is a by-product that will be established in \cref{A-subsec:asymptotic-proof-ncf}.

\begin{thm}[\cref{thm:estimator-normality} in main paper, $\hat\gamma$ part]
    \label{A-thm:rate-mr-ncf}
    \spacingset{1.5}
    Suppose \cref{asu:causal-assumptions} hold and consider $\gamma^\star = ((\alpha^\star)^T, (\beta^\star)^T)^T$ defined in \eqref{eq:alphabeta-star-def}. For each $t \in [T]$, suppose that the fitted nuisance function $\hat\zeta_t = (\hat{p}_t, \hat{q}_t, \hat\eta_t, \hat\mu_t, \hat\nu_t)$ converges in $L_2$ to limit $\zeta_t' = (p_t', q_t', \eta_t', \mu_t', \nu_t')$. Let $\zeta := (\zeta_1, \ldots, \zeta_T)$; similarly define $\hat\zeta$ and $\zeta'$.
    Suppose that for each $t \in [T]$, the nuisance function estimates satisfy the following rates:
    \begin{align}
    	\|\hat{p}_t - p_t^\star\| \cdot \|\hat\eta_t - \eta_t^\star\| = o_P(n^{-1/2}) \label{A-eq:nuisance-product-rate1} \\
    	\|\hat{p}_t - p_t^\star\| \cdot \|\hat\nu_t - \nu_t^\star\| = o_P(n^{-1/2}) \label{A-eq:nuisance-product-rate2} \\
    	\|\hat{q}_t - q_t^\star\| \cdot \|\hat\mu_t - \mu_t^\star\| = o_P(n^{-1/2}) \label{A-eq:nuisance-product-rate3}
    \end{align}
    Furthremore, suppose the following regularity conditions hold: \cref{A-asu:reg-general,A-asu:nuisance-conv-general,A-asu:donsker}. Then we have
    \begin{align}
        \sqrt{n} (\hat\gamma - \gamma^\star) \dto N(0, V) \text{ as } n\to\infty, \nonumber
    \end{align}
    with
    \begin{align}
    	V := \EE\{ \partial_\gamma \psi(\gamma^\star, \zeta') \}^{-1} \EE\{ \psi(\gamma^\star, \zeta') \psi(\gamma^\star, \zeta')^T\} \EE\{ \partial_\gamma \psi(\gamma^\star, \zeta') \}^{-1, T}. \label{A-eq:def-avar-V}
    \end{align}
    $V$ can be consistently estimated by $\PP_n\{ \partial_\gamma \psi(\hat\gamma, \hat\zeta) \}^{-1} \PP_n\{ \psi(\hat\gamma, \hat\zeta) \psi(\hat\gamma, \hat\zeta)^T\} \PP_n\{ \partial_\gamma \psi(\hat\gamma, \hat\zeta) \}^{-1, T}$.
\end{thm}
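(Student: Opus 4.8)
The plan is to exploit the fact that $\psi(\gamma,\zeta)$ is \emph{affine} in $\gamma$, which makes the estimating equation essentially closed-form and renders the usual Taylor expansion exact. Writing $\psi(\gamma,\zeta) = \psi(0,\zeta) - \Lambda\gamma$ with
\[
\Lambda := \sumt \omega(t)\begin{bmatrix} f(t)f(t)^\top & 0 \\ 0 & f(t)f(t)^\top \end{bmatrix},
\]
the Jacobian $\partial_\gamma\psi(\gamma,\zeta) = -\Lambda$ is a deterministic matrix depending on neither $\gamma$, $\zeta$, nor the data. Under the regularity condition that $\sumt\omega(t)f(t)f(t)^\top$ is nonsingular (part of \cref{A-asu:reg-general}), $\Lambda$ is invertible, $\hat\gamma = \Lambda^{-1}\PP_n\psi(0,\hat\zeta)$, and the algebraic identity $\sqrt n(\hat\gamma - \gamma^\star) = \Lambda^{-1}\sqrt n\,\PP_n\psi(\gamma^\star,\hat\zeta)$ holds exactly. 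In particular the ``bread'' of the sandwich is known exactly, $\PP_n\{\partial_\gamma\psi(\hat\gamma,\hat\zeta)\} = -\Lambda = \EE\{\partial_\gamma\psi(\gamma^\star,\zeta')\}$, so the whole analysis reduces to the single random vector $\sqrt n\,\PP_n\psi(\gamma^\star,\hat\zeta)$.

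For consistency (which yields \cref{thm:estimator-consistency} as a by-product), I would first record the multiple-robustness identity $\EE\{\psi(\gamma^\star,\zeta')\} = 0$. This follows from \cref{thm:eif}: each $\phi_t^{ab}$ is a Neyman-orthogonal EIF-based score, so under any one of the four correctly-specified nuisance pairs of \cref{thm:estimator-consistency} its population mean at $\zeta'$ equals $\theta_t^{ab}$; the differences then telescope to $\ndeep_t$ and $\nieet_t$, and the projection definition \eqref{eq:alphabeta-star-def} makes $\gamma^\star$ the unique zero. Combining this with a uniform law of large numbers over a neighborhood of $\zeta'$ and the $L_2$ convergence $\hat\zeta\to\zeta'$ gives $\PP_n\psi(\gamma^\star,\hat\zeta)\pto 0$, whence $\hat\gamma\pto\gamma^\star$.

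For asymptotic normality I would decompose
\begin{align*}
\sqrt n\,\PP_n\psi(\gamma^\star,\hat\zeta)
&= \sqrt n\,\PP_n\psi(\gamma^\star,\zeta')
+ \sqrt n\,(\PP_n - P)\{\psi(\gamma^\star,\hat\zeta)-\psi(\gamma^\star,\zeta')\} \\
&\quad + \sqrt n\,P\{\psi(\gamma^\star,\hat\zeta)-\psi(\gamma^\star,\zeta')\},
\end{align*}
where $P(\cdot)$ denotes population expectation with $\hat\zeta$ held fixed. The first term converges to $N(0,\EE\{\psi(\gamma^\star,\zeta')\psi(\gamma^\star,\zeta')^\top\})$ by the CLT. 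The second (empirical-process) term is where the absence of cross-fitting bites: I would make it $o_P(1)$ via the Donsker condition \cref{A-asu:donsker} together with $\|\hat\zeta-\zeta'\|\pto 0$, invoking asymptotic equicontinuity. Premultiplying the whole limit by $\Lambda^{-1}$ and applying Slutsky then delivers $V = \Lambda^{-1}\EE\{\psi\psi^\top\}\Lambda^{-1\top}$.

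The main obstacle is the third (bias) term. Here I would establish a second-order \emph{mixed-bias} expansion of $P\{\phi_t^{ab}(\hat\zeta)\} - \theta_t^{ab}$ for each $(a,b)$ and $t$, showing that all first-order nuisance perturbations cancel by Neyman orthogonality and that the remainder factorizes into products of pairs of nuisance errors---specifically cross terms of the order $\|\hat p_t - p_t^\star\|\cdot\|\hat\nu_t-\nu_t^\star\|$ and $\|\hat q_t - q_t^\star\|\cdot\|\hat\mu_t - \mu_t^\star\|$ (and $\|\hat p_t - p_t^\star\|\cdot\|\hat\eta_t - \eta_t^\star\|$ for $\phi_t^{aa}$). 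Each such product is $o_P(n^{-1/2})$ by \eqref{A-eq:nuisance-product-rate1}--\eqref{A-eq:nuisance-product-rate3}, so multiplying by $\sqrt n$ and summing the finitely many terms over $t$ gives $o_P(1)$. Deriving this factorization---using the representations \eqref{eq:def-phi-aa}--\eqref{eq:def-phi-ab}, Bayes' theorem to relate $q_t$ to the mediator density exactly as in \cref{A-lem:id-cond-exp-equals-ipw}, and iterated expectations to show the single-error terms integrate to zero---is the technically demanding step, and the eligibility indicator $I_t$ must be tracked throughout just as in the EIF derivation. Finally, consistency of the variance estimator follows at once, since the bread $-\Lambda$ is exact while the meat $\PP_n\{\psi(\hat\gamma,\hat\zeta)\psi(\hat\gamma,\hat\zeta)^\top\}\pto\EE\{\psi(\gamma^\star,\zeta')\psi(\gamma^\star,\zeta')^\top\}$ by a uniform law of large numbers combined with $\hat\gamma\pto\gamma^\star$ and $\hat\zeta\to\zeta'$.
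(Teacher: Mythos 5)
Your proposal follows essentially the same route as the paper's proof: linearize the estimating equation around $\gamma^\star$, decompose $\sqrt{n}\,\PP_n\psi(\gamma^\star,\hat\zeta)$ into a CLT term, an empirical-process term killed by the Donsker condition plus $L_2$-convergence of $\hat\zeta$ (the paper invokes Lemma 19.24 of van der Vaart for exactly this), and a bias term controlled by a mixed-bias factorization into the products in \eqref{A-eq:nuisance-product-rate1}--\eqref{A-eq:nuisance-product-rate3} (the paper's Lemmas on rate double/multiple robustness of $\phi_t^{aa}$ and $\phi_t^{ab}$). The one genuine difference is your observation that $\psi(\gamma,\zeta)$ is affine in $\gamma$ with deterministic Jacobian $-\Lambda$, which makes the expansion exact and removes the need for the paper's mean-value-theorem step, the convergence-of-the-Jacobian lemma, and the Donsker requirement on the derivative class; this is a legitimate simplification for this particular estimand, whereas the paper deliberately retains the general M-estimation machinery so the lemmas apply to other estimating procedures. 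Two cautions. First, the claim that the population-mean and mixed-bias identities follow ``by Neyman orthogonality'' is too quick: the expansion point is $\zeta'$, not the truth $\zeta^\star$, and only some components of $\zeta'$ coincide with $\zeta^\star$ (one of four configurations, which you must first deduce from the product-rate conditions together with $\hat\zeta_t\to\zeta_t'$ in $L_2$); the cancellation of the $\zeta'$-dependent remainder holds only under one of those four configurations and requires the explicit iterated-expectation algebra the paper carries out case by case, not a generic orthogonality argument. Second, your factorization for $\phi_t^{ab}$ must produce the specific pairings $\|\hat p_t-p_t^\star\|\cdot\|\hat\nu_t-\nu_t^\star\|$ and $\|\hat q_t-q_t^\star\|\cdot\|\hat\mu_t-\mu_t^\star\|$ (not, say, $\hat p$ against $\hat\mu$), which you correctly name but which is the substantive computation; as a proposal this is acceptable, but the proof is incomplete until that algebra is done.
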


\begin{thm}[\cref{thm:estimator-normality} in main paper, $\tilde\gamma$ part]
    \label{A-thm:rate-mr-cf}
    \spacingset{1.5}
    Suppose \cref{asu:causal-assumptions} hold and consider $\gamma^\star = ((\alpha^\star)^T, (\beta^\star)^T)^T$ defined in \eqref{eq:alphabeta-star-def}. For each $t \in [T]$, suppose that $\hat\zeta_{kt} = (\hat{p}_{kt}, \hat{q}_{kt}, \hat\eta_{kt}, \hat\mu_{kt}, \hat\nu_{kt})$, the fitted nuisance functions using $B_k^c$, converges in $L_2$ to same limit $\zeta_t'$. Let $\zeta := (\zeta_1, \ldots, \zeta_T)$; similarly define $\hat\zeta_k$ and $\zeta'$.
    Suppose that for each $t \in [T]$, the nuisance function estimates satisfy the following rates for all $k \in [K]$:
    \begin{align}
    	\|\hat{p}_{kt} - p_t^\star\| \cdot \|\hat\eta_{kt} - \eta_t^\star\| = o_P(n^{-1/2}) \label{A-eq:nuisance-product-rate1-cf} \\
    	\|\hat{p}_{kt} - p_t^\star\| \cdot \|\hat\nu_{kt} - \nu_t^\star\| = o_P(n^{-1/2}) \label{A-eq:nuisance-product-rate2-cf} \\
    	\|\hat{q}_{kt} - q_t^\star\| \cdot \|\hat\mu_{kt} - \mu_t^\star\| = o_P(n^{-1/2}) \label{A-eq:nuisance-product-rate3-cf}
    \end{align}
    Furthremore, suppose the following regularity conditions hold: \cref{A-asu:reg-general,A-asu:nuisance-conv-general-cf,A-asu:reg-bounded-ee-deriv-and-meat}. Then we have
    \begin{align}
        \sqrt{n} (\tilde\gamma - \gamma^\star) \dto N(0, V) \text{ as } n\to\infty, \nonumber
    \end{align}
    with $V$ defined in \cref{A-eq:def-avar-V}. $V$ can be consistently estimated by 
	\begin{align*}
	    \bigg[\frac{1}{K}\sum_{k=1}^K \PP_{n,k}\big\{ \partial_\gamma \psi(\hat\gamma, \hat\zeta_k) \big\} \bigg]^{-1} \bigg[\frac{1}{K}\sum_{k=1}^K \PP_{n,k} \big\{ \psi(\hat\gamma, \hat\zeta_k) \psi(\hat\gamma, \hat\zeta_k)^T \big\} \bigg]
	    \bigg[\frac{1}{K}\sum_{k=1}^K \PP_{n,k}\big\{ \partial_\gamma \psi(\hat\gamma, \hat\zeta_k) \big\} \bigg]^{-1,T},
	\end{align*}
	where $\PP_{n,k}$ denotes the empirical average over observations from the $k$-th partition $B_k$.
\end{thm}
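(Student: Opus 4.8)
The plan is to treat $\tilde\gamma$ as a $Z$-estimator and exploit the fact that, by construction in \eqref{A-eq:def-psi}, $\psi(\gamma,\zeta)$ is \emph{affine} in $\gamma$: its Jacobian $\partial_\gamma\psi = -\sum_{t=1}^T\omega(t)\,\mathrm{blkdiag}\{f(t)f(t)^\top,\,f(t)f(t)^\top\}$ is a fixed deterministic matrix that depends on neither $\zeta$, the data, nor $\gamma$. Write $M := \EE\{\partial_\gamma\psi(\gamma^\star,\zeta')\}$, which equals this matrix and is invertible under the regularity condition that $\sum_t\omega(t)f(t)f(t)^\top$ is nonsingular. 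Because $\partial_\gamma\psi$ is constant, the estimating equation $K^{-1}\sum_k\PP_{n,k}\psi(\tilde\gamma,\hat\zeta_k)=0$ rearranges \emph{exactly} (no Taylor remainder) to
\begin{align}
\tilde\gamma - \gamma^\star = -M^{-1}\,\frac1K\sum_{k=1}^K\PP_{n,k}\,\psi(\gamma^\star,\hat\zeta_k). \nonumber
\end{align}
It thus suffices to analyze $\sqrt n\,K^{-1}\sum_k\PP_{n,k}\psi(\gamma^\star,\hat\zeta_k)$; both consistency and normality follow from this single linear representation via Slutsky.

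For the analysis I would add and subtract the limit $\zeta'$ and use that equal-sized folds give $K^{-1}\sum_k\PP_{n,k}g=\PP_n g$ for any fixed $g$, producing
\begin{align}
\frac1K\sum_{k}\PP_{n,k}\psi(\gamma^\star,\hat\zeta_k)
= \underbrace{\PP_n\psi(\gamma^\star,\zeta')}_{(\mathrm{I})}
+ \underbrace{\frac1K\sum_{k}(\PP_{n,k}-P)\{\psi(\gamma^\star,\hat\zeta_k)-\psi(\gamma^\star,\zeta')\}}_{(\mathrm{II})}
+ \underbrace{\frac1K\sum_{k}P\,\psi(\gamma^\star,\hat\zeta_k)}_{(\mathrm{III})}, \nonumber
\end{align}
where in $(\mathrm{III})$ I used $P\,\psi(\gamma^\star,\zeta')=0$. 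This last identity is precisely the multiple-robustness statement: under any one admissible pair of correctly specified limits (the consistency conditions of \cref{thm:estimator-consistency}, which the product rates \eqref{A-eq:nuisance-product-rate1-cf}--\eqref{A-eq:nuisance-product-rate3-cf} imply) one has $P\{\phi_t^{10}(\zeta')-\phi_t^{00}(\zeta')\}=\ndeep_t$ and $P\{\phi_t^{11}(\zeta')-\phi_t^{10}(\zeta')\}=\nieet_t$, so the defining normal equations \eqref{eq:alphabeta-star-def} for $\gamma^\star$ force $P\psi(\gamma^\star,\zeta')=0$. Term $(\mathrm{I})$ is then an i.i.d.\ mean-zero average, and the Lindeberg--L\'evy CLT gives $\sqrt n\,(\mathrm{I})\dto N\big(0,\,\EE\{\psi(\gamma^\star,\zeta')\psi(\gamma^\star,\zeta')^\top\}\big)$.

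Term $(\mathrm{II})$ is where cross-fitting earns its keep. Conditioning on the out-of-fold sample $B_k^c$, the estimate $\hat\zeta_k$ is fixed and independent of $B_k$, so $(\PP_{n,k}-P)\{\psi(\gamma^\star,\hat\zeta_k)-\psi(\gamma^\star,\zeta')\}$ is a centered average of i.i.d.\ terms with conditional variance bounded by $\tfrac{K}{n}\,P\|\psi(\gamma^\star,\hat\zeta_k)-\psi(\gamma^\star,\zeta')\|^2$; hence $\sqrt n$ times it has conditional variance $K\,P\|\psi(\gamma^\star,\hat\zeta_k)-\psi(\gamma^\star,\zeta')\|^2$. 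Under the continuity/boundedness regularity conditions, $L_2$ convergence $\hat\zeta_k\to\zeta'$ makes this quantity $\pto 0$, so a conditional Chebyshev argument yields $\sqrt n\,(\mathrm{II})=o_P(1)$ with \emph{no} Donsker condition. The main obstacle is term $(\mathrm{III})$, the drift term: I must show $\sqrt n\,P\,\psi(\gamma^\star,\hat\zeta_k)=o_P(1)$. Here I would derive the von~Mises (second-order) expansion of $\zeta\mapsto P\,\psi(\gamma^\star,\zeta)$ around the truth $\zeta^\star$. Exploiting the Neyman-orthogonal structure of the EIF-based components $\phi_t^{aa}$ and $\phi_t^{ab}$ from \cref{thm:eif}, the first-order Gateaux derivatives cancel and the remainder collapses to a sum of \emph{pairwise products} of nuisance errors,
\begin{align}
P\,\psi(\gamma^\star,\hat\zeta_k) = O_P\bigg( \sum_{t=1}^T \big\{ \|\hat p_{kt}-p_t^\star\|\,\|\hat\eta_{kt}-\eta_t^\star\| + \|\hat p_{kt}-p_t^\star\|\,\|\hat\nu_{kt}-\nu_t^\star\| + \|\hat q_{kt}-q_t^\star\|\,\|\hat\mu_{kt}-\mu_t^\star\| \big\} \bigg), \nonumber
\end{align}
each summand being $o_P(n^{-1/2})$ by \eqref{A-eq:nuisance-product-rate1-cf}--\eqref{A-eq:nuisance-product-rate3-cf}, so $\sqrt n\,(\mathrm{III})=o_P(1)$. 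The delicate part is the bookkeeping that verifies every cross term factorizes into exactly one of these three admissible products with the correct weight; this is the computational heart of the proof and adapts, to the excursion-effect functionals with the eligibility indicator $I_t$, the mixed-bias calculus underlying \citet{tchetgen2012semiparametric} and \citet{farbmacher2022causal}.

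Combining the three terms by Slutsky gives $\sqrt n(\tilde\gamma-\gamma^\star)=-M^{-1}\sqrt n\,\PP_n\psi(\gamma^\star,\zeta')+o_P(1)\dto N(0,V)$ with $V=M^{-1}\EE\{\psi(\gamma^\star,\zeta')\psi(\gamma^\star,\zeta')^\top\}M^{-1\top}$, matching \eqref{A-eq:def-avar-V}; the $O_P(1)$ bound from this representation also delivers $\tilde\gamma\pto\gamma^\star$ as the advertised byproduct. For the variance estimator I would argue each plug-in factor converges: $K^{-1}\sum_k\PP_{n,k}\partial_\gamma\psi(\tilde\gamma,\hat\zeta_k)=M$ \emph{exactly} by linearity, while $K^{-1}\sum_k\PP_{n,k}\{\psi(\tilde\gamma,\hat\zeta_k)\psi(\tilde\gamma,\hat\zeta_k)^\top\}\pto\EE\{\psi(\gamma^\star,\zeta')\psi(\gamma^\star,\zeta')^\top\}$ by a fold-wise law of large numbers using $\tilde\gamma\pto\gamma^\star$, $\hat\zeta_k\to\zeta'$ in $L_2$, and the moment/continuity regularity conditions. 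The continuous mapping theorem then gives consistency of the sandwich estimator for $V$.
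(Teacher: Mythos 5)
Your proposal is correct, and it is genuinely more explicit than what the paper does: the paper omits this proof entirely, stating only that it verifies the conditions of Theorems 3.1 and 3.2 of \citet{chernozhukov2018double} with details as in Theorem 4.2 of \citet{farbmacher2022causal} and reuses the machinery from the non-cross-fitted case (\cref{A-thm:rate-mr-ncf}). Your three-term decomposition (i.i.d.\ leading term, fold-wise empirical-process term killed by conditional Chebyshev, drift term killed by product rates) is exactly the DML argument the paper is implicitly invoking, so the substance agrees; what you add is the observation that $\psi$ is affine in $\gamma$ with a deterministic Jacobian $-\sum_t\omega(t)\,\mathrm{blkdiag}\{f(t)f(t)^\top,f(t)f(t)^\top\}$, which makes the linearization exact and removes both the mean-value-theorem step and the Jacobian-convergence lemma that the paper needs for $\hat\gamma$, and it makes the ``bread'' of the sandwich estimator exact rather than merely consistent. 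Two small economies are available to you from the paper itself: the identity $P\psi(\gamma^\star,\zeta')=0$ is \cref{A-lem:psi-multiply-robust} (your observation that the product rates force the limits into one of the four robustness configurations is the right justification for its hypotheses), and the ``delicate bookkeeping'' you defer for term $(\mathrm{III})$ is precisely \cref{A-lem:psi-rate-multiply-robust} applied to $\hat\zeta_k$ in place of $\hat\zeta$ — it already delivers $|P\{\psi(\gamma^\star,\hat\zeta_k)-\psi(\gamma^\star,\zeta')\}|\lesssim\max_t\{\|\hat p_{kt}-p_t^\star\|(\|\hat\eta_{kt}-\eta_t^\star\|+\|\hat\nu_{kt}-\nu_t^\star\|)+\|\hat q_{kt}-q_t^\star\|\,\|\hat\mu_{kt}-\mu_t^\star\|\}$, so no new mixed-bias calculation is needed. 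Your approach buys a self-contained proof that trades the Donsker condition of the non-cross-fitted argument for sample splitting, which is exactly the point of cross-fitting; the paper's approach buys brevity by outsourcing to the general DML theorems.
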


\subsection{Regularity Conditions}
\label{A-subsec:asymptotic-regularity-conditions}

We present the regularity conditions in establishing asymptotic normality. Here we use $\zeta'$ to denote the $L_2$ limit of the nuisance parameters, and $\cZ$ to denote the space of all possible $\zeta$.

\begin{asu}[Regularity conditions]
	\label{A-asu:reg-general}
	~
	\begin{asulist}
		\item \label{A-asu:unique-zero} Suppose that for fixed $\zeta$, $\PP\{\psi(\gamma, \zeta)\} = 0$ has a unique solution in terms of $\gamma$.
		\item \label{A-asu:reg-compact-param-space} Suppose the parameter space $\Theta$ of $\gamma$ is compact.
		\item \label{A-asu:reg-bounded-obs} Suppose the support of $O_i$ is bounded.
		\item \label{A-asu:reg-cont-PPee} Suppose $\PP\{\psi(\gamma, \zeta')\}$ is a continuous function in $\gamma$.
		\item \label{A-asu:reg-bounded-and-cont-differentiable-ee} Suppose $\psi(\gamma,\zeta)$ is continuously differentiable in $\gamma$, and the class $\{\psi(\gamma,\zeta): \gamma \in \Theta, \zeta \in \cZ\}$ is uniformly bounded and bounded by an integrable function.
		\item \label{A-asu:reg-dominated-ee-deriv} Suppose $\partial_\gamma \psi(\gamma,\zeta) := \frac{\partial \psi(\gamma,\zeta)}{\partial\gamma^T}$ is bounded by an integrable function.
		\item \label{A-asu:reg-dominated-ee-meat} Suppose $\psi(\gamma,\zeta) \psi(\gamma,\zeta)^T$ is bounded by an integrable function.
		\item \label{A-asu:reg-invertible-ee-deriv} Suppose $\PP\{\partial_\gamma \psi(\gamma^\star,\zeta')\}$ is invertible.
	\end{asulist}
\end{asu}

We note that the regularity conditions in \cref{A-asu:reg-general} can all be verified straightforwardly using the form of $\psi$ in \cref{A-eq:def-psi}. We keep them as assumptions instead of proven statements, so that the technical theory developed below can be potentially applicable to other estimation procedures.

The following are additional assumptions needed for establishing the asymptotics for the non-cross-fitted estimator $\hat\gamma$.

\begin{asu}[Convergence of nuisance parameter estimator]
	\label{A-asu:nuisance-conv-general}
	Suppose there exists $\zeta' \in \cZ$ such that the following hold.
	\begin{asulist}
		\item \label{A-asu:nuisance-conv-PPee-sup} $\sup_{\gamma \in \Theta} | \PP \psi(\gamma, \hat\zeta) - \PP \psi(\gamma, \zeta') | = o_P(1)$;
		\item \label{A-asu:nuisance-conv-ee-l2} $\| \psi(\gamma^\star, \hat\zeta) - \psi(\gamma^\star, \zeta') \|^2  := \int |\psi(\gamma^\star, \hat\zeta) - \psi(\gamma^\star, \zeta')|^2 dP = o_P(1)$;
		\item \label{A-asu:nuisance-conv-PPee-deriv} $| \PP \{\partial_\gamma \psi(\gamma^\star, \hat\zeta)\} - \PP \{\partial_\gamma \psi(\gamma^\star, \zeta')\} | = o_P(1)$;
		\item \label{A-asu:nuisance-conv-PPee-meat} $| \PP \{\psi(\gamma^\star, \hat\zeta) \psi(\gamma^\star, \hat\zeta)^T\} - \PP \{\psi(\gamma^\star, \zeta') \psi(\gamma^\star, \zeta')^T \} | = o_P(1)$.
	\end{asulist}
\end{asu}

\begin{asu}[Donsker condition]
	\label{A-asu:donsker}
	Suppose $\Psi := \{\psi(\gamma, \zeta): \gamma \in \Theta, \zeta \in \cZ\}$ and $\{\partial_\gamma \psi(\gamma,\zeta): \gamma \in \Theta, \zeta \in \cZ\}$ are $P$-Donsker classes.
\end{asu}

The following are additional assumptions needed for establishing the asymptotics for the cross-fitted estimator $\check\gamma$.

\begin{asu}[Convergence of nuisance parameter estimator (cross-fitting)]
	\label{A-asu:nuisance-conv-general-cf}
	Suppose there exists $\zeta' \in \cZ$ such that the following hold.
	\begin{asulist}
		\item \label{A-asu:nuisance-conv-PPee-sup-cf} For each $k\in[K]$, $\sup_{\gamma \in \Theta} | \PP \psi(\gamma, \hat\zeta_k) - \PP \psi(\gamma, \zeta') | = o_P(1)$;
		\item \label{A-asu:nuisance-conv-ee-l2-cf} For each $k\in[K]$, $ \| \psi(\gamma^\star, \hat\zeta_k)  - \psi(\gamma^\star, \zeta') \|^2  = o_P(1)$.
		\item \label{A-asu:nuisance-conv-PPee-deriv-cf} For each $k\in[K]$, $| \PP \{\partial_\gamma \psi(\gamma^\star, \hat\zeta_k)\} - \PP \{\partial_\gamma \psi(\gamma^\star, \zeta')\} | = o_P(1)$;
		\item \label{A-asu:nuisance-conv-PPee-meat-cf} For each $k\in[K]$, $| \PP \{\psi(\gamma^\star, \hat\zeta_k) \psi(\gamma^\star, \hat\zeta_k)^T\} - \PP \{\psi(\gamma^\star, \zeta') \psi(\gamma^\star, \zeta')^T \} | = o_P(1)$.
	\end{asulist}
\end{asu}

\begin{asu}[Additional regularity conditions (cross-fitting)]
	\label{A-asu:reg-bounded-ee-deriv-and-meat}
	Suppose $\partial_\gamma \psi(\gamma,\zeta) := \frac{\partial \psi(\gamma,\zeta)}{\partial\gamma^T}$ and $\psi(\gamma,\zeta) \psi(\gamma,\zeta)^T$ are uniformly bounded.
\end{asu}

\subsection{Lemmas on Multiple Robustness}
\label{A-subsec:lemmas-robustness}

We state and prove lemmas that elucidates the robustness properties of the estimating functions $\psi$ defined in \cref{A-eq:def-psi}.

\begin{lem}[$\phi_t^{aa}$ is doubly robust]
	\label{A-lem:phi-aa-dr}
	Fix any $t \in [T]$ and $a \in \{0,1\}$. Consider $\phi^{aa}_t$ defined in \cref{eq:def-phi-aa}, whose sample average is used to estimate $\theta_t^{aa}$ in Algorithm \ref{algo:estimator-ncf}:
	\begin{align*}
	    \phi_t^{aa}(p_t, \eta_t) := \frac{\indic(A_t = d_t^a)}{p_t(a | H_t)} Y - \frac{\indic(A_t = d_t^a) - p_t(a | H_t)}{p_t(a | H_t)} \eta_t(a, H_t).
	\end{align*}
	Suppose either $p_t = p_t^\star$ or $\eta_t = \eta_t^\star$, then $\PP \{\phi_t^{aa}(p_t, \eta_t)\} = \theta_t^{aa}$.
\end{lem}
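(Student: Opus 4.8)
The plan is to exploit the augmented inverse-probability-weighting (AIPW) structure of $\phi_t^{aa}$, which makes the cancellation behind double robustness transparent, and then to evaluate $\PP\{\phi_t^{aa}(p_t,\eta_t)\}$ by iterated expectation conditioning on $H_t$. First I would record the algebraic identity, obtained by splitting the second term of \eqref{eq:def-phi-aa} and recombining,
\begin{align*}
    \phi_t^{aa}(p_t, \eta_t) = \frac{\indic(A_t = d_t^a)}{p_t(a \mid H_t)} \big\{ Y - \eta_t(a, H_t) \big\} + \eta_t(a, H_t),
\end{align*}
in which the first summand is a correction term and the second is the outcome-regression plug-in. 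The main computational tool is the conditional-expectation identity
\begin{align*}
    \EE\Big[ \indic(A_t = d_t^a)\, g(Y, H_t) \,\Big|\, H_t \Big] = p_t^\star(a \mid H_t)\, \EE\big[ g(Y, H_t) \,\big|\, A_t = d_t^a, H_t \big],
\end{align*}
valid for any integrable $g$, together with the definitions $p_t^\star(a\mid H_t) = P(A_t = d_t^a \mid H_t)$ and $\eta_t^\star(a, H_t) = \EE(Y \mid A_t = d_t^a, H_t)$ from \eqref{eq:def-true-nuisance-function}, and the identification formula $\theta_t^{aa} = \EE\{\eta_t^\star(a, H_t)\}$ from \cref{thm:identification}.

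With these in hand I would dispatch the two cases. In Case (i), where $p_t = p_t^\star$, conditioning on $H_t$ and applying the identity turns the inverse-probability term into $\eta_t^\star(a, H_t) - \eta_t(a, H_t)$; adding back $\eta_t(a, H_t)$ leaves $\eta_t^\star(a, H_t)$, so the (possibly misspecified) $\eta_t$ drops out entirely, and taking the outer expectation and invoking \cref{thm:identification} gives $\theta_t^{aa}$. In Case (ii), where $\eta_t = \eta_t^\star$, conditioning on $H_t$ shows the correction term has conditional mean zero, since $\EE\{Y - \eta_t^\star(a, H_t) \mid A_t = d_t^a, H_t\} = 0$ by the very definition of $\eta_t^\star$; hence only $\eta_t^\star(a, H_t)$ survives regardless of $p_t$, and again $\PP\{\phi_t^{aa}(p_t,\eta_t^\star)\} = \theta_t^{aa}$.

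This is a textbook AIPW double-robustness argument, so I do not anticipate a genuine obstacle; the only care required is bookkeeping with the eligibility-aware rule $d_t^a$ and checking that every denominator is well-defined. When $I_t = 0$ we have $d_t^a = 0$ and $p_t^\star(a \mid H_t) = 1$, while when $I_t = 1$ \cref{asu:positivity} bounds $p_t^\star$ away from $0$ and $1$; together these justify each division. Consistency (\cref{asu:consistency}) and sequential ignorability of $A_t$ (\cref{asu:seq-ign-A}) are already absorbed into the identification formula, so beyond \cref{thm:identification} they need not be re-invoked.
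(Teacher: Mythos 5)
Your proposal is correct and follows essentially the same argument as the paper: both cases reduce to the standard AIPW cancellation, with the only cosmetic difference being that in the $p_t = p_t^\star$ case the paper invokes the IPW identification \eqref{eq:identify-theta-aa-ipw} for the leading term and kills the augmentation via $\EE\{\indic(A_t=d_t^a) - p_t^\star(a\mid H_t)\mid H_t\}=0$, whereas you recombine into the centered form first and land on the outcome-regression identification \eqref{eq:identify-theta-aa-ie}. Your extra remark on the eligibility case $I_t=0$ (where $d_t^a=0$ and $p_t^\star(a\mid H_t)=1$) is a correct and welcome, though not strictly necessary, piece of bookkeeping.
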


\begin{proof}[Proof of \cref{A-lem:phi-aa-dr}]
	When $p_t = p_t^\star$, for any $\eta_t$, we use the identification of $\theta_t^{aa}$ in \cref{eq:identify-theta-aa-ipw} to get
	\begin{align*}
		\EE \{\phi_t^{aa}(p_t^\star, \eta_t)\} & = \EE \bigg\{\frac{\indic(A_t = d_t^a)}{p_t^\star(a | H_t)} Y \bigg\} - \EE \bigg[ \EE \bigg\{ \frac{\indic(A_t = d_t^a) - p_t^\star(a | H_t)}{p_t^\star(a | H_t)} ~\bigg|~ H_t \bigg\} \eta_t(a,H_t) \bigg] \\
		& = \theta_t^{aa} - 0 = \theta_t^{aa}.
	\end{align*}

	When $\eta_t = \eta_t^\star$, for any $p_t$, we use the identification of $\theta_t^{aa}$ in \cref{eq:identify-theta-aa-ie} to get
	\begin{align*}
		\EE \{\phi_t^{aa}(p_t, \eta_t^\star)\} & = \EE \bigg[ \frac{\indic(A_t = d_t^a)}{p_t(a | H_t)} \{Y - \eta_t^\star(a,H_t)\} \bigg] + \EE\{\eta_t^\star(a,H_t)\} \\
		& = 0 + \theta_t^{aa} = \theta_t^{aa}.
	\end{align*}

	This completes the proof.
\end{proof}

\bigskip

\begin{lem}[$\phi_t^{aa}$ is rate doubly robust]
	\label{A-lem:phi-aa-rate-dr}
	Fix any $t \in [T]$ and $a \in \{0,1\}$. Consider $\phi^{aa}_t$ defined in \cref{eq:def-phi-aa}. Suppose either $p_t' = p_t^\star$ or $\eta_t' = \eta_t^\star$, then
	\begin{align*}
		\Big|\PP \big\{\phi_t^{aa}(\hat{p}_t, \hat\eta_t) - \phi_t^{aa}(p_t', \eta_t') \big\} \Big| \lesssim \|\hat{p}_t - p_t^\star\| \cdot \|\hat\eta_t - \eta_t^\star\|.
	\end{align*}
\end{lem}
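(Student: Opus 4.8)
The plan is to recognize the left-hand side as a population-level (``drift'') bias and reduce it to a second-order mixed-bias remainder, which is the von Mises expansion underlying the double-robustness property. Here $\PP\{\phi_t^{aa}(\hat{p}_t,\hat\eta_t)\}$ denotes the population expectation $\int \phi_t^{aa}(o;\hat{p}_t,\hat\eta_t)\,dP(o)$ with the fitted nuisances held fixed, so both terms on the left are ordinary expectations under $\cP_0$ and $\PP$ is linear.

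First I would derive, for an arbitrary pair $(p_t,\eta_t)$ with $p_t(a\mid\cdot)$ bounded away from zero, the exact identity
\[
    \PP\{\phi_t^{aa}(p_t,\eta_t)\} - \theta_t^{aa} = \EE\!\left[ \frac{p_t^\star(a\mid H_t) - p_t(a\mid H_t)}{p_t(a\mid H_t)}\,\big\{\eta_t^\star(a,H_t) - \eta_t(a,H_t)\big\}\right].
\]
To obtain this I would rewrite $\phi_t^{aa}(p_t,\eta_t) = \frac{\indic(A_t=d_t^a)}{p_t(a\mid H_t)}\{Y-\eta_t(a,H_t)\} + \eta_t(a,H_t)$, condition on $H_t$, and use $\EE\{\indic(A_t=d_t^a)\mid H_t\} = p_t^\star(a\mid H_t)$ together with $\EE\{\indic(A_t=d_t^a)Y\mid H_t\} = p_t^\star(a\mid H_t)\,\eta_t^\star(a,H_t)$; subtracting $\theta_t^{aa} = \EE\{\eta_t^\star(a,H_t)\}$, which is the identification \eqref{eq:identify-theta-aa-ie}, and collecting terms yields the product form. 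This identity immediately reproves \cref{A-lem:phi-aa-dr} and makes transparent that the bias vanishes whenever either factor is zero.

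Next I would specialize the identity twice. Applied at the limit $(p_t',\eta_t')$, the hypothesis that $p_t'=p_t^\star$ or $\eta_t'=\eta_t^\star$ makes one factor identically zero, so $\PP\{\phi_t^{aa}(p_t',\eta_t')\} = \theta_t^{aa}$. Applied at $(\hat{p}_t,\hat\eta_t)$, the left-hand side of the lemma therefore equals the remainder $\EE[\{(p_t^\star-\hat{p}_t)/\hat{p}_t\}(\eta_t^\star-\hat\eta_t)]$. Bounding $1/\hat{p}_t$ by a constant and applying Cauchy--Schwarz gives
\[
    \big|\PP\{\phi_t^{aa}(\hat{p}_t,\hat\eta_t) - \phi_t^{aa}(p_t',\eta_t')\}\big| \lesssim \EE\big[|p_t^\star-\hat{p}_t|\,|\eta_t^\star-\hat\eta_t|\big] \le \|\hat{p}_t - p_t^\star\|\cdot\|\hat\eta_t-\eta_t^\star\|,
\]
as claimed, where $\|\cdot\|$ is the $L_2(P)$ norm.

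The only genuine technical point is the uniform lower bound on the working propensity $\hat{p}_t$ needed to control $1/\hat{p}_t$. This is not automatic from \cref{asu:positivity}, which only bounds the true $p_t^\star$ away from $0$ and $1$; I would secure it either by invoking the boundedness regularity conditions on the nuisance class (\cref{A-asu:reg-bounded-obs} and \cref{A-asu:reg-bounded-and-cont-differentiable-ee}) or, in the case $p_t'=p_t^\star$, by combining positivity of the limit with $L_2$-convergence of $\hat{p}_t$ so that $\hat{p}_t$ is eventually bounded below with probability tending to one. Everything else is routine conditional-expectation bookkeeping.
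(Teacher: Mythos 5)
Your proof is correct. The one substantive difference from the paper's argument is organizational: the paper computes $\PP\{\phi_t^{aa}(\hat{p}_t,\hat\eta_t)-\phi_t^{aa}(p_t',\eta_t')\}$ head-on, puts all six terms over the common denominator $\hat{p}_a p_a'$, and then case-splits on whether $p_t'=p_t^\star$ or $\eta_t'=\eta_t^\star$, observing that in either case the numerator factors as a multiple of $(p_a^\star-\hat{p}_a)(\eta_a^\star-\hat\eta_a)$. You instead establish the exact von Mises--type identity $\PP\{\phi_t^{aa}(p_t,\eta_t)\}-\theta_t^{aa}=\EE[\{(p_t^\star-p_t)/p_t\}(\eta_t^\star-\eta_t)]$ once, for arbitrary nuisances, and apply it twice: at the limit it vanishes under the hypothesis, and at the fitted nuisances it is the mixed-bias remainder. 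This buys you three things: the case analysis disappears (both of the paper's cases collapse to the same expression $\EE[(p_a^\star-\hat{p}_a)(\eta_a^\star-\hat\eta_a)/\hat{p}_a]$ that your identity produces directly), the double-robustness statement of \cref{A-lem:phi-aa-dr} falls out as a corollary, and the role of the hypothesis is isolated to anchoring $\PP\{\phi_t^{aa}(p_t',\eta_t')\}$ at $\theta_t^{aa}$ rather than being threaded through the algebra. Your closing remark about needing $\hat{p}_t$ bounded away from zero to control $1/\hat{p}_t$ is well taken --- the paper's $\lesssim$ silently absorbs this, and the same caveat applies to its denominator $\hat{p}_a p_a'$ --- and your proposed remedies (boundedness of the nuisance class, or positivity of the limit plus $L_2$-convergence) are the standard ways to discharge it.
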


\begin{proof}[Proof of \cref{A-lem:phi-aa-rate-dr}]
	Using the law of iterated expectation, for any $p_t$ and $\eta_t$ we have
	\begin{align}
		\PP \bigg\{ \frac{\indic(A_t = d_t^a)}{p_t(a | H_t)}Y \bigg\} = \PP \bigg\{ \frac{\indic(A_t = d_t^a)}{p_t(a | H_t)} \eta_t^\star(a, H_t) \bigg\} = \PP \bigg\{ \frac{p_t^\star(a | H_t)}{p_t(a | H_t)} \eta_t^\star(a, H_t) \bigg\}, \label{A-eq:lem:phi-aa-rate-dr:proofuse1}
	\end{align}
	and
	\begin{align}
		\PP \bigg\{ \frac{\indic(A_t = d_t^a)}{p_t(a | H_t)} \eta(a, H_t) \bigg\} = \PP \bigg\{ \frac{p_t^\star(a | H_t)}{p_t(a | H_t)} \eta_t(a, H_t) \bigg\}. \label{A-eq:lem:phi-aa-rate-dr:proofuse2}
	\end{align}

	In the following, to reduce clutter, within this proof we use the shorthand notation: $1_a := \indic(A_t = d_t^a)$, $p_a := p_t(a | H_t)$, $\eta_a := \eta_t(a, H_t)$. Using this notation, we have
	\begin{align*}
		\phi_t^{aa}(p_t, \eta_t) = \frac{1_a}{p_a}Y - \frac{1_a}{p_a}\eta_a + \eta_a.
	\end{align*}
	Therefore,
	\begin{align}
		& ~~~~ \PP \big\{\phi_t^{aa}(\hat{p}_t, \hat\eta_t) - \phi_t^{aa}(p_t', \eta_t') \big\} \nonumber \\
		& = \PP \bigg( \frac{1_a}{\hat{p}_a}Y - \frac{1_a}{p_a'}Y - \frac{1_a}{\hat{p}_a} \hat\eta_a + \frac{1_a}{p_a'}\eta_a' + \hat\eta_a - \eta_a' \bigg) \nonumber \\
		& = \PP \bigg( \frac{p_a^\star}{\hat{p}_a}\eta_a^\star - \frac{p_a^\star}{p_a'}\eta_a^\star - \frac{p_a^\star}{\hat{p}_a} \hat\eta_a + \frac{p_a^\star}{p_a'}\eta_a' + \hat\eta_a - \eta_a' \bigg) \label{A-eq:lem:phi-aa-rate-dr:proofuse3} \\
		& = \PP \bigg\{ \frac{1}{\hat{p}_a p_a'} \Big( p_a' p_a^\star \eta_a^\star - \hat{p}_a p_a^\star \eta_a^\star - p_a' p_a^\star \hat\eta_a + \hat{p}_a p_a^\star \eta_a' + \hat{p}_a p_a' \hat\eta_a - \hat{p}_a p_a' \eta_a' \Big) \bigg\}, \label{A-eq:lem:phi-aa-rate-dr:proofuse4}
	\end{align}
	where \cref{A-eq:lem:phi-aa-rate-dr:proofuse3} follows from \cref{A-eq:lem:phi-aa-rate-dr:proofuse1,A-eq:lem:phi-aa-rate-dr:proofuse2}.

	When $p_a' = p_a^\star$, the sum inside the parentheses in \cref{A-eq:lem:phi-aa-rate-dr:proofuse4} becomes
	\begin{align*}
		(p_a^\star)^2\eta_a^\star - \hat{p}_a p_a^\star \eta_a^\star - (p_a^\star)^2 \hat\eta_a + \hat{p}_a p_a^\star \eta_a' + \hat{p}_a p_a^\star \hat\eta_a - \hat{p}_a p_a^\star \eta_a' = p_a^\star(p_a^\star - \hat{p}_a)(\eta_a^\star - \hat\eta_a).
	\end{align*}
	When $\eta_a' = \eta_a^\star$, the sum inside the parentheses in \cref{A-eq:lem:phi-aa-rate-dr:proofuse4} becomes
	\begin{align*}
		p_a' p_a^\star \eta_a^\star - \hat{p}_a p_a^\star \eta_a^\star - p_a' p_a^\star \hat\eta_a + \hat{p}_a p_a^\star \eta_a^\star + \hat{p}_a p_a' \hat\eta_a - \hat{p}_a p_a' \eta_a^\star = p_a'(p_a^\star - \hat{p}_a)(\eta_a^\star - \hat\eta_a).
	\end{align*}
	In both cases, \cref{A-eq:lem:phi-aa-rate-dr:proofuse4} is $\lesssim \|\hat{p}_t - p_t^\star\| \cdot \|\hat\eta_t - \eta_t^\star\|$. This completes the proof.
\end{proof}

\bigskip

\begin{lem}[Useful equalities for establishing multiple robustness of $\phi_t^{ab}$]
	\label{A-lem:useful-equalities}
	Fix any $t \in [T]$ and $a \neq b \in \{0,1\}$. Let functions with superscript $\star$ denote the true function, and functions without superscript $\star$ denote any generic function (those can also denote the estimated versions). We have
	\begin{itemize}
		\item[(i)] $\PP\left\{ \dfrac{\indic(A_t = d_t^a) ~ q_t(b|H_t,M_t)}{p_t(b|H_t) ~ q_t(a|H_t, M_t)} \mu_t(a, H_t, M_t) \right\} = \PP\left\{ \dfrac{q_t^\star(a|H_t, M_t)~q_t(b|H_t,M_t)}{p_t(b|H_t)~q_t(a|H_t, M_t)} \mu_t(a, H_t, M_t) \right\}$
		\item[(ii)] $\PP\left\{ \dfrac{\indic(A_t = d_t^a) ~ q_t(b|H_t,M_t)}{p_t(b|H_t) ~ q_t(a|H_t, M_t)} Y \right\} = \PP\left\{ \dfrac{\indic(A_t = d_t^a) ~ q_t(b|H_t,M_t)}{p_t(b|H_t) ~ q_t(a|H_t, M_t)} \mu_t^\star(a, H_t, M_t) \right\}$
		\item[(iii)] $\PP\left\{ \dfrac{\indic(A_t = d_t^b)}{p_t(b|H_t)} \mu_t(a, H_t, M_t) \right\} = \PP\left\{ \dfrac{q_t^\star(b|H_t,M_t)}{p_t(b|H_t)} \mu_t(a, H_t, M_t) \right\}$
		\item[(iv)] $\PP\left\{ \dfrac{q_t^\star(b | H_t, M_t)}{p_t(b|H_t)} \mu_t^\star(a, H_t, M_t)\right\} = \PP\left\{ \dfrac{p_t^\star(b | H_t)}{p_t(b|H_t)} \nu_t^\star(a, H_t)\right\}$
		\item[(v)] $\PP\left\{ \dfrac{\indic(A_t = d_t^b)}{p_t(b|H_t)} \nu_t(a, H_t) \right\} = \PP\left\{ \dfrac{p_t^\star(b|H_t)}{p_t(b|H_t)} \nu_t(a, H_t) \right\}$
	\end{itemize}
	Using the shorthand notation in the proof of \cref{A-lem:phi-ab-dr,A-lem:phi-ab-rate-dr}, these equalities can be written as (i) $\PP\Big( \dfrac{1_a q_b}{p_b q_a}\mu_a \Big) = \PP \Big( \dfrac{q_a^\star q_b}{p_b q_a} \mu_a \Big)$; (ii) $\PP \Big( \dfrac{1_a q_b}{p_b q_a} Y \Big) = \PP \Big( \dfrac{1_a q_b}{p_b q_a} \mu_a^\star \Big)$; (iii) $\PP \Big( \dfrac{1_b}{p_b} \mu_a \Big) = \PP \Big( \dfrac{q_b^\star}{p_b} \mu_a^\star \Big)$; (iv) $\PP \Big( \dfrac{q_b^\star}{p_b} \mu_a^\star \Big) = \PP \Big( \dfrac{p_b^\star}{p_b} \nu_a^\star \Big)$; (v) $\PP \Big( \dfrac{1_b}{p_b} \nu_a \Big) = \PP \Big( \dfrac{p_b^\star}{p_b} \nu_a \Big)$.
\end{lem}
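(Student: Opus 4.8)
The plan is to derive all five identities from the law of iterated expectation, conditioning each time on the coarsest sub-$\sigma$-field that leaves every factor other than the treatment indicator (or $Y$) measurable, so that this indicator can be replaced by its conditional mean. Throughout I adopt the shorthand $1_a := \indic(A_t = d_t^a)$, $p_a := p_t(a\mid H_t)$, $q_a := q_t(a\mid H_t,M_t)$, $\mu_a := \mu_t(a,H_t,M_t)$, $\nu_a := \nu_t(a,H_t)$, with starred versions denoting the truth, and I repeatedly invoke the three defining conditional-mean identities $\EE(1_a\mid H_t)=p_a^\star$, $\EE(1_a\mid H_t,M_t)=q_a^\star$, and $\EE(Y\mid A_t=d_t^a,H_t,M_t)=\mu_a^\star$. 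Positivity (\cref{asu:positivity}) keeps the denominators $p_b,q_a$ bounded away from zero on the relevant support, so every expectation is finite and the conditioning is legitimate.

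For (i), (iii) and (v) I would condition on $(H_t,M_t)$, $(H_t,M_t)$, and $H_t$ respectively. In each case the entire integrand apart from the indicator is measurable with respect to the conditioning field --- note $\mu_a,q_a,q_b$ are functions of $(H_t,M_t)$ while $p_b,\nu_a$ are functions of $H_t$ --- so pulling out the indicator's conditional mean gives $\EE(1_a\mid H_t,M_t)=q_a^\star$ in (i), $\EE(1_b\mid H_t,M_t)=q_b^\star$ in (iii), and $\EE(1_b\mid H_t)=p_b^\star$ in (v). For (ii) I would instead condition on $(A_t,H_t,M_t)$: the weight $1_a q_b/(p_b q_a)$ is measurable with respect to this field, hence $\PP\bigl(\tfrac{1_a q_b}{p_b q_a}Y\bigr)=\PP\bigl(\tfrac{1_a q_b}{p_b q_a}\EE(Y\mid A_t,H_t,M_t)\bigr)$, and on the event $\{A_t=d_t^a\}$ enforced by $1_a$ the inner conditional mean equals $\mu_a^\star$, yielding the claim.

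The only identity requiring more than a single conditioning step is (iv), which I expect to be the main obstacle, since it must invoke the \emph{definition} of $\nu_a^\star$ (with $b=1-a$) rather than an indicator identity. Here I would condition on $H_t$, factor out the $H_t$-measurable $1/p_b$, and reduce matters to showing $\EE(q_b^\star\mu_a^\star\mid H_t)=p_b^\star\nu_a^\star$. Writing this as $\int P(A_t=d_t^b\mid H_t,m)\,\mu_a^\star(a,H_t,m)\,f(m\mid H_t)\,\dd m$, the decisive algebraic step is the Bayes identity $P(A_t=d_t^b\mid H_t,m)\,f(m\mid H_t)=p_b^\star\,f(m\mid A_t=d_t^b,H_t)$, which converts the integral into $p_b^\star\,\EE(\mu_a^\star\mid A_t=d_t^b,H_t)=p_b^\star\nu_a^\star$ by the definition of $\nu_a^\star$ in \eqref{eq:def-true-nuisance-function}. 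Reassembling then gives $\PP\bigl(\tfrac{q_b^\star}{p_b}\mu_a^\star\bigr)=\PP\bigl(\tfrac{p_b^\star}{p_b}\nu_a^\star\bigr)$.

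Two bookkeeping points remain for the write-up. First, one should confirm in each part that the non-indicator factors genuinely depend only on the conditioning variables (for instance that $\mu_a$ is a function of $(H_t,M_t)$ alone), which is immediate from \eqref{eq:def-true-nuisance-function}. Second, note that no ignorability or cross-world assumption enters: these are purely measure-theoretic consequences of the conditional-mean definitions and Bayes' rule, so the identification result (\cref{thm:identification}) need not be invoked. These equalities then serve as the algebraic engine for the double- and rate-double-robustness of $\phi_t^{ab}$ in the subsequent lemmas.
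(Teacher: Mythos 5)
Your proposal is correct and follows essentially the same route as the paper: each identity is obtained by iterated expectation over the coarsest conditioning set that makes all non-indicator factors measurable, replacing $\indic(A_t=d_t^a)$ by $q_t^\star$ or $p_t^\star$ and $Y$ by $\mu_t^\star$. The only cosmetic difference is in (iv), where you apply the Bayes density identity directly while the paper first converts $q_b^\star/p_b$ back to $\indic(A_t=d_t^b)/p_b$ via (iii) and then conditions on $\{H_t, A_t=d_t^b\}$; these are the same computation.
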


\begin{proof}[Proof of \cref{A-lem:useful-equalities}]
	All equalities follow immediately from the law of iterated expectations and the definition of the true nuisance functions in \cref{eq:def-true-nuisance-function}. We present the detailed proof below.

	\underline{Proof of (i):} We have
	\begin{align*}
		& ~~~~ \PP\left\{ \dfrac{\indic(A_t = d_t^a) ~ q_t(b|H_t,M_t)}{p_t(b|H_t) ~ q_t(a|H_t, M_t)} \mu_t(a, H_t, M_t) \right\} \\
		& = \PP \left[ \PP\left\{ \dfrac{\indic(A_t = d_t^a) ~ q_t(b|H_t,M_t)}{p_t(b|H_t) ~ q_t(a|H_t, M_t)} \mu_t(a, H_t, M_t) ~\bigg|~ H_t, M_t \right\} \right] \\
		& = \PP\left\{ \dfrac{q_t^\star(a|H_t, M_t) ~ q_t(b|H_t,M_t)}{p_t(b|H_t) ~ q_t(a|H_t, M_t)} \mu_t(a, H_t, M_t) \right\}.
	\end{align*}

	\underline{Proof of (ii):} We have
	\begin{align*}
		& ~~~~ \PP\left\{ \dfrac{\indic(A_t = d_t^a) ~ q_t(b|H_t,M_t)}{p_t(b|H_t) ~ q_t(a|H_t, M_t)} Y \right\} \\
		& = \PP \left[ \PP\left\{ \dfrac{\indic(A_t = d_t^a) ~ q_t(b|H_t,M_t)}{p_t(b|H_t) ~ q_t(a|H_t, M_t)} Y ~\bigg|~ H_t, A_t, M_t \right\} \right] \\
		& = \PP \left[ \dfrac{\indic(A_t = d_t^a) ~ q_t(b|H_t,M_t)}{p_t(b|H_t) ~ q_t(a|H_t, M_t)} ~\PP\left\{ Y \mid H_t, A_t, M_t \right\} \right] \\
		& = \PP \left[ \dfrac{\indic(A_t = d_t^a) ~ q_t(b|H_t,M_t)}{p_t(b|H_t) ~ q_t(a|H_t, M_t)} ~\PP\left\{ Y \mid H_t, A_t = d_t^a, M_t \right\} \right] \\
		& = \PP\left\{ \dfrac{\indic(A_t = d_t^a) ~ q_t(b|H_t,M_t)}{p_t(b|H_t) ~ q_t(a|H_t, M_t)} \mu_t^\star(a, H_t, M_t) \right\}.
	\end{align*}

	\underline{Proof of (iii):} We have
	\begin{align*}
		\PP\left\{ \dfrac{\indic(A_t = d_t^b)}{p_t(b|H_t)} \mu_t(a, H_t, M_t) \right\} & = \PP \left[ \PP\left\{ \dfrac{\indic(A_t = d_t^b)}{p_t(b|H_t)} \mu_t(a, H_t, M_t) ~\bigg|~ H_t, M_t \right\} \right] \\
		& = \PP\left\{ \dfrac{q_t^\star(b|H_t, M_t)}{p_t(b|H_t)} \mu_t(a, H_t, M_t) \right\}.
	\end{align*}

	\underline{Proof of (iv):} we have
	\begin{align*}
		& ~~~~ \PP\left\{ \dfrac{q_t^\star(b | H_t, M_t)}{p_t(b|H_t)} \mu_t^\star(a, H_t, M_t)\right\} \\
		& = \PP\left\{ \dfrac{\indic(A_t = d_t^b)}{p_t(b|H_t)} \mu_t^\star(a, H_t, M_t)\right\} & \text{[because of (iii)]} \\
		& = \PP\left[ \PP\left\{ \dfrac{\indic(A_t = d_t^b)}{p_t(b|H_t)} \mu_t^\star(a, H_t, M_t) ~\bigg|~ H_t, A_t = d_t^b \right\} p_t^\star(b|H_t) \right] \\
		& ~~~~ + \PP\left[ \PP\left\{ \dfrac{\indic(A_t = d_t^b)}{p_t(b|H_t)} \mu_t^\star(a, H_t, M_t) ~\bigg|~ H_t, A_t \neq d_t^b \right\} \{1 - p_t^\star(b|H_t)\} \right] \\
		& = \PP \left[ \frac{p_t^\star(b|H_t)}{p_t(b|H_t)} ~ \PP \Big\{\mu_t^\star(a,H_t,M_t) ~\Big|~ H_t, A_t = d_t^b \Big\} \right] + 0 \\
		& = \PP\left\{ \dfrac{p_t^\star(b | H_t)}{p_t(b|H_t)} \nu_t^\star(a, H_t)\right\}. & \text{[definition of $\nu_t^\star(a, H_t)$]}
	\end{align*}

	\underline{Proof of (v):} We have
	\begin{align*}
		\PP\left\{ \dfrac{\indic(A_t = d_t^b)}{p_t(b|H_t)} \nu_t(a, H_t) \right\} & = \PP \left[ \PP\left\{ \dfrac{\indic(A_t = d_t^b)}{p_t(b|H_t)} \nu_t(a, H_t) ~\bigg|~ H_t \right\} \right] \\
		& = \PP\left\{ \dfrac{p_t^\star(b|H_t)}{p_t(b|H_t)} \nu_t(a, H_t) \right\}.
	\end{align*}

	This completes the proof.
\end{proof}

\bigskip

\begin{lem}[$\phi_t^{ab}$ is multiply robust]
	\label{A-lem:phi-ab-dr}
	Fix any $t \in [T]$ and $a \neq b \in \{0,1\}$. Consider $\phi^{ab}_t$ defined in \cref{eq:def-phi-ab}, whose sample average is used to estimate $\theta_t^{ab}$ in Algorithm \ref{algo:estimator-ncf}:
	\begin{align*}
	    \phi_t^{ab}(p_t, q_t, \mu_t, \nu_t) &:= \frac{\indic(A_t = d_t^a) ~q_t(b | H_t, M_t)}{p_t(b | H_t) ~q_t(a | H_t, M_t)} \Big\{ Y - \mu_t (a, H_t, M_t) \Big\} \nonumber \\
    	& ~~~~ + \frac{\indic(A_t = d_t^b)}{p_t(b | H_t)} \Big\{\mu_t (a, H_t, M_t) - \nu_t(a, H_t)\Big\} + \nu_t(a, H_t).
	\end{align*}
	Suppose one of the four conditions hold: (i) $p_t = p_t^\star$ and $q_t = q_t^\star$; or (ii) $p_t = p_t^\star$ and $\mu_t = \mu_t^\star$; or (iii) $\nu_t = \nu_t^\star$ and $q_t = q_t^\star$; or (iv) $\nu_t = \nu_t^\star$ and $\mu_t = \mu_t^\star$. Then $\PP \{\phi_t^{ab}(p_t, q_t, \mu_t, \nu_t)\} = \theta_t^{ab}$.
\end{lem}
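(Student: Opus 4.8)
The plan is to expand $\PP\{\phi_t^{ab}\}$ into its five additive terms and then, in each of the four cases, repeatedly apply the five identities of \cref{A-lem:useful-equalities} to replace the indicator-weighted terms by their population counterparts, which produces systematic cancellations. In the shorthand $1_a = \indic(A_t = d_t^a)$, $p_b = p_t(b\mid H_t)$, $q_a = q_t(a\mid H_t, M_t)$, $\mu_a = \mu_t(a,H_t,M_t)$, $\nu_a = \nu_t(a,H_t)$, and so on, the estimating function expands as
\begin{align*}
    \PP\{\phi_t^{ab}\} = \PP\Big(\tfrac{1_a q_b}{p_b q_a}Y\Big) - \PP\Big(\tfrac{1_a q_b}{p_b q_a}\mu_a\Big) + \PP\Big(\tfrac{1_b}{p_b}\mu_a\Big) - \PP\Big(\tfrac{1_b}{p_b}\nu_a\Big) + \PP(\nu_a).
\end{align*}
The two target representations I will match against are the IPW identification \eqref{eq:identify-theta-ab-ipw}, which gives $\PP(1_a q_b^\star/(p_b^\star q_a^\star)\,Y) = \theta_t^{ab}$, and the outcome-regression identification \eqref{eq:identify-theta-ab-ie}, which (since $a\neq b$ binary forces $b=1-a$, so $\nu_t^\star(a,H_t) = \EE\{\mu_t^\star(a,H_t,M_t)\mid A_t = d_t^b, H_t\}$) gives $\PP(\nu_a^\star) = \theta_t^{ab}$.

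First, in case (i), where $p_b = p_b^\star$ and $q_a = q_a^\star, q_b = q_b^\star$, the leading term is exactly the IPW functional and hence equals $\theta_t^{ab}$. Identities (i) and (iii) both reduce the two $\mu_a$ terms to $\PP(q_b^\star \mu_a / p_b^\star)$, so those cancel, and identity (v) reduces $\PP(1_b\nu_a/p_b^\star)$ to $\PP(\nu_a)$, cancelling the last term. Only $\theta_t^{ab}$ survives.

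Next, cases (ii)--(iv) all reduce to showing the remaining terms collapse to $\PP(\nu_a^\star)$. In case (ii) ($p_b=p_b^\star$, $\mu_a=\mu_a^\star$), identity (ii) makes the first two terms identical and they cancel; the third term becomes $\PP(\nu_a^\star)$ by chaining identities (iii) and (iv); and identity (v) cancels the fourth against the fifth. In case (iii) ($\nu_a=\nu_a^\star$, $q$ correct), identities (i) and (iii) cancel the two $\mu_a$ terms, while applying (ii), then (i) to $\mu_a^\star$, then (iv) turns the leading $Y$-term into $\PP(p_b^\star\nu_a^\star/p_b)$, which identity (v) cancels against the fourth term, leaving $\PP(\nu_a)=\PP(\nu_a^\star)$. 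In case (iv) ($\nu_a=\nu_a^\star$, $\mu_a=\mu_a^\star$), identity (ii) cancels the first two terms and the chain (iii)--(iv) versus (v) cancels the third and fourth, again leaving $\PP(\nu_a^\star)$. In each of these three cases the surviving $\PP(\nu_a^\star)$ equals $\theta_t^{ab}$ by \eqref{eq:identify-theta-ab-ie}.

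The argument contains no deep step; the work is entirely bookkeeping, and the main obstacle is simply pairing each term with the correct identity so the cancellations line up. The one conceptual point worth flagging is that the single surviving term differs between case (i), where it is the raw IPW term identified by \eqref{eq:identify-theta-ab-ipw}, and cases (ii)--(iv), where it is $\PP(\nu_a^\star)$ identified by \eqref{eq:identify-theta-ab-ie}; recognizing which half of \cref{thm:identification} closes each case is what makes the proof transparent. A minor but essential check is that every identity in \cref{A-lem:useful-equalities} holds for arbitrary (possibly misspecified) nuisance functions, so each substitution remains licensed even when only the relevant subset of models is correctly specified.
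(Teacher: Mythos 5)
Your proposal is correct and follows essentially the same route as the paper's proof: the same five-term expansion in the same shorthand, the same case-by-case application of \cref{A-lem:useful-equalities} to produce the cancellations, and the same closing step identifying the surviving term via \eqref{eq:identify-theta-ab-ipw} in case (i) and via $\PP(\nu_a^\star)=\theta_t^{ab}$ from \eqref{eq:identify-theta-ab-ie} in cases (ii)--(iv). If anything, your explicit chain ``(ii), then (i) applied to $\mu_a^\star$, then (iv)'' for the leading $Y$-term in case (iii) is slightly more careful than the paper's citation of only (ii) and (iv) there.
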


\begin{proof}[Proof of \cref{A-lem:phi-ab-dr}]
	To reduce clutter, within this proof we use the shorthand notation: $1_a := \indic(A_t = d_t^a)$, $p_a := p_t(a | H_t)$, $q_a := q_t(a | H_t, M_t)$, $\mu_a := \mu_t(a, H_t, M_t)$, $\nu_a := \nu_t(a, H_t)$. Similarly for those with subscript $b$. Using this notation, we have
	\begin{align*}
		\phi_t^{ab}(p_t, q_t, \mu_t, \nu_t) & = \frac{1_a q_b}{p_b q_a} (Y - \mu_a) + \frac{1_b}{p_b} (\mu_a - \nu_a) + \nu_a \\
		& = \frac{1_a q_b}{p_b q_a} Y - \frac{1_a q_b}{p_b q_a} \mu_a + \frac{1_b}{p_b} \mu_a - \frac{1_b}{p_b} \nu_a + \nu_a.
	\end{align*}

	\underline{Case (i):} $p_t = p_t^\star$ and $q_t = q_t^\star$. In this case,
	\begin{align}
		\PP(\phi_t^{ab}) & = \PP\Big(\frac{1_a q_b^\star}{p_b^\star q_a^\star} Y\Big) - \PP\Big(\frac{1_a q_b^\star}{p_b^\star q_a^\star} \mu_a\Big) + \PP\Big(\frac{1_b}{p_b^\star} \mu_a\Big) - \PP\Big(\frac{1_b}{p_b^\star} \nu_a\Big) + \PP(\nu_a) \nonumber \\
		& = \theta_t^{ab} - \PP\Big(\frac{q_a^\star q_b^\star}{p_b^\star q_a^\star}\mu_a\Big) + \PP\Big(\frac{q_b^\star}{p_b^\star} \mu_a\Big) - \PP\Big(\frac{p_b^\star}{p_b^\star} \nu_a\Big) + \PP(\nu_a) \label{A-eq:lem:phi-ab-dr:proofuse1} \\
		& = \theta_t^{ab} - 0 - 0 = \theta_t^{ab}. \nonumber
	\end{align}
	Here, the first four terms in \cref{A-eq:lem:phi-ab-dr:proofuse1} follow from the cross-world weighting identification \cref{eq:identify-theta-ab-ipw} and \cref{A-lem:useful-equalities}(i)(iii)(v), respectively.

	\underline{Case (ii):} $p_t = p_t^\star$ and $\mu_t = \mu_t^\star$. In this case,
	\begin{align}
		\PP(\phi_t^{ab}) & = \PP\Big(\frac{1_a q_b}{p_b^\star q_a} Y\Big) - \PP\Big(\frac{1_a q_b}{p_b^\star q_a} \mu_a^\star\Big) + \PP\Big(\frac{1_b}{p_b^\star} \mu_a^\star\Big) - \PP\Big(\frac{1_b}{p_b^\star} \nu_a\Big) + \PP(\nu_a). \label{A-eq:lem:phi-ab-dr:proofuse2}
	\end{align}
	The first term in \cref{A-eq:lem:phi-ab-dr:proofuse2} is $\PP\Big(\dfrac{1_a q_b}{p_b^\star q_a} Y\Big) = \PP\Big(\dfrac{1_a q_b}{p_b^\star q_a} \mu_a^\star\Big)$ because of \cref{A-lem:useful-equalities}(ii), and thus it cancels with the second term in \cref{A-eq:lem:phi-ab-dr:proofuse2}. The third term in \cref{A-eq:lem:phi-ab-dr:proofuse2} is $\PP\Big(\dfrac{1_b}{p_b^\star} \mu_a^\star\Big) = \PP\Big(\dfrac{q_b^\star}{p_b^\star} \mu_a^\star\Big) = \PP(\nu_a^\star) = \theta_t^{ab}$ because of \cref{A-lem:useful-equalities}(iii), (iv), and the mediation g-formula \cref{eq:identify-theta-ab-ie}. The fourth term in \cref{A-eq:lem:phi-ab-dr:proofuse2} is $\PP\Big(\dfrac{1_b}{p_b^\star} \nu_a\Big) = \PP(\nu_a)$ because of \cref{A-lem:useful-equalities}(v), and thus it cancels with the last term in \cref{A-eq:lem:phi-ab-dr:proofuse2}. Therefore, $\PP(\phi_t^{ab}) = \theta_t^{ab}$ in Case (ii).

	\underline{Case (iii):} $\nu_t = \nu_t^\star$ and $q_t = q_t^\star$. In this case,
	\begin{align}
		\PP(\phi_t^{ab}) & = \PP\Big(\frac{1_a q_b^\star}{p_b q_a^\star} Y\Big) - \PP\Big(\frac{1_a q_b^\star}{p_b q_a^\star} \mu_a\Big) + \PP\Big(\frac{1_b}{p_b} \mu_a\Big) - \PP\Big(\frac{1_b}{p_b} \nu_a^\star\Big) + \PP(\nu_a^\star). \label{A-eq:lem:phi-ab-dr:proofuse3}
	\end{align}
	The first term in \cref{A-eq:lem:phi-ab-dr:proofuse3} is $\PP\Big(\dfrac{1_a q_b^\star}{p_b q_a^\star} Y\Big) = \PP\Big(\dfrac{q_a^\star q_b^\star}{p_b q_a^\star} \mu_a^\star\Big) = \PP\Big( \dfrac{p_b^\star}{p_b} \nu_a^\star \Big)$ because of \cref{A-lem:useful-equalities}(ii) and (iv). The second term in \cref{A-eq:lem:phi-ab-dr:proofuse3} is $\PP\Big(\dfrac{1_a q_b^\star}{p_b q_a^\star} \mu_a\Big) = \PP\Big(\dfrac{q_b^\star}{p_b} \mu_a\Big)$ due to \cref{A-lem:useful-equalities}(i). The third term in \cref{A-eq:lem:phi-ab-dr:proofuse3} is $\PP\Big(\dfrac{1_b}{p_b} \mu_a\Big) = \PP\Big(\dfrac{q_b^\star}{p_b} \mu_a\Big)$ because of \cref{A-lem:useful-equalities}(iii), and thus it cancels with the second term. The fourth term in \cref{A-eq:lem:phi-ab-dr:proofuse3} is $\PP\Big(\dfrac{1_b}{p_b} \nu_a^\star\Big) = \PP\Big(\dfrac{p_b^\star}{p_b} \nu_a^\star\Big)$ because of \cref{A-lem:useful-equalities}(v), and thus it cancels with the first term. The last term in \cref{A-eq:lem:phi-ab-dr:proofuse3} is $\PP(\nu_a^\star) = \theta_t^{ab}$ because of the mediation g-formula \cref{eq:identify-theta-ab-ie}. Therefore, $\PP(\phi_t^{ab}) = \theta_t^{ab}$ in Case (iii).

	\underline{Case (iv):} $\nu_t = \nu_t^\star$ and $\mu_t = \mu_t^\star$. In this case,
	\begin{align}
		\PP(\phi_t^{ab}) & = \PP\Big(\frac{1_a q_b}{p_b q_a} Y\Big) - \PP\Big(\frac{1_a q_b}{p_b q_a} \mu_a^\star\Big) + \PP\Big(\frac{1_b}{p_b} \mu_a^\star\Big) - \PP\Big(\frac{1_b}{p_b} \nu_a^\star\Big) + \PP(\nu_a^\star). \label{A-eq:lem:phi-ab-dr:proofuse4}
	\end{align}
	The first term in \cref{A-eq:lem:phi-ab-dr:proofuse4} is $\PP\Big(\frac{1_a q_b}{p_b q_a} Y\Big) = \PP\Big(\frac{1_a q_b}{p_b q_a} \mu_a^\star\Big)$ because of \cref{A-lem:useful-equalities}(ii), and thus it cancels with the second term in \cref{A-eq:lem:phi-ab-dr:proofuse3}. The third term in \cref{A-eq:lem:phi-ab-dr:proofuse4} is $\PP\Big(\dfrac{1_b}{p_b} \mu_a^\star\Big) = \PP\Big(\dfrac{q_b^\star}{p_b} \mu_a^\star\Big) = \PP\Big(\dfrac{p_b^\star}{p_b} \nu_a^\star\Big)$ because of \cref{A-lem:useful-equalities}(iii) and (iv). The fourth term in \cref{A-eq:lem:phi-ab-dr:proofuse4} is $\PP\Big(\frac{1_b}{p_b} \nu_a^\star\Big) = \PP\Big(\frac{p_b^\star}{p_b} \nu_a^\star\Big)$ because of \cref{A-lem:useful-equalities}(v), and thus it cancels with the third term. The last term in \cref{A-eq:lem:phi-ab-dr:proofuse4} is $\PP(\nu_a^\star) = \theta_t^{ab}$ because of the mediation g-formula \cref{eq:identify-theta-ab-ie}. Therefore, $\PP(\phi_t^{ab}) = \theta_t^{ab}$ in Case (iv).

	Therefore, we have verified that $\PP \{\phi_t^{ab}(p_t, q_t, \mu_t, \nu_t)\} = \theta_t^{ab}$ in each of the four cases. This completes the proof.
\end{proof}

\bigskip

\begin{lem}[$\phi_t^{ab}$ is rate multiply robust]
	\label{A-lem:phi-ab-rate-dr}
	Fix any $t \in [T]$ and $a \neq b \in \{0,1\}$. Consider $\phi^{ab}_t$ defined in \cref{eq:def-phi-ab}.	Suppose one of the four conditions hold: (i) $p_t' = p_t^\star$ and $q_t' = q_t^\star$; or (ii) $p_t' = p_t^\star$ and $\mu_t' = \mu_t^\star$; or (iii) $\nu_t' = \nu_t^\star$ and $q_t' = q_t^\star$; or (iv) $\nu_t' = \nu_t^\star$ and $\mu_t' = \mu_t^\star$. Then
	\begin{align*}
		\Big|\PP \big\{\phi_t^{ab}(\hat{p}_t, \hat{q}_t, \hat\mu_t, \hat\nu_t) - \phi_t^{ab}(p_t', q_t', \mu_t', \nu_t') \big\} \Big| \lesssim \|\hat{p}_t - p_t^\star\| \cdot \|\hat\nu_t - \nu_t^\star\| + \|\hat{q}_t - q_t^\star\| \cdot \|\hat\mu_t - \mu_t^\star\|.
	\end{align*}
\end{lem}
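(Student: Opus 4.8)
The plan is to reduce the claim to bounding a single second-order bias and then to exhibit that bias as an explicit bilinear remainder in the nuisance errors. Under each of the four stated conditions on the limit $(p_t', q_t', \mu_t', \nu_t')$, \cref{A-lem:phi-ab-dr} already gives $\PP\{\phi_t^{ab}(p_t', q_t', \mu_t', \nu_t')\} = \theta_t^{ab}$, so
\[
\PP\big\{\phi_t^{ab}(\hat p_t, \hat q_t, \hat\mu_t, \hat\nu_t) - \phi_t^{ab}(p_t', q_t', \mu_t', \nu_t')\big\} = \PP\big\{\phi_t^{ab}(\hat p_t, \hat q_t, \hat\mu_t, \hat\nu_t)\big\} - \theta_t^{ab},
\]
and it suffices to control the bias of the plug-in functional at the estimated nuisances. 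This mirrors the strategy used for $\phi_t^{aa}$ in \cref{A-lem:phi-aa-rate-dr}, where the difference against the limit is shown to equal the plug-in bias, which then factors into a product of errors.

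First I would establish, for \emph{arbitrary} nuisances and using the shorthand of \cref{A-lem:phi-ab-dr} ($1_a, p_b, q_a, q_b, \mu_a, \nu_a$, with $\star$ denoting truth), the identity
\[
\PP\{\phi_t^{ab}(p_t, q_t, \mu_t, \nu_t)\} - \theta_t^{ab} = \PP\Big\{\tfrac{q_a^\star q_b - q_a q_b^\star}{p_b q_a}(\mu_a^\star - \mu_a)\Big\} + \PP\Big\{\tfrac{p_b - p_b^\star}{p_b}(\nu_a - \nu_a^\star)\Big\}.
\]
This is obtained by expanding $\PP\{\phi_t^{ab}\}$ into its five additive pieces, rewriting each via the iterated-expectation identities (i)--(v) of \cref{A-lem:useful-equalities}, substituting $\theta_t^{ab} = \PP(\nu_a^\star)$ from the mediation g-formula \cref{eq:identify-theta-ab-ie}, and regrouping so that the $\mu$-discrepancies and $\nu$-discrepancies each pair against a nuisance error. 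As a sanity check, each of the four conditions in \cref{A-lem:phi-ab-dr} forces at least one factor in \emph{each} remainder term to vanish (e.g.\ $q=q^\star$ makes the numerator $q_a^\star q_b^\star - q_a^\star q_b^\star = 0$, while $p = p^\star$ or $\nu=\nu^\star$ kills the second term), recovering exact multiple robustness.

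Next I would substitute the estimators into this identity and bound each remainder by Cauchy--Schwarz. The second term is immediate: \cref{A-asu:positivity} bounds $1/\hat p_b$, so $|\PP\{(\hat p_b - p_b^\star)\hat p_b^{-1}(\hat\nu_a - \nu_a^\star)\}| \lesssim \|\hat p_t - p_t^\star\|\cdot\|\hat\nu_t - \nu_t^\star\|$. The first term requires the extra algebraic step of factoring the cross-world numerator, $q_a^\star \hat q_b - \hat q_a q_b^\star = q_a^\star(\hat q_b - q_b^\star) - q_b^\star(\hat q_a - q_a^\star)$, which shows it is of order $|\hat q_a - q_a^\star| + |\hat q_b - q_b^\star|$; since $q_a + q_b = 1$ for binary $A_t$, this is controlled by $\|\hat q_t - q_t^\star\|$. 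Combined with the positivity bound on $(\hat p_b \hat q_a)^{-1}$ and Cauchy--Schwarz, this yields $\lesssim \|\hat q_t - q_t^\star\|\cdot\|\hat\mu_t - \mu_t^\star\|$, and summing the two bounds gives the claim.

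I expect the main obstacle to be the bookkeeping in the first step---tracking which of the five identities in \cref{A-lem:useful-equalities} applies to each piece and verifying that the leftovers regroup \emph{exactly} into the two bilinear remainders---rather than the final Cauchy--Schwarz estimates, which are routine once the bias is in product form. A secondary technical point is confirming that the denominators $p_b$, $q_a$ and their estimated counterparts are bounded away from zero on the support, where \cref{A-asu:positivity} is invoked; this is also the only place the eligibility indicator requires mild care, since $d_t^a = d_t^b$ when $I_t = 0$.
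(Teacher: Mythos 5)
Your proposal is correct, and it takes a genuinely different route from the paper. The paper works directly with $\PP\{\phi_t^{ab}(\hat\zeta_t)-\phi_t^{ab}(\zeta_t')\}$, adds and subtracts three terms to produce a decomposition $\lambda_1+\lambda_2+\lambda_3$ in which $\lambda_1,\lambda_2$ are the two bilinear remainders and $\lambda_3$ is a leftover involving the limits $(p_t',q_t',\mu_t',\nu_t')$; it then verifies $\lambda_3=0$ separately in each of the four cases. You instead prove a single exact identity for the plug-in bias at \emph{arbitrary} nuisances,
\begin{align*}
\PP\{\phi_t^{ab}(p_t,q_t,\mu_t,\nu_t)\}-\theta_t^{ab}
=\PP\Big\{\tfrac{q_a^\star q_b-q_a q_b^\star}{p_b q_a}(\mu_a^\star-\mu_a)\Big\}
+\PP\Big\{\tfrac{p_b-p_b^\star}{p_b}(\nu_a-\nu_a^\star)\Big\},
\end{align*}
and I have checked that this identity does hold: applying \cref{A-lem:useful-equalities}(i)--(iii),(v) to the five pieces of $\phi_t^{ab}$ and subtracting $\theta_t^{ab}=\PP(\nu_a^\star)$ leaves exactly your right-hand side plus the combination $\PP(q_b^\star\mu_a^\star/p_b)-\PP(p_b^\star\nu_a^\star/p_b)$, which vanishes by \cref{A-lem:useful-equalities}(iv). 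The payoff is real: your identity makes the four-case verification of the paper's $\lambda_3=0$ unnecessary (each condition simply annihilates one factor in each product), it subsumes the exact multiple robustness of \cref{A-lem:phi-ab-dr} as the special case where the bias is identically zero, and the final bounds follow from the same numerator factorization and Cauchy--Schwarz steps the paper uses for $\lambda_1,\lambda_2$. Two minor cautions: your reduction to the plug-in bias does still invoke \cref{A-lem:phi-ab-dr} (or a second application of your identity at $\zeta_t'$) to replace $\PP\{\phi_t^{ab}(\zeta_t')\}$ by $\theta_t^{ab}$, so the four conditions are not vacuous in your argument; and the remark that $q_a+q_b=1$ is false on $\{I_t=0\}$, where $d_t^a=d_t^b$ --- though this does not affect the bound, since both $|\hat q_a-q_a^\star|$ and $|\hat q_b-q_b^\star|$ are separately controlled by $\|\hat q_t-q_t^\star\|$ and the denominators are bounded away from zero by \cref{A-asu:positivity}.
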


\begin{proof}[Proof of \cref{A-lem:phi-ab-rate-dr}]
	To reduce clutter, within this proof we use the same shorthand notation as in the proof of \cref{A-lem:phi-ab-dr}: $1_a := \indic(A_t = d_t^a)$, $p_a := p_t(a | H_t)$, $q_a := q_t(a | H_t, M_t)$, $\mu_a := \mu_t(a, H_t, M_t)$, $\nu_a := \nu_t(a, H_t)$. Similarly for those with subscript $b$. Using this notation, we have
	\begin{align*}
		\phi_t^{ab}(p_t, q_t, \mu_t, \nu_t) & = \frac{1_a q_b}{p_b q_a} (Y - \mu_a) + \frac{1_b}{p_b} (\mu_a - \nu_a) + \nu_a \\
		& = \frac{1_a q_b}{p_b q_a} Y - \frac{1_a q_b}{p_b q_a} \mu_a + \frac{1_b}{p_b} \mu_a - \frac{1_b}{p_b} \nu_a + \nu_a. 
	\end{align*}
	This implies that
	\begin{align}
		& ~~~~ \PP \big\{\phi_t^{ab}(\hat{p}_t, \hat{q}_t, \hat\mu_t, \hat\nu_t) - \phi_t^{ab}(p_t', q_t', \mu_t', \nu_t') \big\} \nonumber \\
		& = \PP \Big( \frac{1_a \hat{q}_b}{\hat{p}_b \hat{q}_a} Y \Big) - \PP \Big( \frac{1_a \hat{q}_b}{\hat{p}_b \hat{q}_a} \hat\mu_a \Big) + \PP \Big( \frac{1_b}{\hat{p}_b} \hat\mu_a \Big) - \PP \Big( \frac{1_b}{\hat{p}_b} \hat\nu_a \Big) + \PP (\hat\nu_a) \nonumber \\
		& ~~~~ - \PP \Big( \frac{1_a q_b'}{p_b' q_a'} Y \Big) + \PP \Big( \frac{1_a q_b'}{p_b' q_a'} \mu_a' \Big) - \PP \Big( \frac{1_b}{p_b'} \mu_a' \Big) + \PP \Big( \frac{1_b}{p_b'} \nu_a' \Big) - \PP (\nu_a') \nonumber \\
		& = \PP \Big( \frac{q_a^\star \hat{q}_b}{\hat{p}_b \hat{q}_a} \mu_a^\star \Big) - \PP \Big( \frac{q_a^\star \hat{q}_b}{\hat{p}_b \hat{q}_a} \hat\mu_a \Big) + \PP \Big( \frac{q_b^\star}{\hat{p}_b} \hat\mu_a \Big) - \PP \Big( \frac{p_b^\star}{\hat{p}_b} \hat\nu_a \Big) + \PP (\hat\nu_a) \nonumber \\
		& ~~~~ - \PP \Big( \frac{q_a^\star q_b'}{p_b' q_a'} \mu_a^\star \Big) + \PP \Big( \frac{q_a^\star q_b'}{p_b' q_a'} \mu_a' \Big) - \PP \Big( \frac{q_b^\star}{p_b'} \mu_a' \Big) + \PP \Big( \frac{p_b^\star}{p_b'} \nu_a' \Big) - \PP (\nu_a'), \label{A-eq:lem:phi-ab-rate-dr:proofuse1}
	\end{align}
	where \cref{A-eq:lem:phi-ab-rate-dr:proofuse1} follows from \cref{A-lem:useful-equalities}(i), (ii), (iii), and (v).

	We now add and subtract three terms into \cref{A-eq:lem:phi-ab-rate-dr:proofuse1}: $\PP\Big(- \dfrac{q_b^\star}{\hat{p}_b}\mu_a^\star \Big)$, $\PP\Big(\dfrac{p_b^\star}{\hat{p}_b}\nu_a^\star \Big)$, $\PP(-\nu_a^\star)$. This gives
	\begin{align}
		\PP \big\{\phi_t^{ab}(\hat{p}_t, \hat{q}_t, \hat\mu_t, \hat\nu_t) - \phi_t^{ab}(p_t', q_t', \mu_t', \nu_t') \big\} = \lambda_1 + \lambda_2 + \lambda_3, \label{A-eq:lem:phi-ab-rate-dr:proofuse2}
	\end{align}
	with
	\begin{align*}
		\lambda_1 & := \PP \bigg( \frac{q_a^\star \hat{q}_b}{\hat{p}_b \hat{q}_a}\mu_a^\star - \frac{q_a^\star \hat{q}_b}{\hat{p}_b \hat{q}_a}\hat\mu_a - \frac{q_b^\star}{\hat{p}_b} \mu_a^\star + \frac{q_b^\star}{\hat{p}_b} \hat\mu_a \bigg), \\
		\lambda_2 & := \PP \bigg( \frac{p_b^\star}{\hat{p}_b} \nu_a^\star - \frac{p_b^\star}{\hat{p}_b} \hat\nu_a - \nu_a^\star + \hat\nu_a \bigg), \\
		\lambda_3 & := \PP \bigg( -\frac{q_a^\star q_b'}{p_b' q_a'}\mu_a^\star + \frac{q_a^\star q_b'}{p_b' q_a'}\mu_a' + \frac{q_b^\star}{\hat{p}_b}\mu_a^\star - \frac{q_b^\star}{p_b'}\mu_a' + \frac{p_b^\star}{p_b'} \nu_a' - \frac{p_b^\star}{\hat{p}_b} \nu_a^\star + \nu_a^\star - \nu_a' \bigg).
	\end{align*}
	Note that only $\lambda_3$ involves the quantities $p', q', \mu', \nu'$.

	First, we show that $|\lambda_1| \lesssim \|\hat{q}_t - q_t^\star\| \cdot \|\hat\mu_t - \mu_t^\star\|$. We have
	\begin{align}
		\lambda_1 & = \PP \bigg\{ \frac{q_a^\star \hat{q}_b}{\hat{p}_b \hat{q}_a}(\mu_a^\star - \hat\mu_a) - \frac{q_b^\star}{\hat{p}_b} (\mu_a^\star - \hat\mu_a) \bigg\} \nonumber \\
		& = \PP \bigg\{ \frac{1}{\hat{p}_b \hat{q}_a} (q_a^\star \hat{q}_b - \hat{q}_a q_b^\star ) (\mu_a^\star - \hat\mu_a) \bigg\} \nonumber \\
		& = \PP \bigg\{ \frac{1}{\hat{p}_b \hat{q}_a} (q_a^\star \hat{q}_b - q_a^\star q_b^\star + q_a^\star q_b^\star - \hat{q}_a q_b^\star ) (\mu_a^\star - \hat\mu_a) \bigg\} \nonumber \\
		& = \PP \bigg[ \frac{1}{\hat{p}_b \hat{q}_a}\Big\{ q_a^\star(\hat{q}_b - q_b^\star) - q_b^\star(\hat{q}_a - q_a^\star) \Big\} (\mu_a^\star - \hat\mu_a) \bigg]. \nonumber \\
		\implies |\lambda_1| & \lesssim \|\hat{q}_t - q_t^\star\| \cdot \|\hat\mu_t - \mu_t^\star\|. \label{A-eq:lem:phi-ab-rate-dr:proofuse3}
	\end{align}

	Second, we show that $|\lambda_2| \lesssim \|\hat{p}_t - p_t^\star\| \cdot \|\hat\nu_t - \nu_t^\star\|$. We have
	\begin{align}
		\lambda_2 & = \PP \bigg\{ \frac{p_b^\star}{\hat{p}_b}(\nu_a^\star - \hat\nu_a) - (\nu_a^\star - \hat\nu_a) \bigg\} \nonumber \\
		& = \PP \bigg\{ \bigg( \frac{p_b^\star}{\hat{p}_b} - 1 \bigg) (\nu_a^\star - \hat\nu_a) \bigg\} \nonumber \\
		& = \PP \bigg\{ \frac{1}{\hat{p}_b} (p_b^\star - \hat{p}_b) (\nu_a^\star - \hat\nu_a) \bigg\} \nonumber \\
		\implies |\lambda_2| & \lesssim \|\hat{p}_t - p_t^\star\| \cdot \|\hat\nu_t - \nu_t^\star\|. \label{A-eq:lem:phi-ab-rate-dr:proofuse4}
	\end{align}

	Lastly, we show that $\lambda_3 = 0$ under any of the four conditions in the lemma statement.

	\underline{Case (i):} $p_t' = p_t^\star$ and $q_t' = q_t^\star$. In this case,
	\begin{align}
		\lambda_3 & = \PP \bigg( & -\frac{q_a^\star q_b^\star}{p_b^\star q_a^\star}\mu_a^\star & & + \frac{q_a^\star q_b^\star}{p_b^\star q_a^\star}\mu_a' & & + \frac{q_b^\star}{\hat{p}_b}\mu_a^\star & & - \frac{q_b^\star}{p_b^\star}\mu_a' & & + \frac{p_b^\star}{p_b^\star} \nu_a' & & - \frac{p_b^\star}{\hat{p}_b} \nu_a^\star & & + \nu_a^\star & & - \nu_a' & \bigg) \nonumber \\
		& = \PP \bigg(& -\frac{q_b^\star}{p_b^\star}\mu_a^\star & & + \frac{q_b^\star}{p_b^\star}\mu_a' & & + \frac{q_b^\star}{\hat{p}_b}\mu_a^\star & & - \frac{q_b^\star}{p_b^\star}\mu_a' & & + \nu_a' & & - \frac{p_b^\star}{\hat{p}_b} \nu_a^\star & & + \nu_a^\star & & - \nu_a' & \bigg) \nonumber \\
		& = \PP \bigg(& -\frac{q_b^\star}{p_b^\star}\mu_a^\star & & & & + \frac{q_b^\star}{\hat{p}_b}\mu_a^\star & & & & & & - \frac{p_b^\star}{\hat{p}_b} \nu_a^\star & & + \nu_a^\star & & & \bigg) \nonumber \\
		& = \PP \bigg(& -\frac{p_b^\star}{p_b^\star}\nu_a^\star & & & & + \frac{p_b^\star}{\hat{p}_b}\nu_a^\star & & & & & & - \frac{p_b^\star}{\hat{p}_b} \nu_a^\star & & + \nu_a^\star & & & \bigg) \label{A-eq:lem:phi-ab-rate-dr:proofuse5} \\
		& = 0, \nonumber
	\end{align}
	where \cref{A-eq:lem:phi-ab-rate-dr:proofuse5} follows from \cref{A-lem:useful-equalities}(iv).

	\underline{Case (ii):} $p_t' = p_t^\star$ and $\mu_t' = \mu_t^\star$. In this case,
	\begin{align}
		\lambda_3 & = \PP \bigg( & -\frac{q_a^\star q_b'}{p_b^\star q_a'}\mu_a^\star & & + \frac{q_a^\star q_b'}{p_b^\star q_a'}\mu_a^\star & & + \frac{q_b^\star}{\hat{p}_b}\mu_a^\star & & - \frac{q_b^\star}{p_b^\star}\mu_a^\star & & + \frac{p_b^\star}{p_b^\star} \nu_a' & & - \frac{p_b^\star}{\hat{p}_b} \nu_a^\star & & + \nu_a^\star & & - \nu_a' & \bigg) \nonumber \\
		& = \PP \bigg( & & & & & + \frac{q_b^\star}{\hat{p}_b}\mu_a^\star & & - \frac{q_b^\star}{p_b^\star}\mu_a^\star & & & & - \frac{p_b^\star}{\hat{p}_b} \nu_a^\star & & + \nu_a^\star & & & \bigg) \nonumber \\
		& = \PP \bigg( & & & & & + \frac{p_b^\star}{\hat{p}_b}\nu_a^\star & & - \frac{p_b^\star}{p_b^\star}\nu_a^\star & & & & - \frac{p_b^\star}{\hat{p}_b} \nu_a^\star & & + \nu_a^\star & & & \bigg) \label{A-eq:lem:phi-ab-rate-dr:proofuse6} \\
		& = 0, \nonumber
	\end{align}
	where \cref{A-eq:lem:phi-ab-rate-dr:proofuse6} follows from \cref{A-lem:useful-equalities}(iv).

	\underline{Case (iii):} $\nu_t' = \nu_t^\star$ and $q_t' = q_t^\star$. In this case,
	\begin{align}
		\lambda_3 & = \PP \bigg( & -\frac{q_a^\star q_b^\star}{p_b' q_a^\star}\mu_a^\star & & + \frac{q_a^\star q_b^\star}{p_b' q_a^\star}\mu_a' & & + \frac{q_b^\star}{\hat{p}_b}\mu_a^\star & & - \frac{q_b^\star}{p_b'}\mu_a' & & + \frac{p_b^\star}{p_b'} \nu_a^\star & & - \frac{p_b^\star}{\hat{p}_b} \nu_a^\star & & + \nu_a^\star & & - \nu_a^\star & \bigg) \nonumber \\
		& = \PP \bigg( & -\frac{q_b^\star}{p_b'}\mu_a^\star & & + \frac{q_b^\star}{p_b'}\mu_a' & & + \frac{q_b^\star}{\hat{p}_b}\mu_a^\star & & - \frac{q_b^\star}{p_b'}\mu_a' & & + \frac{p_b^\star}{p_b'} \nu_a^\star & & - \frac{p_b^\star}{\hat{p}_b} \nu_a^\star & & & & & \bigg) \nonumber \\
		& = \PP \bigg( & -\frac{q_b^\star}{p_b'}\mu_a^\star & & & & + \frac{q_b^\star}{\hat{p}_b}\mu_a^\star & & & & + \frac{p_b^\star}{p_b'} \nu_a^\star & & - \frac{p_b^\star}{\hat{p}_b} \nu_a^\star & & & & & \bigg) \nonumber \\
		& = \PP \bigg( & -\frac{p_b^\star}{p_b'}\nu_a^\star & & & & + \frac{p_b^\star}{\hat{p}_b}\nu_a^\star & & & & + \frac{p_b^\star}{p_b'} \nu_a^\star & & - \frac{p_b^\star}{\hat{p}_b} \nu_a^\star & & & & & \bigg) \label{A-eq:lem:phi-ab-rate-dr:proofuse7} \\
		& = 0, \nonumber
	\end{align}
	where \cref{A-eq:lem:phi-ab-rate-dr:proofuse7} follows from \cref{A-lem:useful-equalities}(iv).

	\underline{Case (iv):} $\nu_t' = \nu_t^\star$ and $\mu_t' = \mu_t^\star$. In this case,
	\begin{align}
		\lambda_3 & = \PP \bigg( & -\frac{q_a^\star q_b'}{p_b' q_a'}\mu_a^\star & & + \frac{q_a^\star q_b'}{p_b' q_a'}\mu_a^\star & & + \frac{q_b^\star}{\hat{p}_b}\mu_a^\star & & - \frac{q_b^\star}{p_b'}\mu_a^\star & & + \frac{p_b^\star}{p_b'} \nu_a^\star & & - \frac{p_b^\star}{\hat{p}_b} \nu_a^\star & & + \nu_a^\star & & - \nu_a^\star & \bigg) \nonumber \\
		& = \PP \bigg( & & & & & + \frac{q_b^\star}{\hat{p}_b}\mu_a^\star & & - \frac{q_b^\star}{p_b'}\mu_a^\star & & + \frac{p_b^\star}{p_b'} \nu_a^\star & & - \frac{p_b^\star}{\hat{p}_b} \nu_a^\star & & & & & \bigg) \nonumber \\
		& = \PP \bigg( & & & & & + \frac{p_b^\star}{\hat{p}_b}\nu_a^\star & & - \frac{p_b^\star}{p_b'}\nu_a^\star & & + \frac{p_b^\star}{p_b'} \nu_a^\star & & - \frac{p_b^\star}{\hat{p}_b} \nu_a^\star & & & & & \bigg) \label{A-eq:lem:phi-ab-rate-dr:proofuse8} \\
		& = 0, \nonumber
	\end{align}
	where \cref{A-eq:lem:phi-ab-rate-dr:proofuse8} follows from \cref{A-lem:useful-equalities}(iv).

	Therefore, we have shown that $\lambda_3 = 0$ under any of the four conditions in the lemma statement. This combined with \cref{A-eq:lem:phi-ab-rate-dr:proofuse2,A-eq:lem:phi-ab-rate-dr:proofuse3,A-eq:lem:phi-ab-rate-dr:proofuse4} gives the desired lemma result. This completes the proof.
\end{proof}

\bigskip

\begin{lem}[$\psi$ is multiply robust]
	\label{A-lem:psi-multiply-robust}
	Consider $\psi(\gamma, \zeta)$, the estimating function for $\gamma = (\alpha,\beta)$, defined in \cref{A-eq:def-psi}. For each $t \in [T]$, suppose that the nuisance function $\zeta_t = (p_t, q_t, \eta_t, \mu_t, \nu_t)$ satisfies one of the four conditions: (i) $p_t = p_t^\star$ and $q_t = q_t^\star$; or (ii) $p_t = p_t^\star$ and $\mu_t = \mu_t^\star$; or (iii) $\eta_t = \eta_t^\star$, $\nu_t = \nu_t^\star$ and $q_t = q_t^\star$; or (iv) $\eta_t = \eta_t^\star$, $\nu_t = \nu_t^\star$ and $\mu_t = \mu_t^\star$. Then $\PP\{\psi(\gamma^\star, \zeta)\} = 0$.
\end{lem}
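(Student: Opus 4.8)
The plan is to reduce the statement to the two preceding robustness lemmas (\cref{A-lem:phi-aa-dr,A-lem:phi-ab-dr}) together with the first-order conditions defining $\gamma^\star$. First I would use linearity of expectation to push $\PP$ through the finite sum and the block structure in \eqref{A-eq:def-psi}, so that it suffices to control $\PP\{\phi_t^{00}(p_t,\eta_t)\}$, $\PP\{\phi_t^{11}(p_t,\eta_t)\}$, and $\PP\{\phi_t^{10}(p_t,q_t,\mu_t,\nu_t)\}$ for each fixed $t$.

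The key bookkeeping step is to check that each of the four conditions in the lemma statement simultaneously activates the double robustness of $\phi_t^{aa}$ and the multiple robustness of $\phi_t^{10}$. Concretely, under condition (i) or (ii) we have $p_t=p_t^\star$, which by \cref{A-lem:phi-aa-dr} gives $\PP\{\phi_t^{00}(p_t,\eta_t)\}=\theta_t^{00}$ and $\PP\{\phi_t^{11}(p_t,\eta_t)\}=\theta_t^{11}$; and the pairs $(p_t,q_t)$ or $(p_t,\mu_t)$ match cases (i)/(ii) of \cref{A-lem:phi-ab-dr}, giving $\PP\{\phi_t^{10}\}=\theta_t^{10}$. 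Under condition (iii) or (iv) we instead have $\eta_t=\eta_t^\star$, which again triggers \cref{A-lem:phi-aa-dr} for the $\phi_t^{00}$ and $\phi_t^{11}$ terms, while the pairs $(\nu_t,q_t)$ or $(\nu_t,\mu_t)$ match cases (iii)/(iv) of \cref{A-lem:phi-ab-dr} for the $\phi_t^{10}$ term. Thus in every case all three functionals are unbiased, so that
\begin{align*}
\PP\big\{\phi_t^{10}-\phi_t^{00}\big\}=\theta_t^{10}-\theta_t^{00}=\ndeep_t,
\qquad
\PP\big\{\phi_t^{11}-\phi_t^{10}\big\}=\theta_t^{11}-\theta_t^{10}=\nieet_t.
\end{align*}

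Finally I would substitute these identities into $\PP\{\psi(\gamma^\star,\zeta)\}$, yielding
\begin{align*}
\PP\{\psi(\gamma^\star,\zeta)\}=\sum_{t=1}^T\omega(t)
\begin{bmatrix}
\{\ndeep_t-f(t)^\top\alpha^\star\}\,f(t)\\
\{\nieet_t-f(t)^\top\beta^\star\}\,f(t)
\end{bmatrix},
\end{align*}
and then invoke the first-order (normal) equations associated with the weighted least-squares projections in \eqref{eq:alphabeta-star-def}: since $\alpha^\star$ and $\beta^\star$ minimize $\sum_t\omega(t)\{\ndeep_t-f(t)^\top\alpha\}^2$ and $\sum_t\omega(t)\{\nieet_t-f(t)^\top\beta\}^2$ respectively, the stationarity conditions force each block above to vanish. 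Hence $\PP\{\psi(\gamma^\star,\zeta)\}=0$.

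I expect no genuine analytic obstacle here; the content of the lemma is essentially a translation result, and the only care required is the matching step, i.e.\ verifying that the $\eta_t$-requirement appended in conditions (iii) and (iv) is exactly what the $\phi^{aa}$ terms need, while the remaining components $(p_t,q_t,\mu_t,\nu_t)$ line up with the four cases of \cref{A-lem:phi-ab-dr} for the $\phi^{10}$ term. The one point worth stating explicitly is that the projection definition of $\gamma^\star$ is what converts per-$t$ unbiasedness into the vanishing of the aggregated estimating function, so the normal equations must be written out rather than taken for granted.
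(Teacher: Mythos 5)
Your proposal is correct and follows essentially the same route as the paper: both reduce $\PP\{\psi(\gamma^\star,\zeta)\}$ to the per-$t$ unbiasedness $\PP\{\phi_t^{aa}\}=\theta_t^{aa}$ and $\PP\{\phi_t^{10}\}=\theta_t^{10}$ via \cref{A-lem:phi-aa-dr,A-lem:phi-ab-dr}, and then invoke the normal equations of the weighted least-squares projections in \eqref{eq:alphabeta-star-def}. Your explicit case-matching (in particular, noting that the extra $\eta_t=\eta_t^\star$ requirement in conditions (iii) and (iv) is what the $\phi_t^{aa}$ terms need when $p_t\neq p_t^\star$) is exactly the bookkeeping the paper leaves implicit.
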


\begin{proof}[Proof of \cref{A-lem:psi-multiply-robust}]
	The definition of $\gamma^\star = (\alpha^\star, \beta^\star)$ in \cref{eq:alphabeta-star-def} implies that
	\begin{align}
		\PP\left(\sum_{t=1}^T \omega(t)
		\left[\begin{matrix}
			\big\{ \theta_t^{10} - \theta_t^{00} - f(t)^T\alpha^\star \big\} f(t) \\
			\big\{ \theta_t^{11} - \theta_t^{10} - f(t)^T\beta^\star \big\} f(t)
		\end{matrix}\right] \right) = 0. \label{A-eq:lem:psi-multiply-robust:proofuse1}
	\end{align}
	Under any of the four conditions in the lemma statement, we have the following due to the double robustness of $\phi_t^{aa}$ and the multiple robustness of $\phi_t^{ab}$ (\cref{A-lem:phi-aa-dr,A-lem:phi-ab-dr}):
	\begin{align}
		\PP\{\phi_t^{10}(p_t,q_t,\mu_t,\nu_t)\} = \theta_t^{10},
		& & \PP\{\phi_t^{00}(p_t,\eta_t)\} = \theta_t^{00}, \nonumber \\
		\PP\{\phi_t^{01}(p_t,q_t,\mu_t,\nu_t)\} = \theta_t^{01},
		& & \PP\{\phi_t^{11}(p_t,\eta_t)\} = \theta_t^{11}.
	\end{align}
	Therefore, using the definition of $\psi$ in \cref{A-eq:def-psi}, under any of the four conditions in the lemma statement we have
	\begin{align}
		\PP\{\psi(\gamma^\star, \zeta)\} = \PP\left(\sum_{t=1}^T \omega(t)
		\left[\begin{matrix}
			\big\{ \theta_t^{10} - \theta_t^{00} - f(t)^T\alpha^\star \big\} f(t) \\
			\big\{ \theta_t^{11} - \theta_t^{10} - f(t)^T\beta^\star \big\} f(t)
		\end{matrix}\right] \right) = 0, \nonumber
	\end{align}
	where the last equality follows from \cref{A-eq:lem:psi-multiply-robust:proofuse1}. This completes the proof.
\end{proof}

\bigskip

\begin{lem}[$\psi$ is rate multiply robust]
	\label{A-lem:psi-rate-multiply-robust}
	Consider $\psi(\gamma, \zeta)$, the estimating function for $\gamma = (\alpha,\beta)$, defined in \cref{A-eq:def-psi}. For each $t \in [T]$, suppose that the fitted nuisance function $\hat\zeta_t = (\hat{p}_t, \hat{q}_t, \hat\eta_t, \hat\mu_t, \hat\nu_t)$ converges in $L_2$ to limit $\zeta_t' = (p_t', q_t', \eta_t', \mu_t', \nu_t')$. Suppose that for each $t \in [T]$, the limit $\zeta_t'$ satisfies one of the four conditions: (i) $p_t' = p_t^\star$ and $q_t' = q_t^\star$; or (ii) $p_t' = p_t^\star$ and $\mu_t' = \mu_t^\star$; or (iii) $\eta_t' = \eta_t^\star$, $\nu_t' = \nu_t^\star$ and $q_t' = q_t^\star$; or (iv) $\eta_t' = \eta_t^\star$, $\nu_t' = \nu_t^\star$ and $\mu_t' = \mu_t^\star$. Then
	\begin{align*}
		\Big|\PP\{\psi(\gamma^\star, \hat\zeta) - \psi(\gamma^\star, \zeta')\} \Big| \lesssim \max_{t \in [T]} \bigg\{\|\hat{p}_t - p_t^\star\| \cdot \Big(\|\hat\eta_t - \eta_t^\star\| + \|\hat\nu_t - \nu_t^\star\| \Big) + \|\hat{q}_t - q_t^\star\| \cdot \|\hat\mu_t - \mu_t^\star\| \bigg\}.
	\end{align*}
\end{lem}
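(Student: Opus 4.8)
The plan is to exploit the linearity of $\psi$ in the influence-function pieces $\{\phi_t^{ab}\}$ and reduce the claim to the per-component rate bounds already established in \cref{A-lem:phi-aa-rate-dr,A-lem:phi-ab-rate-dr}. First I would observe that the terms $f(t)^\top\alpha^\star$ and $f(t)^\top\beta^\star$ in \cref{A-eq:def-psi} do not depend on $\zeta$, so they cancel in the difference $\psi(\gamma^\star,\hat\zeta) - \psi(\gamma^\star,\zeta')$. Since each $f(t)$ is deterministic, applying $\PP$ and pulling $f(t)$ out of the expectation gives, for the two stacked blocks,
\begin{align*}
\PP\{\psi(\gamma^\star,\hat\zeta) - \psi(\gamma^\star,\zeta')\} = \sum_{t=1}^T \omega(t)
\left[\begin{matrix}
\big\{\Delta_t^{10} - \Delta_t^{00}\big\}\, f(t) \\
\big\{\Delta_t^{11} - \Delta_t^{10}\big\}\, f(t)
\end{matrix}\right],
\end{align*}
where I write $\Delta_t^{ab} := \PP\{\phi_t^{ab}(\hat\zeta) - \phi_t^{ab}(\zeta')\}$, using that $\phi_t^{00},\phi_t^{11}$ depend only on $(p_t,\eta_t)$ and $\phi_t^{10}$ only on $(p_t,q_t,\mu_t,\nu_t)$.

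Next I would bound each $\Delta_t^{ab}$ by invoking the rate lemmas, which requires checking that the four hypotheses of the present lemma supply the hypotheses those lemmas demand. For the two diagonal pieces $\Delta_t^{00}$ and $\Delta_t^{11}$, \cref{A-lem:phi-aa-rate-dr} needs either $p_t' = p_t^\star$ or $\eta_t' = \eta_t^\star$; conditions (i)--(ii) supply the former and (iii)--(iv) supply the latter, so in every case
\begin{align*}
|\Delta_t^{00}| + |\Delta_t^{11}| \lesssim \|\hat p_t - p_t^\star\|\cdot\|\hat\eta_t - \eta_t^\star\|.
\end{align*}
For the off-diagonal piece $\Delta_t^{10}$, \cref{A-lem:phi-ab-rate-dr} needs one of its own four conditions; conditions (i) and (ii) coincide with that lemma's (i) and (ii), while (iii) and (iv) each imply its (iii) and (iv) after discarding the superfluous requirement $\eta_t'=\eta_t^\star$. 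Hence in every case
\begin{align*}
|\Delta_t^{10}| \lesssim \|\hat p_t - p_t^\star\|\cdot\|\hat\nu_t - \nu_t^\star\| + \|\hat q_t - q_t^\star\|\cdot\|\hat\mu_t - \mu_t^\star\|.
\end{align*}

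Finally I would assemble the bound: taking the norm of the displayed vector, using the triangle inequality across the two blocks and across $t$, and noting that each block's summand is controlled by $|\Delta_t^{10}| + |\Delta_t^{00}|$ or $|\Delta_t^{11}| + |\Delta_t^{10}|$, which is exactly the summand appearing in the claimed maximum. Since $T$ is finite with $\omega(t)\ge 0$, $\sum_t\omega(t)=1$, and $\|f(t)\|$ bounded, the finite sum is dominated by a constant multiple of $\max_{t\in[T]}(\cdots)$, absorbing $\omega(t)$, $\|f(t)\|$ and the factor $T$ into $\lesssim$; this yields exactly the stated bound. The only real subtlety---rather than a genuine obstacle---is the condition-matching step: one must notice that conditions (iii) and (iv) are deliberately stronger than what the $\phi_t^{ab}$ lemma requires, precisely because the extra requirement $\eta_t'=\eta_t^\star$ is what the $\phi_t^{aa}$ lemma needs for the diagonal pieces, so the same four hypotheses simultaneously control both the direct and indirect pieces. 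All the analytic work---the iterated-expectation cancellations that produce the product structure---is already contained in \cref{A-lem:phi-aa-rate-dr,A-lem:phi-ab-rate-dr}, so no new estimates are needed here.
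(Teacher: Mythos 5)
Your proposal is correct and follows essentially the same route as the paper's proof: decompose $\PP\{\psi(\gamma^\star,\hat\zeta)-\psi(\gamma^\star,\zeta')\}$ into the per-component differences of $\phi_t^{00},\phi_t^{11},\phi_t^{10}$ and invoke \cref{A-lem:phi-aa-rate-dr,A-lem:phi-ab-rate-dr}, whose hypotheses are supplied by each of the four conditions. In fact your write-up is more explicit than the paper's (which states the decomposition and says the conclusion "follows immediately"), and your observation that conditions (iii)--(iv) carry the extra requirement $\eta_t'=\eta_t^\star$ precisely so that the diagonal pieces are also covered is exactly the right condition-matching check.
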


\begin{proof}[Proof of \cref{A-lem:psi-rate-multiply-robust}]
	By the form of $\psi$ in \cref{A-eq:def-psi}, We have
	\begin{align}
		& ~~~~ \PP\{\psi(\gamma^\star, \hat\zeta) - \psi(\gamma^\star, \zeta')\} \nonumber \\
		& = \PP \sum_{t=1}^T \omega(t)
			\left[\begin{matrix}
				\Big( \big\{ \phi_t^{10}(\hat{p}_t, \hat{q}_t, \hat\mu_t, \hat\nu_t) - \phi_t^{10}(p_t', q_t', \mu_t', \nu_t') \big\} - \big\{\phi_t^{00}(\hat{p}_t, \hat\eta_t) - \phi_t^{00}(p_t', \eta_t') \big\} \Big) f(t) \\
				\Big( \big\{ \phi_t^{11}(\hat{p}_t, \hat\eta_t) - \phi_t^{11}(p_t', \eta_t') \big\} - \big\{ \phi_t^{10}(\hat{p}_t, \hat{q}_t, \hat\mu_t, \hat\nu_t) + \phi_t^{10}(p_t', q_t', \mu_t', \nu_t') \big\} \Big) f(t)
			\end{matrix}\right]. \nonumber
	\end{align}
	Therefore, the Lemma conclusion follows immediately from \cref{A-lem:phi-aa-rate-dr,A-lem:phi-ab-rate-dr}. This completes the proof.
\end{proof}

\subsection{Additional Technical Lemmas for Non-Cross-Fitted Estimator}
\label{A-subsec:asymptotic-additional-lemmas-ncf}

We state additional technical lemmas used in establishing asymptotic normality and prove them. These lemmas are generalizations of \citet[][Section B.3]{cheng2023efficient} to estimating functions that are not necessarily globally robust, and they can potentially be applied to other statistical problems and estimation procedures.

We use $|\cdot|$ to denote the absolute value of a scalar of the Euclidean norm of a vector depending on the argument.

\begin{lem}[Well-separated zero]
	\label{A-lem:well-separated-zero}
	Suppose the following regularity conditions hold: \cref{A-asu:unique-zero,A-asu:reg-compact-param-space,A-asu:reg-cont-PPee}. Then the unique zero $\gamma^\star$ of the function $\PP\{\psi(\gamma, \zeta')\}$ is well-separated. That is, for any $\epsilon > 0$, there exists $\delta > 0$ such that $|\gamma - \gamma^\star| > \epsilon$ implies $|\PP\{\psi(\gamma, \zeta')\}| > \delta$.
\end{lem}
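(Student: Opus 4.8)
The plan is to derive well-separatedness from a standard compactness-plus-continuity argument built on the three cited regularity conditions. First I would define the scalar map $g(\gamma) := |\PP\{\psi(\gamma, \zeta')\}|$. By \cref{A-asu:reg-cont-PPee}, $\PP\{\psi(\gamma,\zeta')\}$ is continuous in $\gamma$, and composing with the Euclidean norm (which is continuous) shows that $g$ is continuous on the parameter space $\Theta$. By \cref{A-asu:unique-zero}, $\gamma^\star$ is the \emph{unique} solution of $\PP\{\psi(\gamma,\zeta')\}=0$, equivalently the unique zero of $g$; hence $g(\gamma)>0$ for every $\gamma\in\Theta$ with $\gamma\neq\gamma^\star$.

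Next I would fix $\epsilon>0$ and set $K_\epsilon := \{\gamma\in\Theta : |\gamma-\gamma^\star|\ge\epsilon\}$. Since $\Theta$ is compact by \cref{A-asu:reg-compact-param-space} and $K_\epsilon$ is a closed subset of $\Theta$, the set $K_\epsilon$ is itself compact. If $K_\epsilon=\emptyset$ (which occurs precisely when $\Theta$ is contained in the open $\epsilon$-ball about $\gamma^\star$), then there is no $\gamma$ satisfying $|\gamma-\gamma^\star|>\epsilon$, so the implication holds vacuously and any $\delta>0$ suffices. Otherwise, $g$ is continuous on the nonempty compact set $K_\epsilon$, so by the extreme value theorem it attains its infimum at some $\gamma_0\in K_\epsilon$. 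Because $|\gamma_0-\gamma^\star|\ge\epsilon>0$ we have $\gamma_0\neq\gamma^\star$, and the previous paragraph gives $g(\gamma_0)>0$. Setting $\delta := g(\gamma_0)/2>0$, I would then verify the claim: for any $\gamma$ with $|\gamma-\gamma^\star|>\epsilon$ we have $\gamma\in K_\epsilon$, whence $g(\gamma)\ge g(\gamma_0)=2\delta>\delta$, i.e. $|\PP\{\psi(\gamma,\zeta')\}|>\delta$, as required.

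There is no substantive obstacle here—this is the classical argument that a strict global minimizer of a continuous function on a compact set is well-separated. The only points that require care, rather than difficulty, are passing from the open set $\{|\gamma-\gamma^\star|>\epsilon\}$ to the closed set $\{|\gamma-\gamma^\star|\ge\epsilon\}$ so that compactness (and hence the extreme value theorem) applies, handling the degenerate case $K_\epsilon=\emptyset$, and the minor strict-versus-nonstrict adjustment (halving the attained minimum) needed to match the strict inequality $|\PP\{\psi(\gamma,\zeta')\}|>\delta$ in the statement.
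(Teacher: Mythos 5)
Your proposal is correct and follows essentially the same compactness-plus-continuity argument as the paper: restrict to the compact set $\{\gamma \in \Theta : |\gamma - \gamma^\star| \geq \epsilon\}$, apply the extreme value theorem, and use uniqueness of the zero to conclude the attained minimum is positive. Your extra care in handling the empty-set case and halving the minimum to secure the strict inequality are minor refinements the paper glosses over, but they do not change the substance of the argument.
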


\begin{proof}[Proof of \cref{A-lem:well-separated-zero}]
	For any $\epsilon > 0$, consider $\delta := \inf_{\gamma \in \Theta: |\gamma - \gamma^\star| \geq \epsilon}|\PP\{\psi(\gamma, \zeta')\}|$. Because the parameter space $\Theta$ is compact (\cref{A-asu:reg-compact-param-space}), $\{\gamma: |\gamma - \gamma^\star| \geq \epsilon\} \cap \Theta$ is also compact. This combined with the fact that $\PP\{\psi(\gamma, \zeta')\}$ is a continuous function in $\gamma$ (\cref{A-asu:reg-cont-PPee}) implies that the infimum is attained, i.e., $\inf_{\gamma \in \Theta: |\gamma - \gamma^\star| \geq \epsilon}|\PP\{\psi(\gamma, \zeta')\}| = \min_{\gamma \in \Theta: |\gamma - \gamma^\star| \geq \epsilon}|\PP\{\psi(\gamma, \zeta')\}|$. Because $\gamma^\star$ is the unique zero of $\PP\{\psi(\gamma, \zeta')\}$ (\cref{A-asu:unique-zero}), $\min_{\gamma \in \Theta: |\gamma - \gamma^\star| \geq \epsilon}|\PP\{\psi(\gamma, \zeta')\}| > 0$. Therefore, we proved the lemma by constructing a particular $\delta := \min_{\gamma \in \Theta: |\gamma - \gamma^\star| \geq \epsilon}|\PP\{\psi(\gamma, \zeta')\}| > 0$.
\end{proof}

\bigskip

\begin{lem}[Consistency of $\hat\gamma$]
	\label{A-lem:consistency}
	Suppose $\PP \psi(\gamma^\star, \zeta') = 0$. Suppose the following regularity conditions hold: \cref{A-asu:unique-zero,A-asu:reg-compact-param-space,A-asu:reg-bounded-obs,A-asu:reg-cont-PPee,A-asu:nuisance-conv-PPee-sup,A-asu:donsker}. Then $\hat\gamma \pto \gamma^\star$ as $n\to\infty$.
\end{lem}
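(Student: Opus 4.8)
The plan is to apply the standard consistency argument for estimators defined as zeros of an estimating equation (a Z-estimator argument). The two ingredients are (a) the well-separatedness of the population zero $\gamma^\star$, already established in \cref{A-lem:well-separated-zero}, and (b) a uniform law of large numbers for the empirical estimating function evaluated at the \emph{estimated} nuisance functions. Given these, the conclusion follows by a short contraposition argument.

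The first and central step is to establish the uniform convergence
\[
    \sup_{\gamma \in \Theta} \big| \PP_n \psi(\gamma, \hat\zeta) - \PP\{\psi(\gamma, \zeta')\} \big| \pto 0 .
\]
I would obtain this from the decomposition
\[
    \PP_n \psi(\gamma, \hat\zeta) - \PP\{\psi(\gamma, \zeta')\}
    = (\PP_n - \PP)\psi(\gamma, \hat\zeta) + \big[ \PP\{\psi(\gamma, \hat\zeta)\} - \PP\{\psi(\gamma, \zeta')\} \big].
\]
The second (bias) term is controlled uniformly in $\gamma$ directly by \cref{A-asu:nuisance-conv-PPee-sup}. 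For the first (empirical process) term, I would invoke the Donsker assumption \cref{A-asu:donsker}: a $P$-Donsker class has an asymptotically tight empirical process, so $\sup_{\gamma\in\Theta,\,\zeta\in\cZ} |(\PP_n - \PP)\psi(\gamma, \zeta)| = O_P(n^{-1/2}) = o_P(1)$. Since the fitted $\hat\zeta$ lies in $\cZ$, the supremum over $\gamma$ alone of $|(\PP_n - \PP)\psi(\gamma, \hat\zeta)|$ is bounded above by this vanishing quantity; boundedness of the support in \cref{A-asu:reg-bounded-obs} supplies the integrable envelope underlying these statements.

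With uniform convergence in hand, I would finish as follows. Because $\hat\gamma$ solves $\PP_n \psi(\hat\gamma, \hat\zeta) = 0$ by construction,
\[
    \big| \PP\{\psi(\hat\gamma, \zeta')\} \big|
    = \big| \PP\{\psi(\hat\gamma, \zeta')\} - \PP_n \psi(\hat\gamma, \hat\zeta) \big|
    \le \sup_{\gamma \in \Theta} \big| \PP_n \psi(\gamma, \hat\zeta) - \PP\{\psi(\gamma, \zeta')\} \big| \pto 0 .
\]
By \cref{A-lem:well-separated-zero}, for every $\epsilon > 0$ there is $\delta > 0$ such that $|\gamma - \gamma^\star| > \epsilon$ implies $|\PP\{\psi(\gamma, \zeta')\}| > \delta$; contraposing and applying this to $\hat\gamma$ gives $P(|\hat\gamma - \gamma^\star| > \epsilon) \le P(|\PP\{\psi(\hat\gamma, \zeta')\}| > \delta) \to 0$, which is precisely $\hat\gamma \pto \gamma^\star$. (The same argument goes through verbatim if $\hat\gamma$ is only an approximate zero, i.e.\ $\PP_n \psi(\hat\gamma, \hat\zeta) = o_P(1)$.)

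The main obstacle is the empirical process term, where the nuisance argument $\hat\zeta$ is random and data-dependent, so one cannot simply fix $\zeta$ and apply a law of large numbers in $\gamma$. The crux is that the Glivenko--Cantelli-type control must hold \emph{uniformly over the joint index set} $\{(\gamma, \zeta): \gamma \in \Theta,\ \zeta \in \cZ\}$, which legitimizes substituting the random $\hat\zeta$; this is exactly why \cref{A-asu:donsker} is stated for the class indexed by both $\gamma$ and $\zeta$ rather than for $\gamma$ alone. Everything else is bookkeeping with the triangle inequality and the well-separatedness already proved.
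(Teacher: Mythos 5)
Your proposal is correct and follows essentially the same route as the paper: the same triangle-inequality decomposition into an empirical-process term (controlled via the Donsker class over the joint index set $(\gamma,\zeta)$, hence Glivenko--Cantelli) and a nuisance-bias term (controlled by \cref{A-asu:nuisance-conv-PPee-sup}), finished off by the well-separatedness argument from \cref{A-lem:well-separated-zero}. The only cosmetic difference is that you take the supremum over $\gamma$ before plugging in $\hat\gamma$, whereas the paper evaluates at $\hat\gamma$ directly; the content is identical.
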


\begin{proof}[Proof of \cref{A-lem:consistency}]
	Consider an arbitrary $\epsilon > 0$. We need to prove $\lim_{n\to\infty} P(| \hat\gamma - \gamma^\star | > \epsilon) = 0$. Because \cref{A-asu:unique-zero,A-asu:reg-compact-param-space} hold, by \cref{A-lem:well-separated-zero} there exists $\delta > 0$ such that
	\begin{align*}
		P(| \hat\gamma - \gamma^\star | > \epsilon) \leq P[ |\PP\{\psi(\hat\gamma, \zeta')\}| > \delta].
	\end{align*}
	Therefore, it suffices to prove that $|\PP\{\psi(\hat\gamma, \zeta')\}|$ converges in probability to 0.

	Because $\PP_n\{\psi(\hat\gamma, \hat\zeta)\} = 0$, we have
	\begin{align}
		|\PP\{\psi(\hat\gamma, \zeta')\}|
		& = | \PP\{\psi(\hat\gamma, \zeta')\} - \PP_n\{\psi(\hat\gamma, \hat\zeta)\} | \nonumber \\
		& \leq | \PP\{\psi(\hat\gamma, \zeta')\} - \PP\{\psi(\hat\gamma, \hat\zeta)\} | + | \PP\{\psi(\hat\gamma, \hat\zeta)\} - \PP_n\{\psi(\hat\gamma, \hat\zeta)\} | \label{A-eq:lem:general-consistency:proofuse1}
	\end{align}
	Next we show that both terms in \cref{A-eq:lem:general-consistency:proofuse1} are $o_P(1)$.

	For the first term in \cref{A-eq:lem:general-consistency:proofuse1}, by \cref{A-asu:nuisance-conv-PPee-sup} we have
	\begin{align}
		| \PP\{\psi(\hat\gamma, \zeta')\} - \PP\{\psi(\hat\gamma, \hat\zeta)\} | \leq \sup_{\gamma \in \Theta} | \PP\{\psi(\gamma, \zeta')\} - \PP\{\psi(\gamma, \hat\zeta)\} | = o_P(1), \label{A-eq:lem:general-consistency:proofuse2}
	\end{align}
	
	For the second term in \cref{A-eq:lem:general-consistency:proofuse1}, because $\Psi$ is a $P$-Donsker class (\cref{A-asu:donsker}) and thus a $P$-Glivenko-Cantelli class, $\sup_{\gamma \in \Theta, \zeta \in \cZ}| (\PP_n - \PP) \psi(\gamma, \zeta)| = o_P(1)$. Therefore, $| \PP\{\psi(\hat\gamma, \hat\zeta)\} - \PP_n\{\psi(\hat\gamma, \hat\zeta)\} | = | (\PP_n - \PP) \psi(\hat\gamma, \hat\zeta)| = o_P(1)$.

	Thus, we showed that both terms in \cref{A-eq:lem:general-consistency:proofuse1} are $o_P(1)$. This completes the proof.	
\end{proof}

\bigskip

\begin{lem}[Convergence of the derivative.]
	\label{A-lem:conv-ee-deriv}
	Suppose the following regularity conditions hold: \cref{A-asu:reg-dominated-ee-deriv,A-asu:nuisance-conv-PPee-deriv,A-asu:donsker}. If a sequence of random variables $\bar\gamma_n$ satisfies $\bar\gamma_n \pto \gamma^\star$, then $\PP_n \{ \partial_\gamma \psi(\bar\gamma_n, \hat\zeta) \} \pto \PP \left\{ \partial_\gamma \psi(\gamma^\star, \zeta') \right\}$.
\end{lem}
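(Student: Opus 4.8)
The plan is to write the target difference as a telescoping sum of three pieces and bound each separately. Introducing the shorthand $G(\gamma,\zeta) := \PP\{\partial_\gamma \psi(\gamma,\zeta)\}$ for the population map, I decompose
\begin{align*}
\PP_n\{\partial_\gamma\psi(\bar\gamma_n,\hat\zeta)\} - \PP\{\partial_\gamma\psi(\gamma^\star,\zeta')\}
&= (\PP_n - \PP)\{\partial_\gamma\psi(\bar\gamma_n,\hat\zeta)\} \\
&\quad + \big[ G(\bar\gamma_n,\hat\zeta) - G(\gamma^\star,\hat\zeta) \big] \\
&\quad + \big[ G(\gamma^\star,\hat\zeta) - G(\gamma^\star,\zeta') \big],
\end{align*}
and call these terms (I), (II), and (III). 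It then suffices to show each is $o_P(1)$, after which the conclusion follows by the triangle inequality.

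For term (I), \cref{A-asu:donsker} states that $\{\partial_\gamma\psi(\gamma,\zeta): \gamma\in\Theta, \zeta\in\cZ\}$ is $P$-Donsker, hence $P$-Glivenko--Cantelli, so that $\sup_{\gamma\in\Theta,\zeta\in\cZ} |(\PP_n - \PP)\{\partial_\gamma\psi(\gamma,\zeta)\}| = o_P(1)$. Since $(\bar\gamma_n,\hat\zeta)$ takes values in $\Theta\times\cZ$, term (I) is dominated by this supremum and is therefore $o_P(1)$; the randomness of the evaluation point is harmless precisely because the Glivenko--Cantelli bound is uniform over the class. Term (III) is exactly the content of \cref{A-asu:nuisance-conv-PPee-deriv}, which gives $G(\gamma^\star,\hat\zeta) - G(\gamma^\star,\zeta') = o_P(1)$.

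The main obstacle will be term (II), which isolates the $\gamma$-direction but is evaluated at the random nuisance $\hat\zeta$ rather than at its limit $\zeta'$, so it cannot be handled by a single assumption. Here I would combine the continuous differentiability of $\psi$ in $\gamma$ (the blanket condition \cref{A-asu:reg-bounded-and-cont-differentiable-ee}, which makes $\partial_\gamma\psi(\cdot,\zeta)$ continuous at $\gamma^\star$) with the integrable envelope of \cref{A-asu:reg-dominated-ee-deriv}: for a \emph{deterministic} $\zeta$, the consistency $\bar\gamma_n\pto\gamma^\star$ and dominated convergence yield $G(\bar\gamma_n,\zeta)-G(\gamma^\star,\zeta)=o_P(1)$. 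Passing from a deterministic $\zeta$ to the random $\hat\zeta$ is the delicate part; I would either invoke a modulus of continuity in $\gamma$ that is uniform over $\zeta\in\cZ$ (again supplied by the integrable envelope) so that $\hat\zeta$ may be replaced by $\zeta'$, or argue along subsequences, noting that every subsequence of $\hat\zeta$ admits a further subsequence converging $P$-almost surely along which dominated convergence applies at the random limit. Finally, I would record the simplification specific to this paper: the estimating function $\psi(\gamma,\zeta)$ in \cref{A-eq:def-psi} is affine in $\gamma=(\alpha,\beta)$, so $\partial_\gamma\psi$ does not depend on $\gamma$ at all and term (II) vanishes identically; the general argument is needed only so that the lemma remains applicable to other estimating functions, as stated in \cref{A-subsec:asymptotic-additional-lemmas-ncf}. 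Collecting the three bounds gives $\PP_n\{\partial_\gamma\psi(\bar\gamma_n,\hat\zeta)\}\pto\PP\{\partial_\gamma\psi(\gamma^\star,\zeta')\}$.
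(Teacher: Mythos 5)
Your decomposition into (I), (II), (III) is exactly the paper's three-term split, and each term is handled the same way: Glivenko--Cantelli (via \cref{A-asu:donsker}) for the empirical-process term, \cref{A-asu:nuisance-conv-PPee-deriv} for the nuisance term, and continuity in $\gamma$ plus the integrable envelope of \cref{A-asu:reg-dominated-ee-deriv} with dominated convergence (in probability) for the $\gamma$-direction term. Your extra observation that $\psi$ in \cref{A-eq:def-psi} is affine in $\gamma$, so term (II) vanishes identically for this particular estimating function, is a correct and worthwhile remark but does not change the route --- the proposal is essentially the paper's proof.
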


\begin{proof}[Proof of \cref{A-lem:conv-ee-deriv}]
	We have
	\begin{align}
		& ~~~~ \PP_n \{ \partial_\gamma \psi(\bar\gamma_n, \hat\zeta) \} - \PP \big\{ \partial_\gamma \psi(\gamma^\star, \zeta') \big\} \nonumber \\
		& =  \PP_n \{ \partial_\gamma \psi(\bar\gamma_n, \hat\zeta) \} - \PP \big\{ \partial_\gamma \psi(\bar\gamma_n, \hat\zeta) \big\} + \PP \big\{ \partial_\gamma \psi(\bar\gamma_n, \hat\zeta) \big\} -  \PP \big\{ \partial_\gamma \psi(\gamma^\star, \zeta') \big\} \nonumber \\
		& \leq \sup_{\gamma \in \Theta, \zeta \in \cZ} | (\PP_n - \PP) \partial_\gamma \psi(\gamma, \zeta) | \nonumber \\
		& ~~~~ + \bigg[\PP \big\{ \partial_\gamma \psi(\bar\gamma_n, \hat\zeta) \big\} -  \PP \big\{ \partial_\gamma \psi(\gamma^\star, \hat\zeta) \big\}\bigg] + \bigg[\PP \big\{ \partial_\gamma \psi(\gamma^\star, \hat\zeta) \big\} -  \PP \big\{ \partial_\gamma \psi(\gamma^\star, \zeta') \big\}\bigg].
		\label{A-eq:lem:conv-ee-deriv:proofuse1}
	\end{align}
	We now control each of the three terms in \cref{A-eq:lem:conv-ee-deriv:proofuse1}.

	For the first term in \cref{A-eq:lem:conv-ee-deriv:proofuse1}, \cref{A-asu:donsker} implies that $\{\partial_\gamma \psi(\gamma,\zeta): \gamma \in \Theta, \zeta \in \cZ\}$ is a $P$-Glivenko-Cantelli class and thus
	\begin{align}
		\sup_{\gamma \in \Theta, \zeta \in \cZ} | (\PP_n - \PP) \partial_\gamma \psi(\gamma, \zeta) | = o_P(1). \label{A-eq:lem:conv-ee-deriv:proofuse2}
	\end{align}	

	For the second term in \cref{A-eq:lem:conv-ee-deriv:proofuse1}, it follows from the fact that $\bar\gamma_n \pto \gamma^\star$ (lemma assumption), the dominatedness of $\partial_\gamma \psi(\gamma,\zeta)$ (\cref{A-asu:reg-dominated-ee-deriv}), and the dominated convergence theorem for convergence in probability (see, e.g., \citet[][Result (viii) of Chapter 3.2]{chung2001course}) that
	\begin{align}
		\PP \big\{ \partial_\gamma \psi(\bar\gamma_n, \hat\zeta) \big\} - \PP \big\{ \partial_\gamma \psi(\gamma^\star, \hat\zeta) \big\} = o_P(1). \label{A-eq:lem:conv-ee-deriv:proofuse3}
	\end{align}

	For the third term in \cref{A-eq:lem:conv-ee-deriv:proofuse1}, \cref{A-asu:nuisance-conv-PPee-deriv} states that
	\begin{align}
		\PP \big\{ \partial_\gamma \psi(\gamma^\star, \hat\zeta) \big\} - \PP \big\{ \partial_\gamma \psi(\gamma^\star, \hat\zeta) \big\} = o_P(1). \label{A-eq:lem:conv-ee-deriv:proofuse4}
	\end{align}

	Plugging \cref{A-eq:lem:conv-ee-deriv:proofuse4,A-eq:lem:conv-ee-deriv:proofuse3,A-eq:lem:conv-ee-deriv:proofuse2} into \cref{A-eq:lem:conv-ee-deriv:proofuse1} yields
	\begin{align*}
		\PP_n \big\{ \partial_\gamma \psi(\bar\gamma_n, \hat\zeta) \big\} - \PP \big\{ \partial_\gamma \psi(\gamma^\star, \zeta') \big\} = o_P(1).
	\end{align*}
	The proof is thus completed.
\end{proof}

\bigskip

\begin{lem}[Convergence of the ``meat'' term.]
	\label{A-lem:conv-ee-meat}
	Suppose the following regularity conditions hold: \cref{A-asu:reg-bounded-and-cont-differentiable-ee,A-asu:reg-dominated-ee-meat,A-asu:nuisance-conv-PPee-meat,A-asu:donsker}. If a sequence of random variables $\bar\gamma_n$ satisfies $\bar\gamma_n \pto \gamma^\star$, then $\PP_n \{ \psi(\bar\gamma_n, \hat\zeta) \psi(\bar\gamma_n, \hat\zeta)^T \} \pto \PP \{ \psi(\gamma^\star, \zeta') \psi(\gamma^\star, \zeta')^T \}$.
\end{lem}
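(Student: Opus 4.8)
The plan is to mirror the three-term decomposition used in the proof of \cref{A-lem:conv-ee-deriv}, replacing the Jacobian $\partial_\gamma\psi$ by the outer product $\psi\psi^\top$ throughout. First I would write
\begin{align}
& \PP_n\{\psi(\bar\gamma_n,\hat\zeta)\psi(\bar\gamma_n,\hat\zeta)^\top\} - \PP\{\psi(\gamma^\star,\zeta')\psi(\gamma^\star,\zeta')^\top\} \nonumber \\
& = (\PP_n - \PP)\{\psi(\bar\gamma_n,\hat\zeta)\psi(\bar\gamma_n,\hat\zeta)^\top\} \nonumber \\
& \quad + \big[\PP\{\psi(\bar\gamma_n,\hat\zeta)\psi(\bar\gamma_n,\hat\zeta)^\top\} - \PP\{\psi(\gamma^\star,\hat\zeta)\psi(\gamma^\star,\hat\zeta)^\top\}\big] \nonumber \\
& \quad + \big[\PP\{\psi(\gamma^\star,\hat\zeta)\psi(\gamma^\star,\hat\zeta)^\top\} - \PP\{\psi(\gamma^\star,\zeta')\psi(\gamma^\star,\zeta')^\top\}\big], \nonumber
\end{align}
and then argue that each of the three pieces on the right-hand side is $o_P(1)$.

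The third term is immediate: it is exactly the quantity assumed to vanish in \cref{A-asu:nuisance-conv-PPee-meat}. The second term I would control by the same route as in \cref{A-lem:conv-ee-deriv}: since $\psi(\cdot,\zeta)$ is continuously differentiable in $\gamma$ (\cref{A-asu:reg-bounded-and-cont-differentiable-ee}), the entrywise product $\psi\psi^\top$ is continuous in $\gamma$, so combining $\bar\gamma_n \pto \gamma^\star$ with the integrable domination of $\psi\psi^\top$ (\cref{A-asu:reg-dominated-ee-meat}) and the dominated convergence theorem for convergence in probability \citep[][Chapter 3.2, Result (viii)]{chung2001course} gives $\PP\{\psi(\bar\gamma_n,\hat\zeta)\psi(\bar\gamma_n,\hat\zeta)^\top\} - \PP\{\psi(\gamma^\star,\hat\zeta)\psi(\gamma^\star,\hat\zeta)^\top\} = o_P(1)$.

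The main obstacle is the first term, the empirical-process piece. Unlike the derivative lemma, where \cref{A-asu:donsker} directly supplies a Glivenko–Cantelli (hence uniform) bound for $\{\partial_\gamma\psi\}$, here I need a uniform law of large numbers for the product class $\{\psi(\gamma,\zeta)\psi(\gamma,\zeta)^\top : \gamma\in\Theta,\ \zeta\in\cZ\}$, which is not assumed verbatim. I would obtain it from a preservation argument: \cref{A-asu:donsker} makes $\Psi$ a $P$-Donsker and therefore $P$-Glivenko–Cantelli class, while \cref{A-asu:reg-bounded-and-cont-differentiable-ee} makes $\Psi$ uniformly bounded. Each entry of $\psi\psi^\top$ has the form $\psi_i\psi_j$, and on the common bounded range the map $(u,v)\mapsto u_iv_j$ is Lipschitz; since products of uniformly bounded Glivenko–Cantelli classes remain Glivenko–Cantelli, it follows that $\sup_{\gamma\in\Theta,\zeta\in\cZ}|(\PP_n-\PP)\{\psi(\gamma,\zeta)\psi(\gamma,\zeta)^\top\}| = o_P(1)$, which bounds the first term in absolute value.

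Plugging the three $o_P(1)$ bounds back into the decomposition yields the claim. The only delicate point to state carefully is the preservation step, since it is precisely where uniform boundedness (rather than the Donsker property alone) is essential; the remainder is a direct transcription of the argument already given for \cref{A-lem:conv-ee-deriv}.
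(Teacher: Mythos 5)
Your proposal is correct and follows essentially the same route as the paper: the same three-term decomposition borrowed from \cref{A-lem:conv-ee-deriv}, with \cref{A-asu:nuisance-conv-PPee-meat} handling the nuisance limit, dominated convergence handling the $\gamma$-continuity term, and a uniform law of large numbers for the product class handling the empirical-process term. The only cosmetic difference is that you invoke Glivenko--Cantelli preservation for products of uniformly bounded GC classes directly, whereas the paper obtains the same conclusion by showing $\Psi\times\Psi$ is $P$-Donsker via the Lipschitz-transformation preservation result; both hinge on the uniform boundedness in \cref{A-asu:reg-bounded-and-cont-differentiable-ee} and are interchangeable here.
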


\begin{proof}[Proof of \cref{A-lem:conv-ee-meat}]
	The proof is identical to the proof of \cref{A-lem:conv-ee-deriv}, except that we need to establish the following:
	\begin{itemize}
	 	\item $\psi(\gamma, \zeta) \psi(\gamma, \zeta)^T$ is continuous: this is \cref{A-asu:reg-bounded-and-cont-differentiable-ee};
	 	\item $\psi(\gamma, \zeta) \psi(\gamma, \zeta)^T$ is bounded by an integrable function: this is \cref{A-asu:reg-dominated-ee-meat};
	 	\item $\PP \{\psi(\gamma^\star, \zeta) \psi(\gamma^\star, \hat\zeta)^T\} - \PP \{\psi(\gamma^\star, \zeta) \psi(\gamma^\star, \zeta')^T\} = o_P(1)$: this is \cref{A-asu:nuisance-conv-PPee-meat};
	 	\item $\psi(\gamma, \zeta) \psi(\gamma, \zeta)^T$ takes value in a $P$-Glivenko-Cantelli class.
	\end{itemize}
	We establish the last bullet point now. We know $\psi(\gamma,\zeta)$ is uniformly bounded (\cref{A-asu:reg-bounded-and-cont-differentiable-ee}), $\Psi$ is a $P$-Donsker class (\cref{A-asu:donsker}), and the product operation $f \cdot g$ is a Lipschitz transformation. These three statements imply that $\Psi\times\Psi$ is a $P$-Donsker class (see, e.g., \citet[][Example 19.20]{van2000asymptotic}). Thus, $\{\psi(\gamma, \zeta) \psi(\gamma, \zeta)^T: \gamma\in\Theta, \zeta \in \cZ\} \subset \Psi\times\Psi$ is a $P$-Donsker class and thus a $P$-Glivenko-Cantelli class. This completes the proof.
\end{proof}

\bigskip

\begin{lem}[Lemma 19.24 of \citet{van2000asymptotic}]
	\label{A-lem:vdv19.24}
	Suppose that $\mathcal{F}$ is a $P$-Donsker class of measurable functions and $\hat{f}_n$ is a sequence of random functions that take their values in $\mathcal{F}$ such that $\int\{ \hat{f}_n(x) - f_0(x) \}^2 dP(x)$ converges in probability to 0 for some $f_0 \in L_2(P)$. Then $\mathbb{G}_n(\hat{f}_n - f_0) \pto 0$ and hence $\mathbb{G} \hat{f}_n \rightsquigarrow \mathbb{G}_P(f_0)$.
\end{lem}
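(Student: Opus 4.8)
The plan is to deduce the result from the \emph{asymptotic equicontinuity} of the empirical process, which is the operational content of the $P$-Donsker hypothesis. Writing $\mathbb{G}_n := \sqrt{n}(\PP_n - P)$ and $\|f\|_{P,2} := (\int f^2\,dP)^{1/2}$, the $P$-Donsker property means $\mathbb{G}_n \rightsquigarrow \mathbb{G}_P$ in $\ell^\infty(\mathcal{F})$ to a tight Gaussian process; the standard tightness characterization then furnishes, for every $\epsilon,\eta>0$, a $\delta>0$ with
\[\limsup_{n\to\infty} P^*\Big( \sup_{f,g\in\mathcal{F},\,\|f-g\|_{P,2}<\delta} |\mathbb{G}_n(f-g)| > \epsilon \Big) < \eta.\]
This equicontinuity display is the sole structural input I need.

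First I would dispose of the possibility that $f_0\notin\mathcal{F}$. A single fixed square-integrable function is trivially $P$-Donsker by the ordinary CLT, and a finite union of $P$-Donsker classes is again $P$-Donsker, so the augmented class $\mathcal{F}\cup\{f_0\}$ is $P$-Donsker and remains totally bounded under $\rho_P$. Replacing $\mathcal{F}$ by $\mathcal{F}\cup\{f_0\}$ throughout, I may assume the equicontinuity display holds with both indices ranging over $\mathcal{F}\cup\{f_0\}$, so that in particular the pair $(\hat f_n, f_0)$ is an admissible pair in the supremum.

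For the core step, fix $\epsilon>0$, choose any $\eta>0$, and take the corresponding $\delta>0$. Because $\|\hat f_n - f_0\|_{P,2}\pto 0$ by hypothesis, the event $A_n := \{\|\hat f_n - f_0\|_{P,2}<\delta\}$ satisfies $P(A_n)\to 1$. On $A_n$ the data-dependent evaluation is dominated by the modulus,
\[|\mathbb{G}_n(\hat f_n - f_0)| \le \sup_{f,g\in\mathcal{F}\cup\{f_0\},\,\|f-g\|_{P,2}<\delta} |\mathbb{G}_n(f-g)|,\]
which is precisely the quantity controlled above. Combining the two displays gives $\limsup_n P^*(|\mathbb{G}_n(\hat f_n - f_0)|>\epsilon)\le\eta$; since $\eta$ was arbitrary, $\mathbb{G}_n(\hat f_n - f_0)\pto 0$, establishing the first assertion.

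For the second assertion I would decompose $\mathbb{G}_n\hat f_n = \mathbb{G}_n f_0 + \mathbb{G}_n(\hat f_n - f_0)$. The second summand is $o_P(1)$ by what was just shown, while the first is the ordinary empirical process evaluated at the \emph{fixed} function $f_0\in L_2(P)$, so $\mathbb{G}_n f_0 \rightsquigarrow N(0, \mathrm{Var}_P(f_0)) = \mathbb{G}_P(f_0)$ by the classical central limit theorem. Slutsky's lemma then yields $\mathbb{G}_n\hat f_n \rightsquigarrow \mathbb{G}_P(f_0)$. The main obstacle here is technical rather than conceptual: because $\hat f_n$ is random and $f\mapsto\mathbb{G}_n f$ need not be jointly measurable over $\mathcal{F}$, one cannot simply ``plug in'' $\hat f_n$, and the argument must be phrased through the outer probability $P^*$ and the supremum modulus of continuity; the companion subtlety, easy to overlook, is the augmentation by $f_0$, without which the equicontinuity inequality would not legitimately apply to the pair $(\hat f_n, f_0)$.
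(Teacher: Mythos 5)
Your argument is correct, but note that the paper does not prove this statement at all: it is imported verbatim as Lemma 19.24 of van der Vaart (2000) and used as a black box. What you have written is essentially the standard textbook proof of that lemma --- asymptotic equicontinuity of $\mathbb{G}_n$ over the (if necessary augmented) Donsker class $\mathcal{F}\cup\{f_0\}$, restriction to the high-probability event $\{\|\hat f_n-f_0\|_{P,2}<\delta\}$, and then the CLT plus Slutsky for the second assertion --- and it is sound, including the care taken with outer probability and with adjoining $f_0$ when it need not lie in $\mathcal{F}$.
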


\subsection{Proof of \texorpdfstring{\cref{A-thm:rate-mr-ncf}}{Theorem B.X}}
\label{A-subsec:asymptotic-proof-ncf}

We are now ready to establish the asymptotic theory. We first prove \cref{thm:estimator-consistency}.

\begin{proof}[Proof of \cref{thm:estimator-consistency}]
	The theorem follows immediately from \cref{A-lem:psi-multiply-robust} and \cref{A-lem:consistency}.
\end{proof}

We now prove \cref{A-thm:rate-mr-ncf}, which is \cref{thm:estimator-normality} in the main paper for the non-cross-fitted estimator $\hat\gamma$.

\begin{proof}[Proof of \cref{A-thm:rate-mr-ncf}]
	Because $\psi(\gamma,\zeta)$ is continuously differentiable in $\gamma$ (\cref{A-asu:reg-bounded-and-cont-differentiable-ee}), the Lagrange mean value theorem implies that
	\begin{align}
		0 = \PP_n\{ \psi(\hat\gamma, \hat\zeta)\} & = \PP_n\{ \psi(\gamma^\star, \hat\zeta) \} + \left[ \frac{\partial}{\partial \gamma^T} \PP_n\{\psi(\bar\gamma, \hat\zeta)\} \right](\hat\gamma - \gamma^\star) \nonumber \\
		& = \PP_n\{ \psi(\gamma^\star, \hat\zeta) \} + \PP_n\{\partial_\gamma \psi(\bar\gamma, \hat\zeta)\} (\hat\gamma - \gamma^\star), \label{A-eq:thm:normality:proofuse1}
	\end{align}
	where $\bar\gamma$ is between $\hat\gamma$ and $\gamma^\star$.
	Because \cref{A-lem:consistency} implies that $\hat\gamma \pto \gamma^\star$ and thus $\bar\gamma \pto \gamma^\star$, \cref{A-lem:conv-ee-deriv} implies that
	\begin{align}
		\PP_n\{ \partial_\gamma \psi(\bar\gamma, \hat\zeta)\} \pto \PP\{\partial_\gamma \psi(\gamma^\star, \zeta')\}. \label{A-eq:thm:normality:proofuse2}
	\end{align}
	Because $\PP\{\partial_\gamma \psi(\gamma^\star,\zeta')\}$ is invertible (\cref{A-asu:reg-invertible-ee-deriv}), \cref{A-eq:thm:normality:proofuse2} implies that $\PP_n\{ \partial_\gamma \psi(\bar\gamma, \hat\zeta)\}$ is invertible with probability approaching 1. Therefore, \cref{A-eq:thm:normality:proofuse1,A-eq:thm:normality:proofuse2} imply that with probability approaching 1 we have
	\begin{align}
		\sqrt{n} (\hat\gamma - \gamma^\star) = - [ \PP\{\partial_\gamma \psi(\gamma^\star, \zeta')\} ]^{-1} [ \sqrt{n} \PP_n\{ \psi(\gamma^\star, \hat\zeta)\} ]. \label{A-eq:thm:normality:proofuse3}
	\end{align}
	
	We now show that the second term in \cref{A-eq:thm:normality:proofuse3} is asymptotically normal. The assumed convergence rates for the nuisance function estimates [\cref{A-eq:nuisance-product-rate1,A-eq:nuisance-product-rate2,A-eq:nuisance-product-rate3}] imply that for each $t \in [T]$, the limit $\zeta_t'$ satisfies one of the four conditions: (i) $p_t' = p_t^\star$ and $q_t' = q_t^\star$; or (ii) $p_t' = p_t^\star$ and $\mu_t' = \mu_t^\star$; or (iii) $\eta_t' = \eta_t^\star$, $\nu_t' = \nu_t^\star$ and $q_t' = q_t^\star$; or (iv) $\eta_t' = \eta_t^\star$, $\nu_t' = \nu_t^\star$ and $\mu_t' = \mu_t^\star$. Therefore, \cref{A-lem:psi-multiply-robust} implies that $\PP \{\psi(\gamma^\star, \zeta')\} = 0$, and thus we have
	\begin{align}
		\PP_n \{ \psi(\gamma^\star, \hat\zeta) \} & = \PP_n\{ \psi(\gamma^\star, \hat\zeta) \} - \PP\{ \psi(\gamma^\star, \hat\zeta) \} + \PP\{ \psi(\gamma^\star, \hat\zeta) \} - \PP \{\psi(\gamma^\star, \zeta')\} \nonumber \\
		& = (\PP_n - \PP)\{ \psi(\gamma^\star, \hat\zeta) \} + \PP\{ \psi(\gamma^\star, \hat\zeta) - \psi(\gamma^\star, \zeta')\}. \label{A-eq:thm:normality:proofuse4}
	\end{align}

	For the term $(\PP_n - \PP)\{ \psi(\gamma^\star, \hat\zeta) \}$ in \cref{A-eq:thm:normality:proofuse4}, because of \cref{A-asu:nuisance-conv-ee-l2}, we can invoke \cref{A-lem:vdv19.24} to get
	\begin{align*}
		\mathbb{G}_n\{\psi(\gamma^\star, \hat\zeta) - \psi(\gamma^\star, \zeta')\} \pto 0,
	\end{align*}
	or equivalently
	\begin{align}
		(\PP_n - \PP)\{ \psi(\gamma^\star, \hat\zeta) \} = (\PP_n - \PP)\{ \psi(\gamma^\star, \zeta') \} + o_P(n^{-1/2}). \label{A-eq:thm:normality:proofuse5}
	\end{align}

	For the term $\PP\{ \psi(\gamma^\star, \hat\zeta) - \psi(\gamma^\star, \zeta')\}$ in \cref{A-eq:thm:normality:proofuse4}, we have
	\begin{align}
		& ~~~~ \Big|\PP\{ \psi(\gamma^\star, \hat\zeta) - \psi(\gamma^\star, \zeta')\} \Big| \nonumber \\
		& \lesssim \max_{t \in [T]} \bigg\{\|\hat{p}_t - p_t^\star\| \cdot \Big(\|\hat\eta_t - \eta_t^\star\| + \|\hat\nu_t - \nu_t^\star\| \Big) + \|\hat{q}_t - q_t^\star\| \cdot \|\hat\mu_t - \mu_t^\star\| \bigg\} \label{A-eq:thm:normality:proofuse6} \\
		& = o_P(n^{-1/2}), \label{A-eq:thm:normality:proofuse7}
	\end{align}
	where \cref{A-eq:thm:normality:proofuse6} follows from \cref{A-lem:psi-rate-multiply-robust} and \cref{A-eq:thm:normality:proofuse7} follows from the assumed convergence rates for the nuisance function estimates [\cref{A-eq:nuisance-product-rate1,A-eq:nuisance-product-rate2,A-eq:nuisance-product-rate3}].

	Plugging \cref{A-eq:thm:normality:proofuse5,A-eq:thm:normality:proofuse7} into \cref{A-eq:thm:normality:proofuse4} and we get
	\begin{align}
		\sqrt{n} \PP_n \{ \psi(\gamma^\star, \hat\zeta) \} & = \sqrt{n}(\PP_n - \PP)\{ \psi(\gamma^\star, \zeta') \} + o_P(1) \nonumber \\
		& \dto N \Big(0, ~\PP\{ \psi(\gamma^\star, \zeta') \psi(\gamma^\star, \zeta')^T \} \Big), \label{A-eq:thm:normality:proofuse8}
	\end{align}
	where the last step follows from the Lindeberg-Feller Central Limit Theorem and the fact that $\PP\{ \psi(\gamma^\star, \zeta')\} = 0$ (\cref{A-lem:psi-multiply-robust}).

	Therefore, plugging \cref{A-eq:thm:normality:proofuse8} into \cref{A-eq:thm:normality:proofuse3}, it follows from Slutsky's Theorem that $\sqrt{n} (\hat\gamma - \gamma^\star) \dto N(0, V)$ with $V$ defined in the theorem statement. Consistency of the variance estimator follows immediately from \cref{A-lem:conv-ee-deriv,A-lem:conv-ee-meat} and the continuous mapping theorem. This completes the proof.
\end{proof}

\subsection{Proof of \texorpdfstring{\cref{A-thm:rate-mr-cf}}{Theorem B.X}}
\label{A-subsec:asymptotic-proof-cf}

The proof for the asymptotic normality of the cross-fitting estimator is through verifying the conditions for Theorems 3.1 and 3.2 in \citet{chernozhukov2018double}. The technical details are similar to the proof of Theorem 4.2 in \citet{farbmacher2022causal}, and much of the technical results reuse the proof of \cref{A-thm:rate-mr-ncf}. Thus the proof is omitted.

\section{Additional Details and Results for Simulation}
\label{A-sec:additional_details_and_results_for_simulation}

\subsection{Deriving the True Nuisance Functions for GM-1}
\label{A-subsec:deriving_the_true_nuisance_functions_for_gm_1}

Because $P(A_t = a, M_t = m \mid H_t) = s_{am} / s$, by marginalizing over $m = \{0,1\}$ we obtain $p_t^\star$:
\begin{align*}
	p_t^\star(1 | H_t) & = P(A_t = 1 \mid H_t) \\
	& = P(A_t = 1, M_t = 1 \mid H_t) + P(A_t = 1, M_t = 0 \mid H_t) \\
	& = \frac{s_{10} + s_{11}}{s_{10} + s_{11} + s_{00} + s_{01}} \\
	& = \frac{e^{\kappa_1 + h_1(t,X_t)} + e^{\kappa_0 + \kappa_1 + \kappa_2 + h_1(t, X_t) + h_2(t, X_t)}}{1 + e^{\kappa_1 + h_1(t,X_t)} + e^{\kappa_2 + h_2(t,X_t)} + e^{\kappa_0 + \kappa_1 + \kappa_2 + h_1(t, X_t) + h_2(t, X_t)}},
\end{align*}
and thus $p_t^\star(0 | H_t) = 1 - p_t^\star(1 | H_t)$.

To derive $q_t^\star$, we first derive
\begin{align*}
	P(A_t = 1 \mid M_t = 0, X_t) & = \frac{e^{\kappa_1 + h_1(t, X_t)}}{1 + e^{\kappa_1 + h_1(t, X_t)}} = \expit\{\kappa_1 + h_1(t, X_t)\}, \\
	P(A_t = 1 \mid M_t = 1, X_t) & = \frac{e^{\kappa_0 + \kappa_1 + \kappa_2 + h_1(t, X_t) + h_2(t, X_t)}}{e^{\kappa_2 + h_2(t, X_t)} + e^{\kappa_0 + \kappa_1 + \kappa_2 + h_1(t, X_t) + h_2(t, X_t)}} = \expit\{\kappa_0 + \kappa_1 + h_1(t, X_t)\}.
\end{align*}
Thus,
\begin{align*}
	q_t^\star(1 | H_t, M_t) = P(A_t = 1 \mid H_t, M_t) = \expit\{\kappa_0 M_t + \kappa_1 + h_1(t, X_t)\},
\end{align*}
and thus $q_t^\star(0 | H_t, M_t) = 1 - q_t^\star(1 | H_t, M_t)$.

Before deriving $\eta_t^\star, \mu_t^\star, \nu_t^\star$, we derive a few intermediate results. 
We have
\begin{align*}
	P(M_t = 1 \mid A_t = 0, X_t) & = \frac{e^{\kappa_2 + h_2(t, X_t)}}{1 + e^{\kappa_2 + h_2(t, X_t)}} = \expit\{\kappa_2 + h_2(t, X_t)\}, \\
	P(M_t = 1 \mid A_t = 1, X_t) & = \frac{e^{\kappa_0 + \kappa_1 + \kappa_2 + h_1(t, X_t) + h_2(t, X_t)}}{e^{\kappa_1 + h_1(t, X_t)} + e^{\kappa_0 + \kappa_1 + \kappa_2 + h_1(t, X_t) + h_2(t, X_t)}} = \expit\{\kappa_0 + \kappa_2 + h_2(t, X_t)\}.
\end{align*}
Thus,
\begin{align*}
	P(M_t = 1 \mid H_t, A_t) = \expit\{\kappa_0 A_t + \kappa_2 + h_2(t, X_t)\}.
\end{align*}
Furthermore, for each $t \in [T]$, we compute the following through numerical integration (by interagrating over the the distribution of $X_t$ which is Gaussian): $\EE(A_t), \EE(M_t), \EE(A_t M_t)$, and we define $\delta_t := \xi_t \EE(X_t) + \rho_t \EE(M_t) + \lambda_t \EE(A_t) + \tau_t \EE(A_t M_t)$.

Next we derive $\eta_t^\star, \mu_t^\star, \nu_t^\star$. We have
\begin{align*}
	Y = \sum_{t=1}^T (\xi_t X_t + \rho_t M_t + \lambda_t A_t + \tau_t A_t M_t) + \epsilon.
\end{align*}
Then for each $s \in [T]$, we have
\begin{align*}
	\mu_s^\star(A_s, H_s, M_s) & = \EE(Y \mid H_s, A_s, M_s) \\
	& = \sum_{t=1}^s (\xi_t X_t + \rho_t M_t + \lambda_t A_t + \tau_t A_t M_t) \\
	& ~~~~ + \EE\bigg\{ \sum_{t=s+1}^T (\xi_t X_t + \rho_t M_t + \lambda_t A_t + \tau_t A_t M_t) \mid H_s, A_s, M_s\bigg\} \\
	& = \sum_{t=1}^s (\xi_t X_t + \rho_t M_t + \lambda_t A_t + \tau_t A_t M_t) \\
	& ~~~~ + \sum_{t=s+1}^T \{(\xi_t \EE(X_t) + \rho_t \EE(M_t) + \lambda_t \EE(A_t) + \tau_t \EE(A_t M_t)\} \\
	& = \sum_{t=1}^s (\xi_t X_t + \rho_t M_t + \lambda_t A_t + \tau_t A_t M_t) + \sum_{t=s+1}^T \delta_t,
\end{align*}
and
\begin{align*}
	\nu_s^\star(a) & = \EE \{ \EE (Y \mid H_s, A_s = a, M_s) \mid H_s, A_s = 1-a\} \\
	& = \EE \bigg\{ \sum_{t=1}^{s-1} (\xi_t X_t + \rho_t M_t + \lambda_t A_t + \tau_t A_t M_t) \\
	& ~~~~ + \xi_s X_s + \rho_s M_s + \lambda_s a + \tau_s a M_s + \sum_{t=s+1}^T \delta_t ~\bigg|~ H_s, A_s = 1-a \bigg\} \\
	& = \sum_{t=1}^{s-1} (\xi_t X_t + \rho_t M_t + \lambda_t A_t + \tau_t A_t M_t) \\
	& ~~~~ + \xi_s X_s + \lambda_s a + (\rho_s + \tau_s a) \EE(M_s \mid H_s, A_s = 1-a) + \sum_{t=s+1}^T \delta_t \\
	& = \sum_{t=1}^{s-1} (\xi_t X_t + \rho_t M_t + \lambda_t A_t + \tau_t A_t M_t) \\
	& ~~~~ + \xi_s X_s + \lambda_s a + (\rho_s + \tau_s a) ~\expit\{\kappa_0(1-a) + \kappa_2 + h_2(s, X_s)\} + \sum_{t=s+1}^T \delta_t,
\end{align*}
and similarly
\begin{align*}
	\eta_s^\star(A_s) & = \EE(Y \mid H_s, A_s) = \EE\{\mu_s^\star(A_s, H_s, M_s) \mid H_s, A_s\} \\
	& = \sum_{t=1}^{s-1} (\xi_t X_t + \rho_t M_t + \lambda_t A_t + \tau_t A_t M_t) \\
	& ~~~~ + \xi_s X_s + \lambda_s A_s + (\rho_s + \tau_s A_s) ~\expit\{\kappa_0 A_s + \kappa_2 + h_2(s, X_s)\} + \sum_{t=s+1}^T \delta_t.
\end{align*}

\subsection{Additional Results from Simulation Study 1}
\label{A-subsec:additional_results_from_simulation_study_1}

Here we present additional results from simulation study 1. \cref{A-fig:simulation-1-marginal-cp} and \cref{A-fig:simulation-1-marginal-bias} are expanded versions of \cref{fig:simulation-1} with more $n$'s and the value of each cell displayed. \cref{A-fig:simulation-1-moderated-cp} and \cref{A-fig:simulation-1-moderated-bias} are parallel versions for when the effect of interest is in the mediation effect moderated by the decision point index $t$, i.e., when $f(t) = (1,t)^T$. In this case, $\alpha_1$ and $\beta_1$ represent the intercepts for NDEE and NIEE, and $\alpha_2$ and $\beta_2$ represent the slopes.

\begin{figure}[htbp]
    \includegraphics[width = \textwidth]{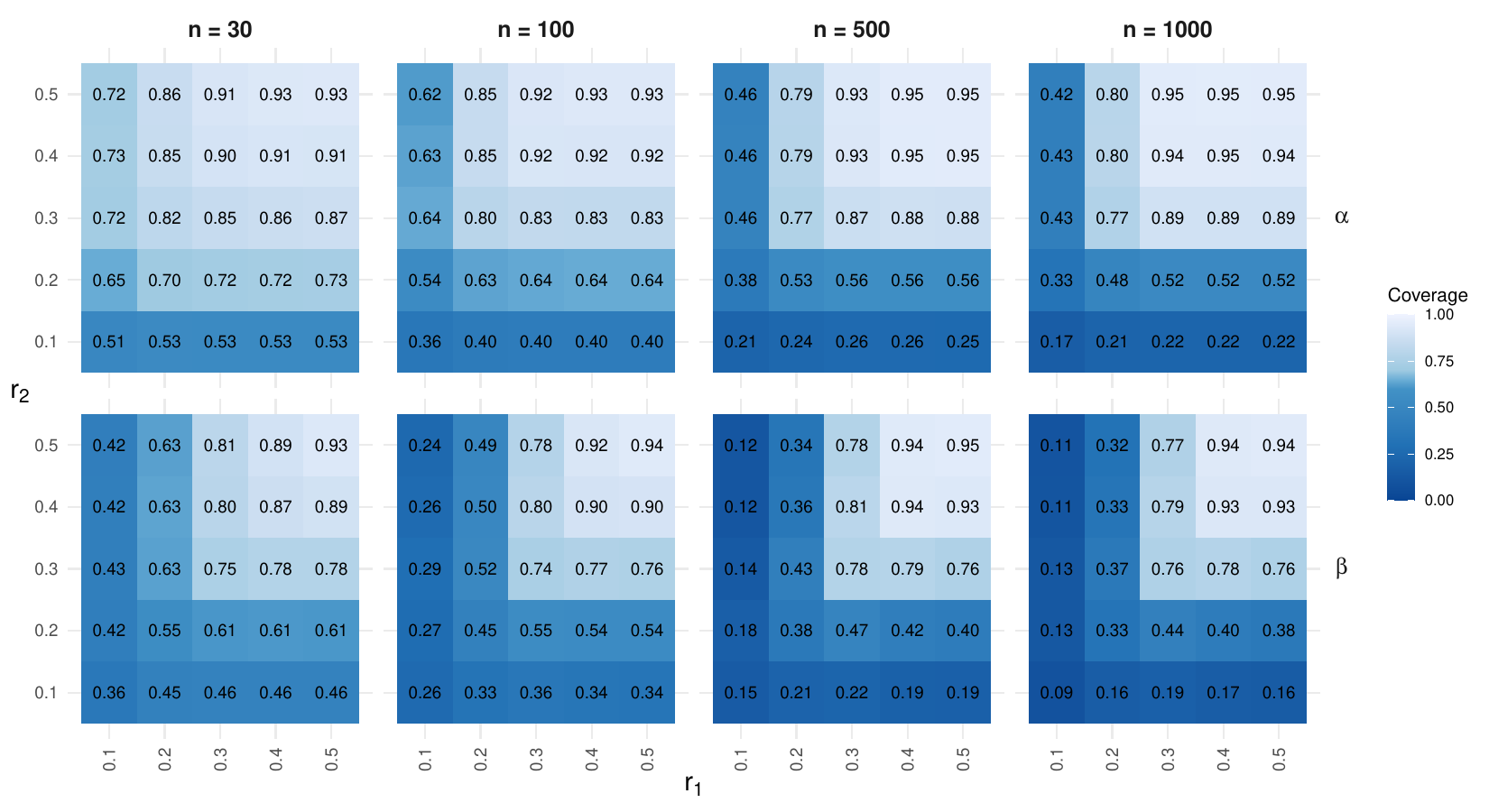}
    \caption{Additional simulation results under GM-1 for fully marginal effect with $f(t) = 1$: Coverage of 95\% confidence intervals for $\hat\alpha$ and $\hat\beta$ under various convergence rates of the nuisance parameters, $r_1$ and $r_2$.}
    \label{A-fig:simulation-1-marginal-cp}
\end{figure}

\begin{figure}[htbp]
    \includegraphics[width = \textwidth]{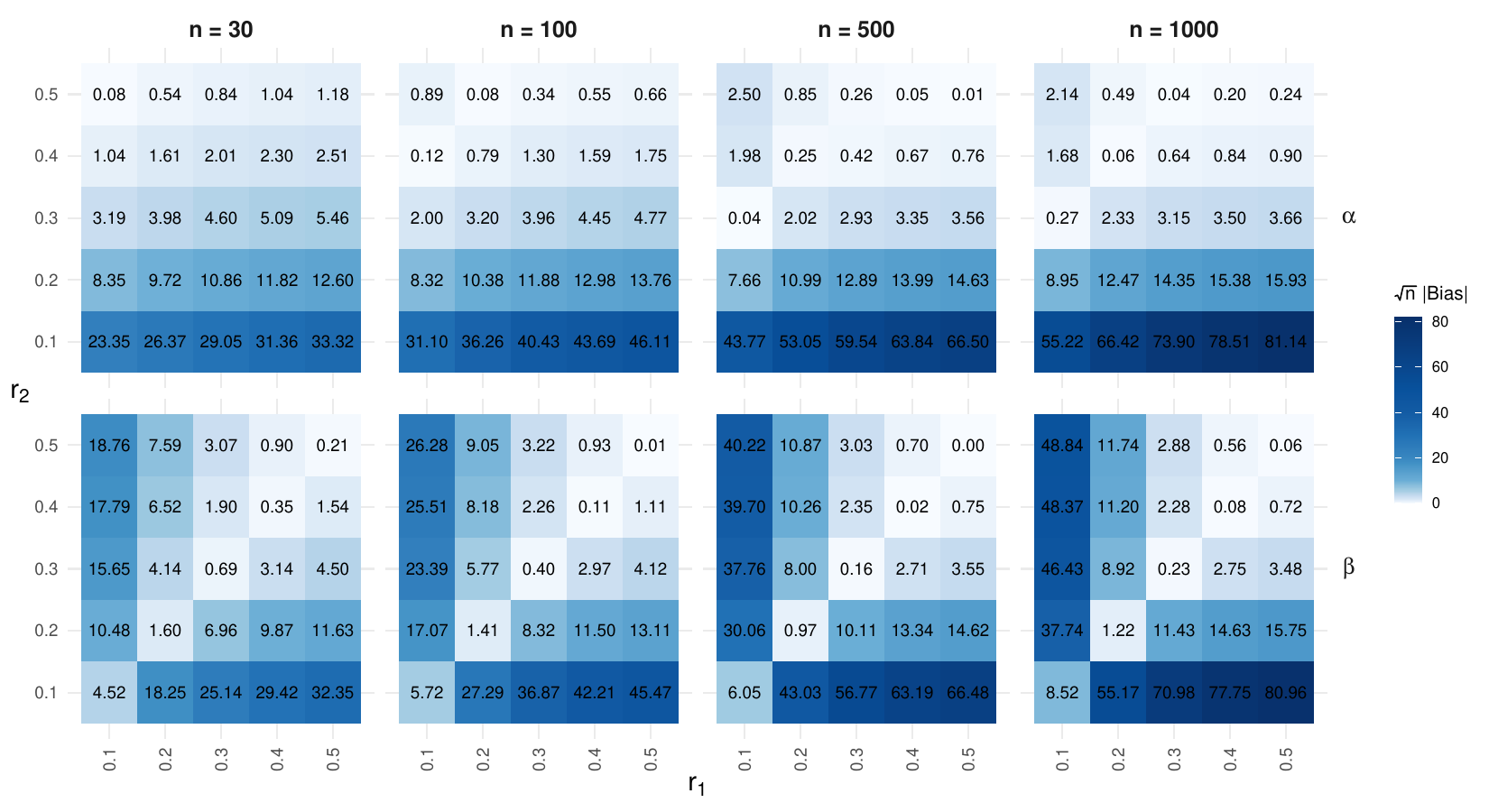}
    \caption{Additional simulation results under GM-1 for fully marginal effect with $f(t) = 1$: $\sqrt{n}~|\text{Bias}|$ for $\hat\alpha$ and $\hat\beta$ under various convergence rates of the nuisance parameters, $r_1$ and $r_2$.}
    \label{A-fig:simulation-1-marginal-bias}
\end{figure}

\begin{figure}[htbp]
    \includegraphics[width = \textwidth]{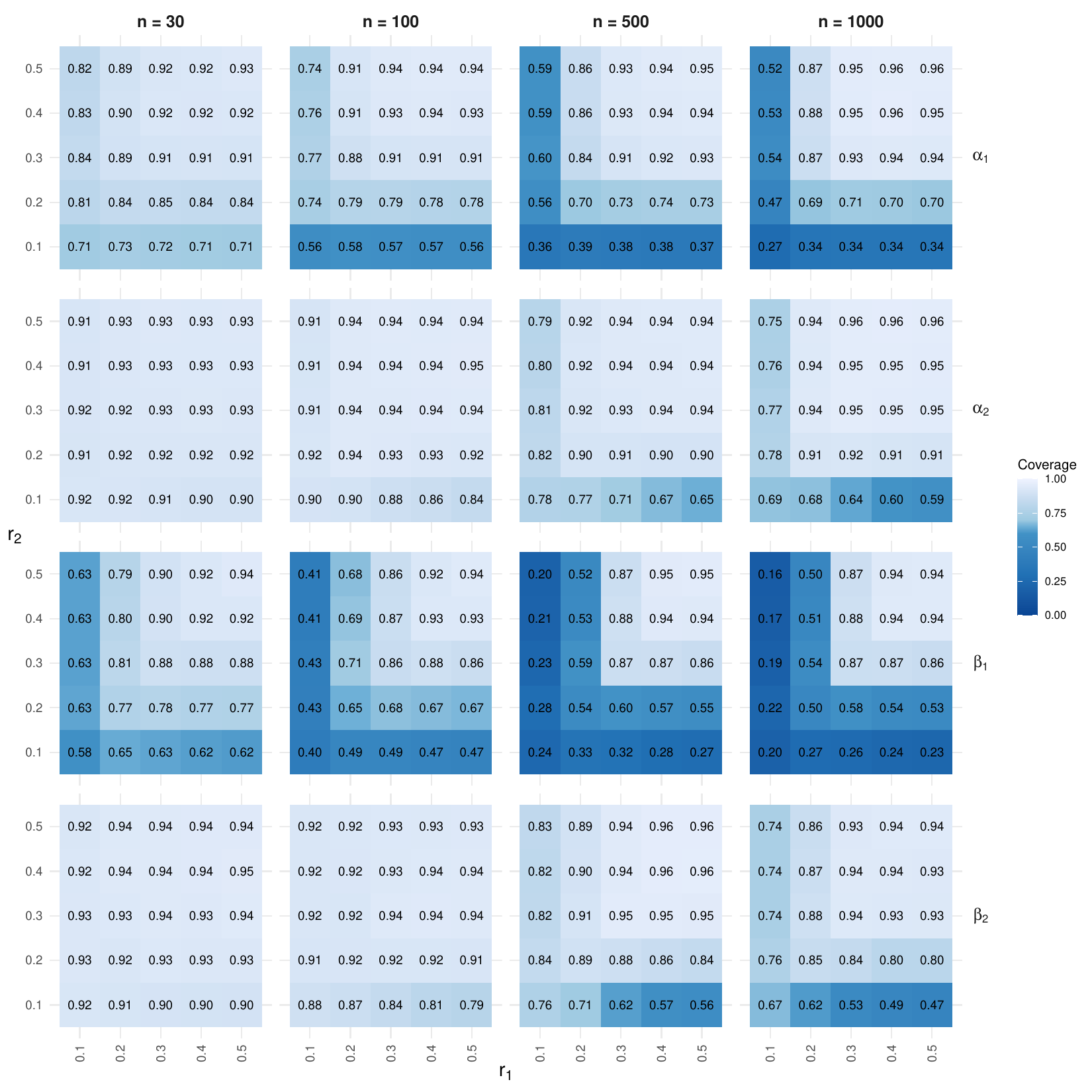}
    \caption{Additional simulation results under GM-1 for effect moderated by $t$ with $f(t) = (1,t)^T$: Coverage of 95\% confidence intervals for $\hat\alpha_1, \hat\alpha_2,\hat\beta_1, \hat\beta_2$ under various convergence rates of the nuisance parameters, $r_1$ and $r_2$.}
    \label{A-fig:simulation-1-moderated-cp}
\end{figure}

\begin{figure}[htbp]
    \includegraphics[width = \textwidth]{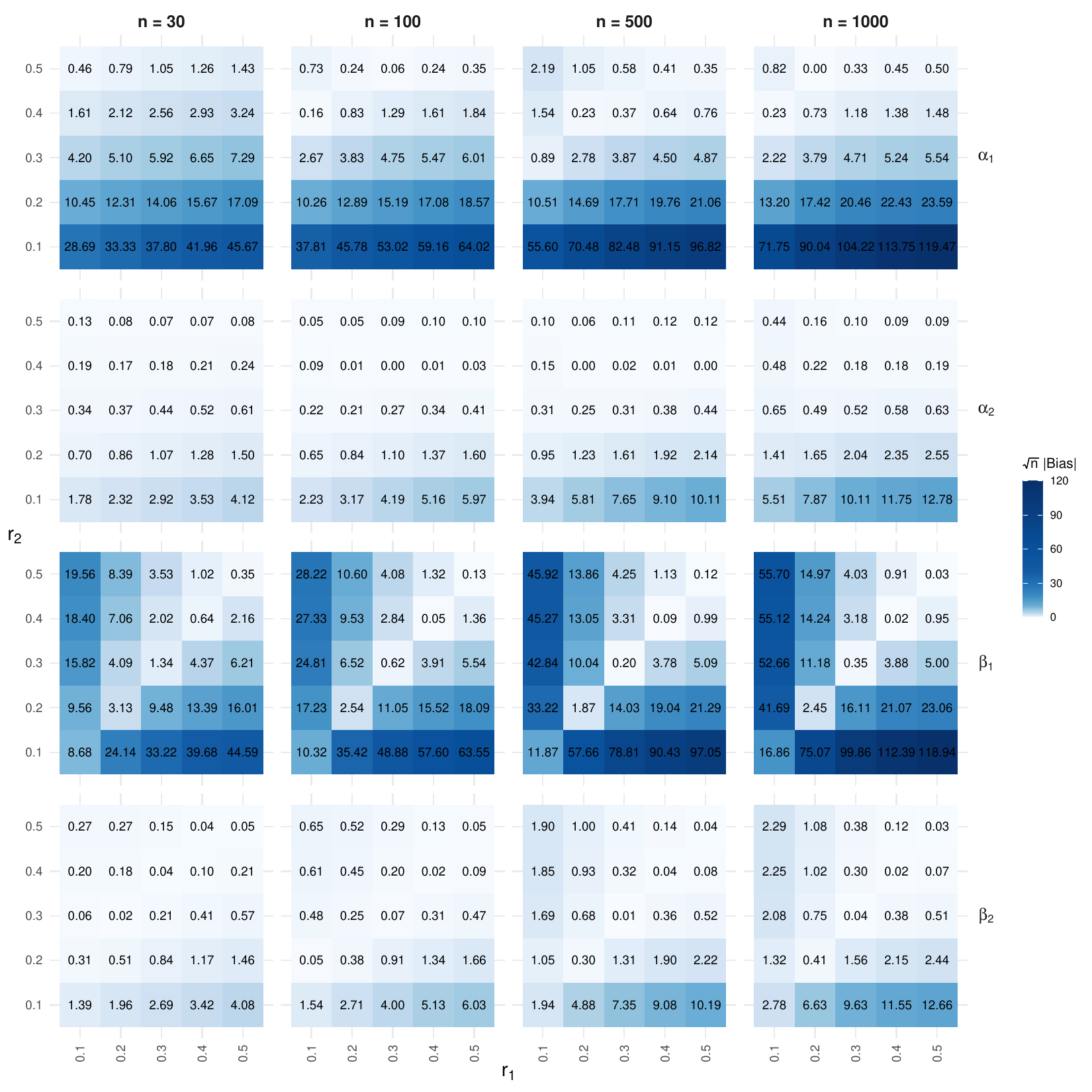}
    \caption{Additional simulation results under GM-1 for effect moderated by $t$ with $f(t) = (1,t)^T$: $\sqrt{n}~|\text{Bias}|$ for $\hat\alpha_1, \hat\alpha_2,\hat\beta_1, \hat\beta_2$ under various convergence rates of the nuisance parameters, $r_1$ and $r_2$.}
    \label{A-fig:simulation-1-moderated-bias}
\end{figure}

\section{SleepHealth: Data Preparation and Covariate Selection}
\label{A-sec:sleephealth}

\subsection{Data Sources and Participant Eligibility}
We combined three streams of self-reported, sensor, and questionnaire data for each study participant:
\begin{itemize}
  \item \textbf{Daily surveys}: sleepiness ratings, sleep-quality assessments, and PM check-ins.
  \item \textbf{Baseline questionnaires}: four instruments (\textit{About Me}, \textit{My Health}, \textit{Sleep Assessment}, \textit{Sleep Habits}) administered at enrollment.
  \item \textbf{HealthKit step counts}: minute-level step intervals recorded by participants' Apple devices.
\end{itemize}
Participants were deemed eligible if they completed the sleepiness, sleep-quality, and PM check-in surveys on at least six consecutive calendar days, and if that first 6-day common engagement streak began within 30 days of their first survey entry. We identified streaks by inner-joining the date-stamped survey records, grouping by participant, and computing run lengths of consecutive dates. The resulting list of participant IDs and the start date of their qualifying streak comprised our eligibility anchor.

\subsection{Data Integration and Processing}
\subsubsection{Baseline Covariates}
For each eligible participant, we loaded all four baseline questionnaires and selected one record per instrument, choosing the entry whose timestamp was closest to the participant's 6-day streak start. This ensured consistent temporal alignment between baseline covariates and the first sustained engagement period.

\subsubsection{Longitudinal Daily Summaries}
We restricted both the sleepiness and sleep-quality checkers to each participant's inclusive 6-day window. Within each day, multiple sleep-quality checkers were averaged to produce a single daily mean and a corresponding observation count. (For sleepiness, most often there are two observations per day. For sleep-quality checkers, most often there are one observations per day.) The two daily summaries were then full-joined by participant and date, yielding a complete longitudinal dataset over days 1--6.

\subsubsection{Step-Count Aggregation}
From the AppleHealthKit data, we retained only intervals overlapping the same 6-day window. Each record's duration and date span were computed, and total daily steps were summed by participant and start-time date. We then generated a full grid of all six days per participant and imputed zero for any missing day.

\subsubsection{Final Data Assembly}
A master file was created by left-joining the daily longitudinal summaries and baseline covariates, and then left-joining the daily-steps table.

\subsection{Covariate Screening and Selection}

\subsubsection{Initial Screening}

We began with all numeric variables in the assembled dataset (excluding identifiers, the binary exposure sleep-quality checkers, and the mediator sleepiness) and applied the following sequential filters:
\begin{enumerate}
  \item \textbf{Missingness filter}: retained only variables observed in $\geq 70\%$ of person-days (this only matters for baseline covariates).
  \item \textbf{Variance filter}:
  \begin{itemize}
    \item Continuous or categorical variables required standard deviation $\geq 0.01$.
    \item Binary variables required proportion of ones between 1\% and 99\%.
  \end{itemize}
  \item \textbf{Collinearity filter}: eliminated one of each pair with Pearson $|r| > 0.8$ \\(using \texttt{caret::findCorrelation}).
\end{enumerate}


\subsubsection{Dual LASSO}
We performed two separate LASSO regressions (via \texttt{glmnet}) with one-standard-error lambda ($\lambda_{\mathrm{1se}}$):
\begin{enumerate}
  \item \textbf{Outcome model}: Gaussian to predict continuous sleep-quality score.
  \item \textbf{Exposure model}: Binomial to predict the binary exposure.
\end{enumerate}
The union of nonzero-coefficient predictors from both models formed our initial multivariable covariate set.

\subsubsection{Discretization and Manual Recoding}

Continuous predictors were discretized by selecting cutoffs that maximized the absolute $t$-statistic for the final day outcome. A suite of categorical items (e.g., income, education, smoking status, work schedule, fatigue frequency, total steps) was recoded into binary or grouped levels (e.g., ``$\leq 6$ hours'' vs. ``$>6$ hours'', ``zero steps'' vs. ``1--4500 steps'' vs. ``$> 4500$ steps''). Below is the detailed list of covariates and discretization used.

\spacingset{1.5}
\scriptsize
\begin{longtable}{@{}p{0.18\textwidth}p{0.41\textwidth}p{0.41\textwidth}@{}}
\toprule
\textbf{Variable} & \textbf{Question} & \textbf{Discretization} \\
\midrule
\endhead

basic\_expenses &
\emph{How hard is it for you (and your family) to pay for very basics like food, rent, mortgage, or heating, etc.?} &
Unknown if missing or code 6; Not very hard or better if response $>$ 3; Somewhat hard or worse if $\leq$ 3. \\

daily\_smoking &
\emph{Do you now smoke every day, some days, or not at all?} &
Unknown if missing or code 4; Not at all if response $>$ 2; Some days or every day if $\leq$ 2. \\

education &
\emph{What is the highest degree or level of school you have completed?} &
Unknown if missing or code 7; College graduate or more if response $>$ 4; Some college or less if $\leq$ 4. \\

gender &
\emph{What is your sex?} &
Unknown if missing or code 3; Female if code 1; Male if code 2. \\

good\_life &
\emph{I have a good life.} &
Unknown if missing; Strongly agree if response $>$ 4; Agree or less if $\leq$ 4. \\

hispanic &
\emph{Are you of Hispanic, Latino or Spanish origin or ancestry? Please check all that apply.} &
Unknown if missing or codes 7/8; No if code 1; Yes if $>$ 1. \\

income &
\emph{Last year, what was your total household income from all sources, before taxes?} &
Unknown if missing or codes 8/9; 49k or less if response $\leq$ 2; 50k or more if $>$ 2. \\

race &
\emph{What is your race? Please check all that apply.} &
Unknown if missing or response $\geq$ 8; White if code 5; Non-white if codes 1-4,6-7. \\

smoking\_status &
\emph{How would you describe your smoking status?} &
Unknown if missing or codes 6/9/10; Never smoker if code 4; Current or former smoker if codes 1-3,5,7-8. \\

work\_schedule &
\emph{Thinking about the past 3 months, which of the following best describes your work schedule?} &
Unknown if missing or code 7; Regular day shifts if code 1; Other shifts if codes 2-6. \\

exercise &
\emph{In the past 7 days…how often did you have enough energy to exercise strenuously?} &
Unknown if missing; Sometimes or more if response $>$ 2; Rarely or never if $\leq$ 2. \\

fatigue\_limit &
\emph{In the past 7 days…how often did fatigue limit you at work (include work at home)?} &
Unknown if missing; Sometimes or more if response $>$ 2; Rarely or never if $\leq$ 2. \\

feel\_tired\_frequency &
\emph{In the past 7 days…how often did you feel tired?} &
Unknown if missing; Sometimes or more if response $>$ 2; Rarely or never if $\leq$ 2. \\

felt\_alert &
\emph{In the past 7 days…I felt alert when I woke up.} &
Unknown if missing; Somewhat or more if response $>$ 2; A little bit or less if $\leq$ 2. \\

had\_problem &
\emph{In the past 7 days…I had a problem with my sleep.} &
Unknown if missing; Quite a bit or more if response $>$ 3; Somewhat or less if $\leq$ 3. \\

hard\_times &
\emph{In the past 7 days…I had a hard time getting things done because I was sleepy.} &
Unknown if missing; Somewhat or more if response $>$ 2; A little bit or less if $\leq$ 2. \\

tired\_easily &
\emph{In the past 7 days…I got tired easily.} &
Unknown if missing; Almost never or more if response $>$ 1; Never if $\leq$ 1. \\

alarm\_dependency &
\emph{How much do you depend on an alarm clock?} &
Unknown if missing; Very much if response $>$ 3; Somewhat or less if $\leq$ 3. \\

sleep\_needed &
\emph{How many hours of sleep do you need to function your best during the day?} &
Unknown if missing; $>$ 6 hours if response $>$ 6; $\leq$ 6 hours if $\leq$ 6. \\

sleep\_time\_weekend &
\emph{How many hours of sleep not including naps do you get per night on workdays?} &
Unknown if missing; $>$ 6 hours if response $>$ 6; $\leq$ 6 hours if $\leq$ 6. \\

wake\_up\_choices &
\emph{Approximately what time would you wake up if you were free to plan your day?} &
Unknown if missing; 8--9AM or later if response $>$ 4; 7--8AM or earlier if $\leq$ 4. \\

medication\_by\_doctor &
\emph{Medication prescribed by a doctor. How frequently do you use this treatment to help you sleep?} &
Unknown if missing or codes 6/7; Never if code 5; Rarely or more if $\leq$ 4. \\

total\_steps &
\emph{(Constructed from HealthKit step counts)} &
zero (effectively missing) if steps = 0; $\leq$ 4500 steps recorded if $\leq$ 4500; $>$ 4500 recorded if $>$ 4500. \\

\bottomrule
\end{longtable}

\end{appendices}

\end{document}